\newtheorem{theorem}{Theorem}
\newcommand{\footremember}[2]{%
   \footnote{#2}
    \newcounter{#1}
    \setcounter{#1}{\value{footnote}}%
}
\newcommand{\footrecall}[1]{%
    \footnotemark[\value{#1}]%
} 
\def\BRI1#1#2{
\begin{scope}[shift={#1}, rotate={#2}]
	\filldraw[rounded corners=1pt, fill=black, draw=black] (7,6.5)--(7,5.5)--(6.8,5.35)--(6.9,5.2)--(7.05,5.35)--(7.2,5.2)--(7.3,5.35)--(7.1,5.5)--(7.1,6.5)--cycle;
\end{scope}
}
\def\BRI2#1#2{
\begin{scope}[shift={#1}, rotate={#2}]
	\filldraw[rounded corners=1pt, fill=black, draw=black] (7,6.0)--(7,5.5)--(6.8,5.35)--(6.9,5.2)--(7.05,5.35)--(7.2,5.2)--(7.3,5.35)--(7.1,5.5)--(7.1,6.0)--cycle;
\end{scope}
}
\def\BR#1{
\begin{scope}[shift={#1}]
	\filldraw[fill=red, draw=black] circle (1mm);
\end{scope}
}
\def\BKI1#1{
\begin{scope}[shift={#1}]
	\filldraw[fill=blue, draw=black] (0.1,0.1)--(-0.1,0.1)--(-0.1,-0.1)--(0.1,-0.1)--cycle;
\end{scope}
}
\def\BZR1#1{
\begin{scope}[shift={#1}]
	\filldraw[fill=yellow, draw=black] (0.15,0)--(0,0.15)--(-0.15,0)--(0,-0.15)--cycle;
\end{scope}
}
\def\BZRp#1{
\begin{scope}[shift={#1}]
	\filldraw[fill=yellow, draw=black] (0.15,0)--(0,0.15)--(-0.15,0)--(0,-0.15)--cycle;
	\fill[black] circle (0.5mm);
\end{scope}
}
\title{Mathematical Modelling and Analysis of the Brassinosteroid and Gibberellin Signalling Pathways and their Interactions}
\author{Henry R. Allen\footremember{Dnd}{Department of Mathematics, Fulton Building, University of Dundee, Dundee, United Kingdom, DD1 4HN}, Mariya Ptashnyk\footrecall{Dnd}\ ${\rm ^,}$\footremember{CrA}{Corresponding Author}}
\begin{document}

\maketitle
 
\begin{abstract}
The plant hormones brassinosteroid (BR) and gibberellin (GA) have important roles in a wide range of processes involved in plant growth and development. In this paper we derive and analyse  new mathematical models for the BR signalling pathway and for the crosstalk between the BR and GA signalling pathways. To analyse the effects of spatial heterogeneity of the signalling processes, along with spatially-homogeneous ODE models we derive  coupled PDE-ODE systems modelling   the temporal and spatial dynamics of molecules involved in the signalling pathways. The values of the parameters in the model for the BR signalling pathway  are  determined using  experimental data on the gene expression of BR biosynthetic enzymes. 
The stability of  steady state solutions of our mathematical model, shown for a wide range of parameters,  can be related to the BR homeostasis which is essential for proper function of plant cells.  
Solutions of the mathematical model for the BR signalling pathway can exhibit oscillatory behaviour only for relatively large values of parameters associated with transcription factor  brassinazole-resistant1's (BZR)  phosphorylation state, suggesting that this process may be important in governing the stability of signalling processes.
 Comparison between ODE and PDE-ODE models demonstrates  distinct spatial distribution  in the level of BR in the cell cytoplasm,  however the spatial heterogeneity  has significant effect on the dynamics of the averaged solutions only in the case when we have oscillations in solutions for at least one of the models, i.e.\  for possibly  biologically not relevant parameter values.  Our results for the  crosstalk model suggest that the interaction between transcription factors  BZR and DELLA exerts more influence on the dynamics of the signalling pathways than BZR-mediated biosynthesis of GA, suggesting that the interaction between transcription factors may constitute the principal mechanism of the crosstalk between  the BR and GA signalling pathways. In general, perturbations in  the GA signalling pathway have  larger effects  on the dynamics of components of the BR signalling pathway than perturbations in the BR signalling pathway on the dynamics of  the GA pathway.  The perturbation in the crosstalk mechanism also has a larger effect on the dynamics of the BR pathway than of the GA pathway. 
Large changes in the dynamics of the GA signalling processes can be observed only in the case where there are disturbances in both the BR signalling pathway and the crosstalk mechanism. Those results highlight the robustness  of  the GA signalling pathway.
\end{abstract}

{\small \textbf{Key words.}
 plant modelling;  hormone crosstalk  signalling;  homeostasis in plants;  stability  analysis
 
\qquad  \qquad \qquad and  Hopf bifurcation;  PDE-ODE systems 
 
\textbf{AMS subject classification.}  34Cxx,  35Q92,  65Nxx,  92Bxx,  92Cxx
}


\section{Introduction}
The sessile nature of plant life highlights the importance of efficient regulatory mechanisms allowing  plants to respond to environmental stimuli and to adapt to changing environmental conditions. Plants have developed a set of highly integrated signalling pathways. Plant hormones, e.g.~auxin, gibberellin, cytokinin, brassinosteroids, ethylene, are key signalling molecules and their activities depend on the cellular context and  interactions between them.\\
\\
The family of steroidal plant hormones brassinosteroids (BRs) is responsible for the regulation and control of a wide range of essential processes including responses to stresses \cite{Bajguz_A_2009,Gruszka_D_2013}, photomorphogenesis \cite{Belkhadir_Y_2014,Zhu_JY_2013}, root growth \cite{Mussig_C_2003}, and stomatal development \cite{Kim_TW_2012}. Over the last 47 years, the effects of brassinosteroids on plant cells and plants as a whole, as well as their signalling pathways have been studied in detail \cite{Clouse_S_2015}. In particular the signalling pathway of brassinolide (BL), the most biologically active of the discovered BRs, has been examined in great detail, and is now one of the most understood pathways in plant biolo\-gy \cite{Belkhadir_Y_2014,Clouse_S_2011,Kim_TW_2010,Li_Lei_2005,She_J_2011,Wang_W_2014,Wang_ZY_2001,Yang_CJ_2011,Zhu_JY_2013}. BR signalling functions by controlling the expression of various genes regulating developmental processes, of which 1000s have been identified \cite{Sunetal_Y_2010}, including several genes which regulate the production of proteins that act as enzymes during BR biosynthesis \cite{Tanaka_K_2005}. To ensure controlled growth, homeostasis of BRs in plant tissue is carefully maintained by negative feedback in the BR signalling pathway \cite{Tanaka_K_2005}. Absence of BRs and/or perturbed BR signalling have also been linked to many growth defects, including dwarfism and male sterility \cite{Clouse_S_1996,Clouse_S_1998}.\\
\\
Gibberellins (GAs) are another family of plant hormones involved in many developmental processes  in plants, including seed germination, stem elongation, leaf expansion, trichrome development, pollen maturation and the induction of flowering \cite{Achard_P_2008,Daviere_JM_2013}. There are over 130 categorized gibberellins, a few of which are bioactive molecules, the most common being GA\textsubscript{1}, GA\textsubscript{3} and GA\textsubscript{4} \cite{Yamaguchi_S_2008}.\\
\\  
Along with detailed information on individual plant hormone signal transduction pathways, their target genes, and their effects, it is now also known that interactions between various molecules involved in different signalling pathways have an effect on physiological phenomena in plants \cite{Bai_MY_2012,Belkhadir_Y_2014}. For example both auxin and brassinosteroids play a role in the patterning of vascular shoot bundles \cite{Ibanes_M_2009}, as well as exhibiting some cross-regulation of biosynthesis via the BRX gene \cite{Sankar_M_2011}. The GA signalling pathway exhibits a high level of interaction with the BR signalling pathway regulating  growth \cite{Clouse_S_1998,Ubeda-Tomas_S_2009} and responses to stresses \cite{Achard_P_2008,Bajguz_A_2009} among others. The interactions between different signalling pathways are called crosstalks, and the investigation of their mechanisms is key to better understand plant hormonal responses to external  stimuli   \cite{Albrecht_C_2012,Belkhadir_Y_2014,Gruszka_D_2013,Yang_CJ_2011}. Despite the need for better understanding of the mechanisms of crosstalk between hormone signalling pathways, it is hard to obtain  experimentally  quantitative data on the dynamics of all molecules involved in the signalling pathways and interactions between them. For example, it is very difficult to measure the dynamics of hormones such as BR in real time, in part due to their occurring naturally at extremely low levels. Therefore development of accurate mathematical mo\-dels of hormone activity can help to analyse and better understand interactions between signalling pathways and their impact on plant growth and development. 
Hence, the main aim of this paper is to derive and analyse novel mathematical models for the BR signalling pathway, and  the crosstalk between the BR and GA signalling pathways.\\
\\
Along with many modelling results for the gibberellin and auxin signalling pathways \cite{Gordon_S_2009,Liu_J_2010,Middleton_A_2010,Middleton_A_2012,Muraro_D_2011}, only a few models can be found for BR-related signalling processes. A logic model for the  activation states of the components of the BR and auxin signalling pathways and their interactions was derived in \cite{Sankar_M_2011}, and provided a qualitative description of the dynamics of the BR pathway.  In a simple model for BR-mediated root growth, proposed in \cite{VanEsse_G_2012}, the growth  dynamics is assumed to be dependent on the quantity of receptor-bound BL, which was considered to be constant.  A system of ordinary differential equations was considered to describe the dynamics of  BR-regulated transcription factor BES1 (BRI1-EMS SUPRESSOR 1) and its interactions with  R2R3-MYB transcription factor BRAVO, related to division  of plant stem cells \cite{Frigola_D_2017a, Frigola_D_2014}.  To our knowledge there are no previous results on  mathematical modelling of  the crosstalk between BR and GA signalling pathways. \\
\\
In our mathematical model for the BR signalling pathway, we consider the dynamics  of BR, free and bound receptors, inhibitors, and phosphorylated and dephosphorylated transcription factors, not considered in previous models. The spatially homogeneous  dynamics of the molecules involved in the signalling pathway that we consider are modelled by a system of six ordinary differential equations (ODEs). To analyse the  effect of spatial heterogeneity of the signalling processes on the dynamics of BR, we derive a coupled model composed of partial differential equations (PDEs) for BR, inhibitor, and phosphorylated transcription factor,  ODEs for receptors, defined on the cell membrane, and the ODE for dephosphorlyated transcription factors,  localised in the cell nucleus.  Along with spatial distribution of the concentration of BR in the cell cytoplasm,  we observe similar dynamics  for solutions  of the ODE and  averaged solutions of  the PDE-ODE models when those solutions  converge to a steady state as $t \to \infty$. However spatial heterogeneity has significant effect in the case when at least one of the two models has periodic solutions, which is determined for possibly biologically not relevant parameter values. \\
\\
To model the crosstalk between  BR and GA signalling pathways we first  rigorously derive a reduced model for  the GA signalling pathway from the full GA signalling pathway  model proposed in \cite{Middleton_A_2012}.  Then we couple the model for the BR signalling pathway with the reduced model for GA signalling pathway by considering three different interaction mechanisms between BR and GA pathways. By analysing the effect of different interaction mechanisms on the dynamics of molecules involved in the signalling processes, we determine  that one of these mechanisms has a more significant effect on the dynamics of the signalling pathways than the other mechanisms. Similar to the BR signalling pathway model, we also  consider  the influence of spatial heterogeneity of the signalling processes on the dynamics of solutions of the BR-GA crosstalk model. Using the mathe\-matical models developed here, we analyse how   interactions between the BR and GA signalling pathways  depend on the model parameters and the strength of interaction mechanisms. We observed that, in general, parameter changes in both pathways have a stronger effect on the components of the BR signalling pathway than on the components of the GA signalling pathway. Our results also suggest that the interaction between transcription factors exerts more influence on the dynamics of the signalling pathways  than BR signalling-mediated GA biosynthesis. Further, our results suggest that perturbations in the GA signalling pathway have larger effects on the dynamics of components of the BR signalling pathway  than the  perturbations in the BR signalling pathway on the dynamics of  components of the GA signalling pathway, apart from in the case when  we have disturbances in both the BR signalling pathway and the crosstalk mechanism. 
\\
\\
The structure of this paper is as follows. In Section~\ref{sec:bio} a biological overview of the BR and GA signalling pathways, and their interactions is given. In Section~\ref{sec:BR_derive} we derive the mathematical model for the BR signalling pathway and estimate  the values for the model parameters using experimental data from \cite{Tanaka_K_2005}. In Section~\ref{sec:BR_analysis} we perform qualitative analysis of the model for the BR signalling pathway, examining how  the behaviour of solutions of the mathematical model depends on the values of the model parameters. We also define the set of parameters for which the system of ODEs  has stable stationary solutions and the set of parameters for which it undergoes Hopf bifurcation. In Section~\ref{sec:BR_space} we extend our model for the BR signalling pathway to examine the effects of spatial heterogeneity in the signalling processes. In Section~\ref{sec:Crosstalk} we consider the reduction of the mathematical model for the GA signalling pathway, proposed in \cite{Middleton_A_2012},  derive a new model for the crosstalk between BR and GA signalling pathways, and analyse the influence of different interaction mechanisms and changes in the dynamics of one of the signalling pathways on the qualitative and quantitative behaviours of the coupled system. We also analyse the influence of spatial heterogeneity of the signalling processes on the interactions between the BR and GA signalling pathways. We summarise and  discuss our  results  in Section \ref{sec:Discussion}.

\section{Biological Background}\label{sec:bio}
The mathematical modelling and analysis of the BR signalling pathway and of the interactions between the BR and GA signalling pathways is the main aim of this paper. In this section we present an overview of the BR and GA signalling pathways, and the interactions between them.
\subsection{The BR Signalling Pathway}
The signalling process starts at the cell plasma-membrane with the perception of BR by the receptor BRASSINOSTEROID INSENSITIVE1 (BRI1) \cite{Li_Jianming_1997}. Upon BR binding to BRI1, two main events then take place, association of BRI1 to a co-receptor, BRI1-ASSOCIATED RECEPTOR KINASE1 (BAK1), and dissociation of the inhibitor protein BRI1 KINASE INHIBITOR1 (BKI1).  This triggers a transphosphorylation cascade between BRI1 and BAK1, leading further to phosphorylation of BRASSINOSTEROID-SIGNALLING KINASE1 (BSK1), another membrane-bound kinase. Next, BSK1 phosphorylates the protein phosphatase BRI1-SUPPRESSOR1 (BSU1), which dephosphorylates a protein kinase BRASSINOSTEROID INSENSITIVE2 (BIN2), eventually leading to its degradation \cite{Ryu_H_2010}. In the absence of BR, phosphorylated BIN2 has a role in phosphorylating the two transcription factors, BRASSINAZOLE RESISTANT1 (BZR1) and BRI1-EMS-SUPPRESSOR1 (BES1) \cite{Li_Lei_2005}, also known as BZR2 (for the purposes of this paper it is unnecessary to distinguish between the two, so we refer to them jointly as BZR). When phosphorylated, BZR is less stable and thus more unlikely to activate or repress any of the 1000s of genes associated with BR signalling \cite{Ryu_H_2007}, it is also thought that association of phosphorylated BZR to a 14-3-3 protein inhibits its entry to the nucleus. PROTEIN PHOSPHOTASE 2A (PP2A) is responsible for the de-phosphorylation of BZR, which allows its entry into the nucleus and then its activation of BR responsive genes.\\
\\
BZR functions as a repressor of certain genes associated with the biosynthesis of BR, notably for example CONSTITUTIVE PHOTOMORPHOGENESIS AND DWARFISM~(CPD), DWARF4~(DWF4), ROTUNDIFOLIA3~(ROT3) and BRASSINOSTEROID-6-OXIDASE 1~(BR6ox1) \cite{Tanaka_K_2005}. That is, active, de-phosphorylated BZR inhibits production of BR. So, high levels of BR cause low levels of phosphorylated BZR, leading to inhibition of BR biosynthesis and decreasing levels of BR. Conversely, low levels of BR lead to high levels of phosphorylated BZR and activation of BR biosynthesis, increasing the levels of BR. This completes the negative feedback loop of the BR signalling pathway.

\subsection{The GA Signalling Pathway}
Gibberellin Signalling is achieved by enhancing the degradation of DELLA proteins, which influence the expression of GA-responsive genes \cite{Achard_P_2009}. GA molecules are perceived by the GA receptor, GIBBERELLIN INSENSITIVE DWARF1 (GID1), a nuclear-localised protein \cite{Ueguchi-Tanaka_M_2005}. Analysis of GID1's structure revealed that it has a GA-binding pocket, with a flexible extension adjacent \cite{Shimada_A_2008}. When GA binds to GID1, this extension undergoes conformational change, and co\-vers the GA-binding pocket. When closed, the upper surface of this lid binds to DELLA proteins to form the GA.GID1.DELLA complex. The formation of the GA.GID1.DELLA  complex enhances the degradation of DELLA proteins by mediating proteasome-dependent destabilization of DELLA proteins.\\
\\
The GA signalling pathway exhibits negative feedback due to the influence of DELLAs on the expression of several genes which code components of the signalling pathway \cite{Yamaguchi_S_2008}. First, DELLA activates the GID1-encoding gene, leading to an increase in the translation of the GID1 protein. This means that in the absence of DELLAs, GID1 concentration also decreases which will slow down the proteasome-induced DELLA degradation, and that an abundance of DELLA leads to the production of more GID1 and enhances the DELLA degradation. Next, DELLA activates the transcription of genes encoding the enzymes GA 20-oxidase (GA20ox) and GA 3-oxidase (GA3ox). These enzymes catalyse seve\-ral reaction steps in the GA biosynthesis pathway, meaning an abundance of DELLA increases both GA and GID1, leading to degradation of DELLA. Lastly, DELLA represses its own gene transcription.

\subsection{Crosstalk between the BR and GA Signalling Pathways}
The interaction of BRs and GAs has been receiving much attention, due to their shared nature as critical plant growth regulators, combined with the fact that they share many overlapping functions such as regulation of cell elongation \cite{Catterou_M_2001,Ubeda-Tomas_S_2009} and plant responses to abiotic stress \cite{Ahammed_G_2015,Colebrook_E_2014}. However despite the inte\-rest, the exact mechanisms of these interactions have remained largely unclear, save from the fact that they control expression of several genes \cite{Bouquin_T_2001}. There has been much evidence that the signalling processes of BR and GA converge at the level of BZR and DELLA interaction. Direct crosstalk in this fashion was shown in \cite{Li_QF_2012}, where it was shown that  overexpression of DELLA proteins reduced both the abundance and transcriptional activity of BZR. This was found to be due to the formation of a complex of DELLA and BZR, which removed BZR's transcriptional ability. There is also evidence of BRs regulating the biosynthesis of GAs, the so called ``GA Synthesis'' model of BR-GA crosstalk. Two main proposals have been made for the existence of this type of interaction. The effects of BR mutants on GA synthesis were examined in  \cite{Tong_H_2014}, and it was concluded that BZR enhances GA synthesis by activating synthesis of the GA3ox enzyme. In contrast to this the findings in \cite{Unterholzner_S_2015} describe a much larger role for BZR in regulating GA synthesis. They provide evidence for a model where BZR activates the synthesis of the GA20ox enzyme, in addition to the effects described in \cite{Tong_H_2014}. Thus BZR would influence the biosynthesis of GA in exactly the same manner as  DELLAs for other interactions, however  the significance of this mechanism of crosstalk is not yet established \cite{Ross_J_2016,Tong_H_2016,Unterholzner_S_2016}.

\section{Derivation of a Mathematical Model for the BR Signalling Pathway}\label{sec:BR_derive}
In this section we derive a mathematical model of the BR signalling pathway. In order to build a simple, yet sufficiently accurate and efficient model incorporating BR biosynthesis negative feedback, we first construct a reduced reaction schematic that describes the pathway mechanism. This reduction is achieved via a simplification of two principal parts of the signalling pathway: the complex BR biosynthesis network, and the cytoplasm localized phosphorylation cascade. Hence, we build a model focussing on three key components, hormone (BR), inhibitor (BKI1) and transcription factor (BZR), Fig.~\ref{fig:BR_Reaction}.\\
\\
In the mathematical model we consider the binding of free BR molecules to the BRI1 receptors leading to the dissociation of BKI1. This is modelled as an almost instantaneous reaction, with BR + BRI1.BKI1 interacting and resulting into BR.BRI1 + BKI1. In order to model the effects of the signalling cascade induced by the membrane-bound receptors and subsequent effects on the phosphorylation state of BZR, we assume that the effects of active receptors in triggering the cascade may be approximated by the free BKI1 that is released upon this activation. We further assume that the free BKI1  catalyses  dephosphorylation of BZR-p and  destabilises the BIN2 proteins, thus reducing the phosphorylation of BZR.

\begin{figure}[!ht]\centering
\begin{tikzpicture}
\node (a) at (0,0){BR + BRI1.BKI1};
\node (b) at (4,0){BR.BRI1 + BKI1};
\node (c) at (7,0){BZR};
\node (d) at (9,0){BZR-p};
\node (e) at (-1.1,-1){$\emptyset$};

\draw[transform canvas={yshift=0.4ex}, black,-left to] (a) -- node[above]{$\beta_{b}$} (b);
\draw[transform canvas={yshift=-0.4ex}, black,-left to] (b) -- node[below]{$\beta_{k}$} (a);
\draw[transform canvas={yshift=0.4ex}, black,-left to] (c) -- node[above]{$\rho_{z}$} (d);
\draw[transform canvas={yshift=-0.4ex}, black,-left to] (d) -- node[below]{$\delta_{z}$} (c);
\draw[black,-|] (5,0.3)to[out=45,in=135] (7.8,0.5);
\draw[black,->] (5,-0.3)to[out=315,in=225] (7.8,-0.5);
\draw[black,-|] (c)to[out=150,in=30]node[above]{$\alpha_{b}$} (-0.9,0.2);
\draw[black,->] (-1.1,-0.3) --node[left]{$\mu_{b}$} (e);
\end{tikzpicture}
\caption{Reaction schematic of the reduced BR signalling pathway.}
\label{fig:BR_Reaction}
\end{figure}
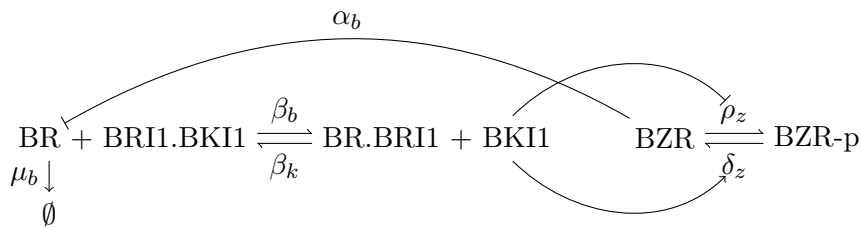

\noindent We denote by $b$ the concentration of hormone BR, by $k$ the concentration of inhibitor BKI1, by $r_{k}$ the concentration of receptor-inhibitor complex BRI1.BKI1, by $r_{b}$ the concentration of receptor-hormone complex BR.BRI1, by $z$ the concentration of (dephosphorylated) transcription factor BZR, and by $z_{p}$ the concentration of (phosphorylated) transcription factor BZR-p. Then assuming spatial homogeneity of the signalling processes,  the interactions between $b$, $k$, $r_{k}$, $r_{b}$, $z$, and $z_{p}$ are described by the system of six ordinary differential equations
\begin{equation}\label{BRode}
\begin{aligned}
\frac{db}{dt} & = \beta_{k}r_{b}k - \beta_{b}r_{k}b + \frac{\alpha_{b}}{1+(\theta_{b}z)^{h_{b}}} - \mu_{b}b,
\\
\frac{dk}{dt} & = \beta_{b}r_{k}b - \beta_{k}r_{b}k,
\\
\frac{dr_{k}}{dt} & = \beta_{k}r_{b}k - \beta_{b}r_{k}b,
\\
\frac{dr_{b}}{dt} & = \beta_{b}r_{k}b - \beta_{k}r_{b}k,
\\
\frac{dz}{dt} & = \delta_{z}z_{p}k - \rho_{z}\frac{z}{1+(\theta_{z}k)^{h_{z}}},
\\
\frac{dz_{p}}{dt} & = -\delta_{z}z_{p}k + \rho_{z}\frac{z}{1+(\theta_{z}k)^{h_{z}}}.
\end{aligned}
\end{equation}
 Here $\beta_{b}$ is the binding rate of $b$ to $r_{k}$, and $\beta_{k}$ is the binding rate of $k$ to $r_{b}$. We model this as only two reactions by assuming that when either BR or BKI1 are bound to BRI1 either dissociates sufficiently fast that the levels of the BR.BRI1.BKI1 remains roughly zero. We assume the reaction to occur in some finite closed volume, so the loss of BR is only described by the degradation coefficient $\mu_{b}$.\\
\\
The phosphorylation state of BZR is modelled as being dependent on the levels of free BKI1. We justify this by noting that upon signalling activation, the receptor phosphorylates BSU1 which governs the phosphorylation of BZR via BIN2. Thus since BKI1 is also released upon BR binding, we may model these effects by assuming that free BKI1 activates (or catalyses) the dephosphorylation of BZR-p at rate $\delta_{z}$. In BKI1's absence BZR is phosphorylated at rate $\rho_{z}$, and when BKI1 is present it inhibits the phosphorylation of BZR such that when $k=1/\theta_{z}$, the rate of phosphorylation is halved.\\
\\
Finally we model BR biosynthesis as being directly inhibited by BZR. BZR represses the expression of several genes encoding enzymes, namely CPD, DWF4, ROT3 and BR6ox1, that are required for the conversion of many of the precursors involved in BR biosynthesis. Hence we use a Hill function with exponent $h_{b}>1$ to model the cumulative inhibitory effect of BZR on these genes. We estimate the parameter $h_{b}$ by considering the detailed BR biosynthesis pathway(s) presented in \cite{Chung_Y_2013}. The BR-mediated enzymes that are involved in the biosynthesis are CPD, DWF4, ROT3, and BR6ox1 which mediate four, five, six, and five steps in the biosynthetic pathway respectively.  Since BR biosynthetic enzymes act multiplicatively at different steps of the reaction network, the expressions modelling their actions can be approximated by the product of these expressions, which would mean that the exponents of these functions would be summed,  thus we consider $h_{b} = 20$.\\
\\
\noindent  The model equations \eqref{BRode} imply that the total concentrations of BKI1, BRI1 and BZR are conserved, thus we consider $k + r_{k} = K_{tot}$, $r_{b} + r_{k} = R_{tot}$, and $z + z_{p} = Z_{tot}$, and derive a reduced model:
\begin{equation}\label{eq:redBR}
\begin{aligned}
\frac{db}{dt} & = \beta_{k}(R_{tot}-K_{tot}+k)k - \beta_{b}(K_{tot}-k)b + \frac{\alpha_{b}}{1+(\theta_{b}z)^{h_{b}}} - \mu_{b}b,
\\
\frac{dk}{dt} & = \beta_{b}(K_{tot}-k)b - \beta_{k}(R_{tot}-K_{tot}+k)k,
\\
\frac{dz}{dt} & = \delta_{z}(Z_{tot}-z)k - \rho_{z}\frac{z}{1+(\theta_{z}k)^{h_{z}}}.
\end{aligned}
\end{equation}
 Various values for $R_{tot}$ were reported in \cite{VanEsse_G_2011}, and we chose the value for WT seedling roots.
We further assume that $K_{tot} = R_{tot}$ in order that the receptor should have the ability to be completely inactive, but not be saturated by BKI1.  We also use physiological values reported in the literature in order to write $\beta_{k}$ and $\mu_{b}$ in terms of other parameters, for full calculations see \ref{app:constraints}.
As such we are left with $8$ parameters for which we have no direct estimate, namely $\beta_{b}$, $\alpha_{b}$, $\theta_{b}$, $\delta_{z}$, $Z_{tot}$, $\rho_{z}$, $\theta_{z}$ and $h_{z}$. These parameters were estimated indirectly by validating the numerical solutions of the mathematical model (\ref{eq:redBR}) against experimental results, using numerical optimisation algorithms.\\
\\
\noindent  By deriving the steady state concentration of BR, denoted $[BR]_{0}$, from the level of endogenous 24-epiBL reported in \cite{Wang_L_2014}, we were able to write the rate of BR degradation $\mu_{b}$ in terms of $\alpha_{b}$, $\theta_{b}$, $Z_{tot}$, $\delta_{z}$, $\rho_{z}$, $\theta_{z}$, $h_{z}$ and $h_{b}$ as follows
\begin{equation}\label{eq:mub}
\mu_{b} = \frac{\alpha_{b}}{[BR]_{0}\left(1 + \left(\theta_{b}\frac{Z_{tot}\delta_{z}[BKI1]_{0}(1+(\theta_{z}[BKI1]_{0})^{h_{z}})}{\rho_{z} + \delta_{z}[BKI1]_{0}(1+(\theta_{z}[BKI1]_{0})^{h_{z}})}\right)^{h_{b}}\right)}. 
\end{equation}
  We can  write $\beta_{k}$ in terms of $\beta_{b}$ and other known parameters in two ways. First, using $[BR]_{0}$ in conjunction with the dissociation constant of BR.BRI1, denoted $K_{d}$, reported in \cite{Wang_ZY_2001} we can estimate the steady state concentration of BKI1, denoted $[BKI1]_{0}$, and  write $\beta_{k}$ in terms of $[BR]_{0}$, $[BKI1]_{0}$, $K_{tot}$, $R_{tot}$, and $\beta_{b}$ as follows 
\begin{equation}\label{eq:betak1}
\beta_{k} = \frac{(K_{tot}-[BKI1]_{0})[BR]_{0}}{(R_{tot}-K_{tot}+[BKI1]_{0})[BKI1]_{0}}\beta_{b}, 
\end{equation}
 by assuming that the dissociation constant for BR.BRI1  depends on the steady state concentrations of BR, BR.BKI1 and BR.BRI1. For the second expression we considered the dissociation of both BR and BKI1 from BR.BRI1.BKI1, using the value for the dissociation constant of BKI1, denoted $K_{m}$, reported in \cite{Wang_J_2014}, as well as $K_{d}$, to directly write $\beta_{k}$ in terms of $\beta_{b}$ as 
\begin{equation}\label{eq:betak2}
\beta_{k} = \frac{K_{d}}{K_{m}}\beta_{b}. 
\end{equation}
 These two constraints  \eqref{eq:betak1} and \eqref{eq:betak2}  on $\beta_k$  correspond to two different mechanisms for the interactions between BR, BRI1 and BKI1. In \eqref{eq:betak1} binding of BR to BRI1.BKI1 causes instantaneous dissociation of BKI1 and formation of BR.BRI1, likewise binding of BKI1 to BR.BRI1 causes instantaneous dissociation of BR and formation of BRI1.BKI1. In \eqref{eq:betak2} binding of BR to BRI1.BKI1 or binding of BKI1 to BR.BRI1 leads to the formation of BR.BRI1.BKI1, which may then dissociate into either BKI1 and BR.BRI1, or BR and BRI1.BKI1.  Model \eqref{eq:redBR} was fitted to experimental data  using both conditions \eqref{eq:betak1} and \eqref{eq:betak2} in order to compare their effects, see Figs.~\ref{fig:BRode} and \ref{fig:BRode2}. 
\\
\\
\noindent The experimental data from \cite{Tanaka_K_2005}, used to determine model parameters, give the relative BR biosynthetic gene expression of CPD, DWF4, ROT3, and BR6ox1, and were measured by RT-PCR analysis, then converted to give  relative values with the initial values equal to one. Values were measured for three independent experiments (three replicates), and the data presented by points and (where available) error bars correspond to the mean and standard error respectively. Gene expression was measured in both Wild-Type~(WT) and bri1-401 mutant (where perception of BR by BRI1 is inhibited) plants, and this was accounted for in our parameter estimation by assuming that the parameter $\beta_{b}$ was greater for the WT than for the mutant. $\beta_{k}$ was allowed to vary freely for the mutant case since constraints \eqref{eq:betak1} and \eqref{eq:betak2} are not definitely valid in this case.  Both of these phenotypes were grown under control conditions as well as two other cases: one where plants were grown in a medium containing $5~\mu$M of Brassinazole (BRZ), a BR-specific biosynthesis inhibitor, and one grown in a medium containing 
$0.1~\mu$M of Brassinolide~(BL) having first been grown in the medium containing $5~\mu$M BRZ for two days. Data comparing the control case with the case of growth in the $5 ~\mu$M BRZ medium were recorded for five days. Data comparing further growth after two days of the BRZ medium case with the case of addition of $0.1~\mu$M BL to the medium were recorded for  further $24$ h, and as such for these cases the initial conditions were taken to be the values of the numerical solution to the model for the $5~\mu$M BRZ medium at time $t = 2$ days. For the control conditions we made no amendments to the model, for the case of plants growing in the BRZ-medium we imposed bounds upon the parameters such that $\alpha_{b}$ should be smaller in this case since addition of BRZ reduces the biosynthesis of BR. In order to examine the case where BL was added to the growth medium, an extra term governing influx of exogenous BL and efflux of endogenous BR was added to the equation describing BR dynamics
\begin{equation*}
\frac{db}{dt} = \beta_{k}(R_{tot}-K_{tot}+k)k - \beta_{b}(K_{tot}-k)b + \frac{\alpha_{b}}{1+(\theta_{b}z)^{h_{b}}} - \mu_{b}b + \phi_{b}(\omega_{b}-b),
\end{equation*}
where $\phi_{b}$ is the relative permeability of the cell membrane to BR and was one of the optimised parameters for the relevant cases, and $\omega_{b}$ is assumed to be $0.1~\mu$M in accordance with the experimental procedure.\\
\\
\noindent Optimisation was achieved by comparing the biosynthetic expression defined by the numerical solutions of model (\ref{eq:redBR}), the term $1/(1+(\theta_{b}z)^{h_{b}})$, with expe\-rimental data presented in \cite{Tanaka_K_2005}.  In order to compare  experimental data with the output from our model we first normalised the simulation data by their initial values such that they took values comparable to the experimental data. We then took the mean of the four gene data sets, weighted by the number of times the respective proteins appear in the biosynthetic pathway. The optimisation was carried out in Python using the \verb|curve_fit| function in the \verb|SciPy| module \cite{Scipy}. \verb|curve_fit| applies nonlinear least squares minimization using the trust region reflective algorithm as default, with a default tolerance of $10^{-8}$.  The model was fitted to the data set for each case sequentially, starting with the WT under control conditions since this data set was the largest. The parameters $\mu_{b}$ and $\beta_{k}$ were replaced by the expressions \eqref{eq:mub} and \eqref{eq:betak1} or \eqref{eq:betak2}, respectively, for the WT data under control conditions since this is the only case where such parameter constraints are definitely valid. The parameters generated from the fitting for WT under control conditions were then used as the initial guesses for all other cases, where $\mu_{b}$ and $\beta_{k}$ were also allowed to be fitted. Parameters that were not expected to vary under the different growth conditions were allowed very small variations to account for error in the first fitting, whereas parameters that were expected to vary had much wider bounds. 
\\
\\
Numerical simulations of model \eqref{eq:redBR} using the optimised parameters, given in Tables~\ref{tab:BRode}  and \ref{tab:BRode2} for $\beta_k$ determined by \eqref{eq:betak1} and \eqref{eq:betak2} respectively, are plotted against the experimental data in Fig.~\ref{fig:BRode} and  \ref{fig:BRode2} and show good agreement with  experimental data, having $R^{2}$ values of 0.89 and 0.92 respectively.  
 
\begin{table}[h!]\centering
\resizebox{\linewidth}{!}{
\begin{tabular}{cccc|cccc}
\toprule
Constant & Value & Units & Source & Constant & Value & Units & Source\\
\midrule
$\alpha_{b}$ & 0.27 & $\mu$M min\textsuperscript{-1} & fit & $\theta_{z}$ & 3.95 & $\mu$M\textsuperscript{-1} & fit\\
$\beta_{b}$ & 8.33 & $\mu$M\textsuperscript{-1} min\textsuperscript{-1} & fit & $h_{b}$ & 20 & & \cite{Chung_Y_2013}\\
$\beta_{k}$ & 2.73 & $\mu$M\textsuperscript{-1} min\textsuperscript{-1} & \eqref{eq:betak1} & $h_{z}$ & 6 & & fit\\
$\rho_{z}$ & $1.33\times 10^{-4}$ & min\textsuperscript{-1} & fit & $K_{tot}$ & $6.2 \times 10^{-2}$ & $\mu$M & fit\\
$\delta_{z}$ & $1.02 \times 10^{-3}$ & $\mu$M\textsuperscript{-1} min\textsuperscript{-1} & fit & $R_{tot}$ & $6.2 \times 10^{-2}$ & $\mu$M & \cite{VanEsse_G_2011}\\
$\mu_{b}$ & 3.58 & min\textsuperscript{-1} & \eqref{eq:mub} & $Z_{tot}$ & 2.65 & $\mu$M & fit\\
$\theta_{b}$ & 1.96 & $\mu$M\textsuperscript{-1} & fit & $\phi_{b}$ & 7.06 & min\textsuperscript{-1} & fit\\
$\omega_{b}$ & 0.1 & $\mu$M & \cite{Tanaka_K_2005}\\
\bottomrule
\end{tabular}
}
\caption{The model parameters associated with the BR signalling pathway model~\eqref{eq:redBR}   for WT  when fitted using expression~\eqref{eq:betak1} for $\beta_{k}$,  together with their sources (Parameter set~1). }
\label{tab:BRode}
\end{table}

\begin{table}[h!]\centering
\resizebox{\linewidth}{!}{
\begin{tabular}{cccc|cccc}
\toprule
Constant & Value & Units & Source & Constant & Value & Units & Source\\
\midrule
$\alpha_{b}$ & 0.27 & $\mu$M min\textsuperscript{-1} & fit & $\theta_{z}$ & 3.95 & $\mu$M\textsuperscript{-1} & fit\\
$\beta_{b}$ & 8.33 & $\mu$M\textsuperscript{-1} min\textsuperscript{-1} & fit & $h_{b}$ & 20 & & \cite{Chung_Y_2013}\\
$\beta_{k}$ & $2.18\times 10^{-2}$ & $\mu$M\textsuperscript{-1} min\textsuperscript{-1} & \eqref{eq:betak2} & $h_{z}$ & 6 & & fit\\
$\rho_{z}$ & $4.26\times 10^{-4}$ & min\textsuperscript{-1} & fit & $K_{tot}$ & $6.2 \times 10^{-2}$ & $\mu$M & fit\\
$\delta_{z}$ & $1.75 \times 10^{-3}$ & $\mu$M\textsuperscript{-1} min\textsuperscript{-1} & fit & $R_{tot}$ & $6.2 \times 10^{-2}$ & $\mu$M & \cite{VanEsse_G_2011}\\
$\mu_{b}$ & 3.68 & min\textsuperscript{-1} & \eqref{eq:mub} & $Z_{tot}$ & 2.68 & $\mu$M & fit\\
$\theta_{b}$ & 2.0 & $\mu$M\textsuperscript{-1} & fit & $\phi_{b}$ & 160.93 & min\textsuperscript{-1} & fit\\
$\omega_{b}$ & 0.1 & $\mu$M & \cite{Tanaka_K_2005}\\
\bottomrule
\end{tabular}
}
\caption{The model parameters associated with the BR signalling pathway model~\eqref{eq:redBR}  for WT when fitted using expression~\eqref{eq:betak2} for $\beta_{k}$, together with their sources (Parameter set~2).}
\label{tab:BRode2}
\end{table}

\begin{figure}[!ht]\centering
\includegraphics[width=0.98\linewidth]{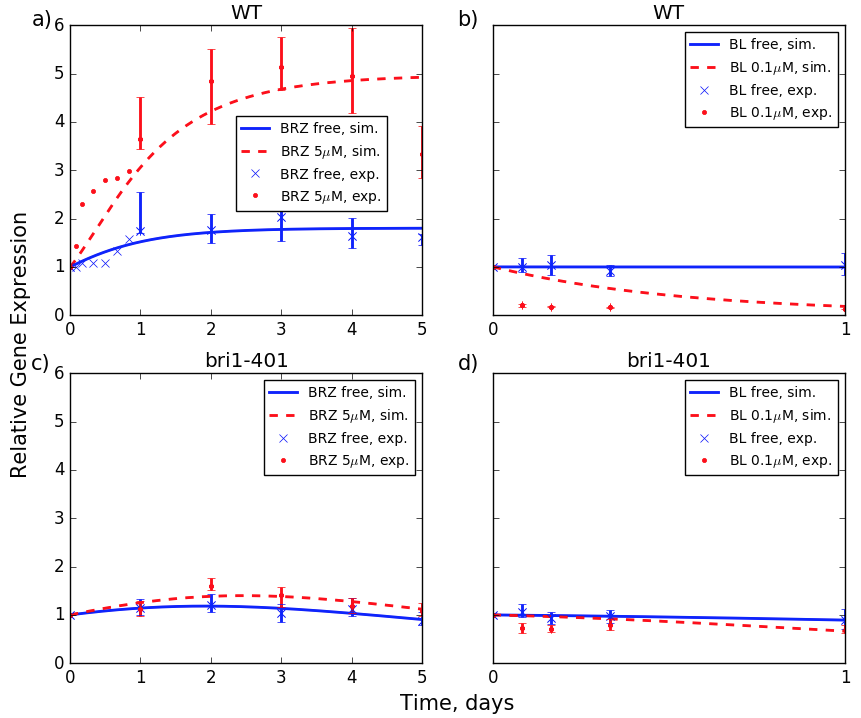}
\caption{BR biosynthetic gene expression calculated from the numerical solutions of model  \eqref{eq:redBR}, plotted against experimental data from \cite{Tanaka_K_2005}. For the WT plants grown under control conditions the parameters given in Table \ref{tab:BRode} were used, parameter sets for other cases can be found in Table \ref{tab:All_Params}.}
\label{fig:BRode}
\end{figure}

\begin{figure}[!ht]\centering
\includegraphics[width=0.98\linewidth]{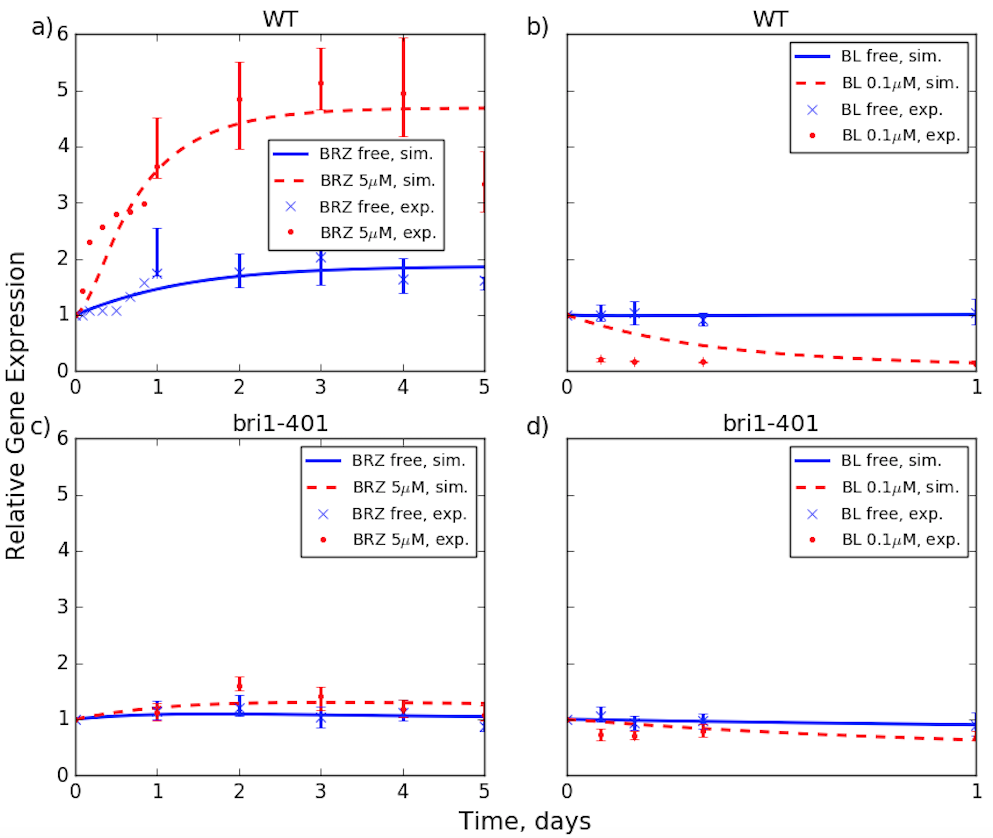}
\caption{ BR biosynthetic gene expression calculated from the numerical solutions of model  \eqref{eq:redBR}, plotted against experimental data from \cite{Tanaka_K_2005}. For the WT plants grown under control conditions the parameters given in Table \ref{tab:BRode2} were used.}
\label{fig:BRode2}
\end{figure}

\section{Qualitative Analysis of the Mathematical Model for the BR Signalling Pathway}\label{sec:BR_analysis}
In this section we consider well-posedness and qualitative analysis of model~\eqref{eq:redBR}.
We start by non-dimensionalising our model, transforming the variables as $t = \frac{1}{\mu_{b}}\bar t$, $b = \frac{\alpha_{b}}{\mu_{b}}\bar{b}$, $k = K_{tot}\bar{k}$, $z = Z_{tot}\bar{z}$, and introducing the dimensionless parameters

\begin{center}
\begin{tabular}{llll}
$\bar{\beta}_{k} = \dfrac{\beta_{k}K_{tot}^2}{\alpha_{b}}$, & $\bar{\beta}_{b} = \dfrac{\beta_{b}K_{tot}}{\mu_{b}}$, & $\kappa = \dfrac{R_{tot}}{K_{tot}}$, & $\bar{\theta}_{b} = \theta_{b}Z_{tot}$,\\
$\epsilon = \dfrac{\alpha_{b}}{K_{tot}\mu_{b}}$, & $\bar{\delta}_{z} = \dfrac{\delta_{z}K_{tot}}{\mu_{b}}$, & $\bar{\rho}_{z} = \dfrac{\rho_{z}}{\mu_{b}}$, & $\bar{\theta}_{z} = \theta_{z}K_{tot}$,
\end{tabular}
\end{center}

\noindent which yields the system (neglecting $\overline{\phantom{bar}}$s)
\begin{equation}\label{eq:BRode}
\begin{aligned}
\frac{db}{d t} & = f_{1}(b,k,z) = \beta_{k}(\kappa -1+k)k - \beta_{b}(1-k)b + \frac{1}{1+(\theta_{b}z)^{h_{b}}} - b,
\\
\frac{dk}{d t} & = f_{2}(b,k,z) = \frac{1}{\epsilon}\left(\beta_{b}(1-k)b - \beta_{k}(\kappa -1+k)k\right),
\\
\frac{dz}{d t} & = f_{3}(b,k,z) = \delta_{z}(1-z)k - \rho_{z}\frac{z}{1+(\theta_{z}k)^{h_{z}}}.
\end{aligned}
\end{equation}

\noindent For simplicity of presentation we denote by $P\subset [1,\infty)\times\mathbb{R}_{+}^{7}\times \mathbb{N}^{2}$ the  parameter space  for  system \eqref{eq:BRode}, where for each $p \in P$, $p = (\kappa, \beta_{k}, \beta_{b}, \theta_{b}, \epsilon, \delta_{z}, \rho_{z}, \theta_{z}, h_{b}, h_{z})$. We assume that $\kappa$ has a minimum value of 1 since $\kappa <1$ would imply saturation of receptor by inhibitor (i.e.~$K_{tot}>R_{tot}$), leading to BR signalling being permanently switched on.

\begin{theorem}\label{thm:ss}
The system (\ref{eq:BRode}) has a unique, global solution $(b,k,z) \in C^{1}([0,\infty))$ for any initial value $(b^{0},k^{0},z^{0}) \in [0,1+\beta_{k}\kappa]\times[0,1]^{2}$, and $(b(t),k(t),z(t)) \in [0,1+\beta_{k}\kappa]\times[0,1]^{2}$ for all $t \in [0,\infty)$ and any $p \in P$.
\end{theorem}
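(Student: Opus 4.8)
The plan is to split Theorem~\ref{thm:ss} into three parts---local existence and uniqueness, positive invariance of the box $B := [0,1+\beta_{k}\kappa]\times[0,1]^{2}$, and global existence---with the last following from the first two. For \textbf{local existence and uniqueness}, I would observe that the right-hand side $f=(f_{1},f_{2},f_{3})$ of \eqref{eq:BRode} is smooth on an open neighbourhood of $B$: the only non-polynomial terms are the Hill functions, whose denominators $1+(\theta_{b}z)^{h_{b}}$ and $1+(\theta_{z}k)^{h_{z}}$ are bounded below by $1$ for $z,k\ge 0$ and hence stay bounded away from zero on a sufficiently small neighbourhood of $B$. Thus $f$ is locally Lipschitz there, and the Picard--Lindel\"of theorem yields a unique maximal $C^{1}$ solution through any initial datum in $B$.

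For \textbf{positive invariance}, the strategy is the standard flow-invariance (Nagumo) argument: I would verify that at every boundary point of $B$ the vector field lies in the inward tangent cone, i.e.\ that on each face the relevant component of $f$ does not point outward. The four faces $k\in\{0,1\}$ and $z\in\{0,1\}$ are immediate, since $f_{2}(b,0,z)=\beta_{b}b/\epsilon\ge 0$ and $f_{2}(b,1,z)=-\beta_{k}\kappa/\epsilon\le 0$ using $b\ge 0$, while $f_{3}(b,k,0)=\delta_{z}k\ge 0$ and $f_{3}(b,k,1)=-\rho_{z}/(1+(\theta_{z}k)^{h_{z}})\le 0$ using $k\ge 0$. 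The two remaining faces carry the real content. On $b=0$ one has $f_{1}(0,k,z)=\beta_{k}(\kappa-1+k)k+1/(1+(\theta_{b}z)^{h_{b}})$, which is strictly positive precisely because $\kappa\ge 1$ forces $\kappa-1+k\ge 0$ for $k\in[0,1]$; this is exactly where the hypothesis $\kappa\ge 1$ is used.

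The \textbf{main obstacle} is the upper face $b=1+\beta_{k}\kappa$, where I must show $f_{1}\le 0$. Here I would bound term by term over $k,z\in[0,1]$: $\beta_{k}(\kappa-1+k)k\le\beta_{k}\kappa$, the binding-loss term $-\beta_{b}(1-k)(1+\beta_{k}\kappa)\le 0$, and the Hill term $1/(1+(\theta_{b}z)^{h_{b}})\le 1$, so that $f_{1}\le\beta_{k}\kappa+1-(1+\beta_{k}\kappa)=0$. This computation reveals that the specific value $1+\beta_{k}\kappa$ is exactly the threshold making the upper $b$-face invariant, which explains its appearance in the statement. A secondary subtlety is that several of these inequalities are only weak, so tangency can occur (e.g.\ at edges and corners); to turn the face conditions into genuine invariance I would either invoke Nagumo's invariance theorem directly---whose subtangential condition I have verified at \emph{every} boundary point, since each coordinate inequality holds on the corresponding closed face---or run a first-exit-time argument on a slightly enlarged box with strict inequalities and pass to the limit.

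Finally, \textbf{global existence} follows at once: the solution stays in the compact set $B$, so it cannot blow up in finite time, and the maximal interval of existence is therefore all of $[0,\infty)$ by the standard continuation (blow-up) alternative. Since every estimate above used only $\kappa\ge 1$ and the nonnegativity of the parameters, the conclusion holds uniformly for all $p\in P$.
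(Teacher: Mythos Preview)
Your proposal is correct and follows essentially the same strategy as the paper: local existence via Picard--Lindel\"of, positive invariance of the box $M=[0,1+\beta_{k}\kappa]\times[0,1]^{2}$ by checking that the vector field points weakly inward on each of the six faces, and global existence from the resulting a~priori bound. Your term-by-term estimate on the face $b=1+\beta_{k}\kappa$ is in fact slightly cleaner than the paper's endpoint-and-monotonicity check, but the content is identical.
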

\begin{proof}
Define $u = (b,k,z)^{T}$ and $\textbf{f} = (f_{1},f_{2},f_{3})^{T}$, and hence $\frac{du}{dt} = \textbf{f}$. Since $\textbf{f}$ is locally Lipschitz-continuous, the Picard-Lindel\"{o}f theorem ensures local existence of a unique solution of (\ref{eq:BRode}), see e.g.\ \cite{Amann_H_1990}. To obtain global existence and uniqueness we prove boundedness of solutions by demonstrating the existence of a positive-invariant region for system (\ref{eq:BRode}), i.e.\ showing that for a solution $u$ of (\ref{eq:BRode}) starting in $M = [0,1+\beta_{k}\kappa]\times[0,1]\times[0,1]$ it will always be contained within this region. To show that a region $M$ is positive invariant under the flow of system (\ref{eq:BRode}), we show that  $\textbf{f}(u)\cdot\textbf{n}(u)\geq 0\> \forall u\in \partial M$, see e.g.\   \cite{Amann_H_1990}, where $\textbf{n}$ is the inward normal vector on $\partial M$,  see \ref{Proof_Theo_1} for more details.  This implies uniform boundedness of solutions of \eqref{eq:BRode} with initial values in $M$, and continuous differentiability of $f$ ensures global existence and uniqueness.
\end{proof}

\begin{theorem}\label{thm:steady}
For any parameter set $p\in P$, there exists a unique steady state solution $(b^{*},k^{*},z^{*})\in M$ of  system (\ref{eq:BRode}).
\end{theorem}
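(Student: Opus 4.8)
The plan is to reduce the three steady-state equations $f_1 = f_2 = f_3 = 0$ to a single scalar equation in the variable $k$, and then to establish existence and uniqueness of its root by a monotonicity and intermediate-value argument. First I would exploit the algebraic structure of the equations. Setting $f_2 = 0$ gives the balance $\beta_b(1-k)b = \beta_k(\kappa - 1 + k)k$; substituting this directly into $f_1 = 0$ cancels the first two terms and leaves the clean relation $b = 1/(1+(\theta_b z)^{h_b})$. This immediately shows that any steady state has $b^* \in (0,1]$, so that the component constraint $b^* \in [0, 1+\beta_k\kappa]$ is automatic once $k^*$ and $z^*$ are shown to lie in $[0,1]$.

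Next I would parametrise the other two variables by $k$. From $f_2 = 0$ I would solve for $b$ as a function $b = g(k) := \beta_k k(\kappa-1+k)/(\beta_b(1-k))$ on $[0,1)$; a short computation shows $g(0)=0$, $g(k) \to \infty$ as $k \to 1^-$, and (using $\kappa \geq 1$) that the numerator of $g'$ equals $\kappa - (k-1)^2 > 0$ on $(0,1)$, so $g$ is a strictly increasing bijection of $[0,1)$ onto $[0,\infty)$. From $f_3 = 0$ I would solve explicitly for $z$, writing $A(k) := \delta_z k(1 + (\theta_z k)^{h_z})$ so that $z = h(k) := A(k)/(\rho_z + A(k))$; since $A$ is increasing with $A(0)=0$, the map $h$ is strictly increasing with $h(0)=0$ and takes values in $[0,1)$.

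Combining these with $b = 1/(1+(\theta_b z)^{h_b})$ I would define $\Phi(k) := g(k) - \bigl(1 + (\theta_b h(k))^{h_b}\bigr)^{-1}$ on $[0,1)$ and seek its zeros. Existence follows from the sign change $\Phi(0) = 0 - 1 = -1 < 0$ and $\Phi(k) \to +\infty$ as $k \to 1^-$ (the subtracted term stays in $(0,1]$), together with continuity and the intermediate-value theorem. Uniqueness follows because $g$ is strictly increasing while $h$ is strictly increasing and $t \mapsto (1+(\theta_b t)^{h_b})^{-1}$ is strictly decreasing; hence the subtracted term is decreasing in $k$ and $\Phi$ is strictly increasing, so it has exactly one root $k^* \in (0,1)$. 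The boundary cases $k=0$ and $k=1$ can be excluded directly from $f_2=0$, confirming $k^* \in (0,1)$. Setting $z^* = h(k^*) \in (0,1)$ and $b^* = g(k^*) = 1/(1+(\theta_b z^*)^{h_b}) \in (0,1]$ then yields the unique steady state, which lies in $M$.

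The main obstacle I anticipate is establishing the monotonicity of $g$ cleanly, since it is the one place where the hypothesis $\kappa \geq 1$ is essential: the identity $N'(k)D(k) - N(k)D'(k) = \kappa - (k-1)^2$, with $N(k) = k(\kappa-1+k)$ and $D(k) = 1-k$, is what makes the argument work, and without $\kappa \geq 1$ positivity could fail near $k=0$. Everything else is routine monotonicity bookkeeping, but care is needed to verify that the explicit solve of $f_3=0$ for $z$ is legitimate, i.e.\ that the coefficient $\rho_z + A(k)$ never vanishes, so that $h$ is genuinely single-valued on the relevant range.
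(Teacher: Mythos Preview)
Your proposal is correct and follows essentially the same route as the paper: reduce the steady-state system to a single scalar equation in $k$ via the same algebraic eliminations, then obtain existence from a sign change at the endpoints and uniqueness from strict monotonicity. The only cosmetic difference is that the paper clears denominators---multiplying your $\Phi$ through by $\beta_b(1-k)\bigl(1+(\theta_b h(k))^{h_b}\bigr)$ to get a function defined on the closed interval $[0,1]$---and then checks positivity of the derivative directly, whereas your decomposition into a strictly increasing piece minus a decreasing piece reaches the same conclusion a little more transparently.
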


\begin{proof}
Considering equations for a steady state solution $(b^{*},k^{*},z^{*})$ of \eqref{eq:BRode} and employing simple algebraic manipulation, $k^{*}$ is defined as a root of the following non-linear function
\begin{equation*}
 \begin{aligned}
  g(k^{*}) : &= \beta_{k}\left(\kappa -1+k^{*}\right)k^{*}\left(1 + \left(\dfrac{\theta_{b}\delta_{z}k^{*}\left(1+(\theta_{z}k^{*})^{h_{z}}\right)}{\rho_{z} + \delta_{z}k^{*}\left(1+(\theta_{z}k^{*})^{h_{z}}\right)}\right)^{h_{b}}\right)
  \\
  &\>\>\>\>\> -\> \beta_{b}\left(1-k^{*}\right),
 \end{aligned}
\end{equation*}
 and $b^\ast$ and $z^\ast$ are determined as functions of $k^\ast$, for more details see \ref{Calc_Theo_2}. We may immediately see, since $g(0) = -\beta_{b}$ and $g(1) = \beta_{k}\kappa(1+(\theta_{b}\delta_{z}(1+\theta_{z}^{h_{z}})/(\rho_{z}+\delta_{z}(1+\theta_{z}^{h_{z}}))^{h_{b}})$, that $g$ must contain at least one root in $[0, 1]$ for any $p\in P$, and hence system (\ref{eq:BRode}) must contain at least one steady state solution in $M$. In order to find the number of  roots of $g(k^\ast)=0$ in $[0,1]$, consider the derivative of $g$ which is positive for all $p\in P$ and $k^{*}\in [0,1]$, see \ref{Calc_Theo_2} for the formula for $g^\prime (k^{*})$. Thus $g$ is monotonically increasing. Strict monotonicity of $g$ coupled with existence of at least one root in $[0,1]$ implies that $g$ has a unique root in $[0,1]$, and hence (\ref{eq:BRode}) has a unique steady state in $M$.
\end{proof}

\subsection{Linearised Stability  and Bifurcation Analysis}\label{subsec:bifurcation}

\noindent  To study the qualitative behaviour of solutions of mathematical model for BR signalling pathway, we performed linearised stability analysis for system \eqref{eq:BRode} and analyse the impact of variations in  values of model parameters  on the behaviour of solutions of \eqref{eq:BRode}. For the parameters obtained via validation of mathematical model by experimental data, see Tables \ref{tab:BRode} and \ref{tab:BRode2},  steady state solutions of  \eqref{eq:BRode} are  linearly stable, with eigenvalues $(-3.8764, -0.1508, -0.0009)$ and $(-3.6863, -0.1138, -0.0006)$ respectively. Further, stability of steady state solutions is maintained under moderate variations of all parameters, suggesting 
that the BR homeostasis is ensured in normally functioning  plant cells.  Large variations in $\delta_{z}$, $\rho_{z}$, and $\theta_{z}$ however cause qualitative changes in the behaviour of  solutions of model \eqref{eq:redBR} and can induce  oscillatory behaviour, but only in the case when all other parameters are as in Table~\ref{tab:BRode} and not as in Table~\ref{tab:BRode2}.    We consider $\delta_{z}$ and $\rho_{z}$ as bifurcation parameters because these parameters directly correspond to processes in the BR signalling pathway, whereas $1/(1+(\theta_{z} k)^{h_z})$ is only an approximation for the dynamics of the cytoplasmic phosphorylation cascade. 
Biologically,  increase of $\delta_{z}$ could potentially correspond to faster phosphorylation of BSU1 by BAK1, or decrease of $\delta_{z}$ corresponding to reduced action of PP2A in dephosphorylating BZR. Further, decrease of $\rho_{z}$ could correspond to BIN2-deficient or insensitive mutants e.g.~bes1-D, and increase of $\rho_{z}$ could correspond to BIN2-overexpressing mutants, e.g.~bin2.  
In the bifurcation  analysis  of system~\eqref{eq:redBR} we considered increased value for the dimensional parameter $\theta_z$ compared to the standard value, Table~\ref{tab:BRode} (i.e.\ $\theta_{z}=41.2\ \mu{\rm M}^{-1}$), which was essential to  determine the region for parameters $\delta_z$ and $\rho_z$ where system~\eqref{eq:BRode} undergoes bifurcation.  For the value of $\theta_{z}=3.95\ \mu{\rm M}^{-1}$  obtained through fitting model solutions to experimental data,  the steady state solution  of \eqref{eq:BRode} is linearly stable for a wide range of values of $\delta_{z}$ and $\rho_{z}$, i.e.~$\delta_{z}, \rho_z \in (0,50)$.

\begin{theorem}
As $\delta_{z}$ and $\rho_{z}$ are continuously varied, system (\ref{eq:BRode}) undergoes a Hopf bifurcation.
\end{theorem}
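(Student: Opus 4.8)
The plan is to reduce the statement to a Routh--Hurwitz computation for the $3\times 3$ Jacobian of \eqref{eq:BRode} evaluated at the unique steady state $(b^{*},k^{*},z^{*})$ whose existence and uniqueness are guaranteed by Theorem~\ref{thm:steady}. First I would linearise \eqref{eq:BRode} about $(b^{*},k^{*},z^{*})$ and record the Jacobian $J(p)$, noting that it has the sparse sign structure
\[
J = \begin{pmatrix} J_{11} & J_{12} & J_{13} \\ J_{21} & J_{22} & 0 \\ 0 & J_{32} & J_{33} \end{pmatrix},
\]
since $\partial f_{2}/\partial z = 0$ and $\partial f_{3}/\partial b = 0$. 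Expanding $\det(J-\lambda I)$ along this structure yields the characteristic polynomial in monic form $\lambda^{3}+a_{1}\lambda^{2}+a_{2}\lambda+a_{3}$, with $a_{1}=-\operatorname{tr}J$, $a_{2}$ the sum of the three principal $2\times 2$ minors, and $a_{3}=-\det J$; each $a_{i}=a_{i}(p)$ is an explicit rational-plus-Hill-function expression in the steady-state values and the parameters, in particular in $\delta_{z}$ and $\rho_{z}$.

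The key reduction is that, for a monic cubic with real coefficients, a conjugate pair of eigenvalues lies exactly on the imaginary axis precisely when the Routh--Hurwitz quantity $\Delta(p):=a_{1}a_{2}-a_{3}$ vanishes while $a_{1}>0$ and $a_{2}>0$; at such a point the polynomial factors as $(\lambda+a_{1})(\lambda^{2}+a_{2})$, giving eigenvalues $\pm i\sqrt{a_{2}}$ together with the real eigenvalue $-a_{1}<0$. Thus I would (i) exhibit a point $(\delta_{z}^{0},\rho_{z}^{0})$ at which $\Delta=0$ and verify $a_{1},a_{2}>0$ there, and (ii) check transversality. For (i), since for the parameters of Table~\ref{tab:BRode} (with the enlarged $\theta_{z}$ discussed above) the steady state is linearly stable, the Routh--Hurwitz conditions hold and $\Delta>0$, whereas evaluating $\Delta$ along a curve in the $(\delta_{z},\rho_{z})$-plane one finds parameter values where $\Delta<0$. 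Because $(b^{*},k^{*},z^{*})$ depends smoothly on $p$ --- by the implicit function theorem applied to $g(k^{*})=0$, whose derivative is strictly positive by Theorem~\ref{thm:steady} --- the coefficients $a_{i}(p)$, and hence $\Delta(p)$, are continuous, so an intermediate-value argument produces a zero of $\Delta$ at which $a_{1},a_{2}>0$.

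For the transversality condition (ii), I would take $\mu\in\{\delta_{z},\rho_{z}\}$ as the bifurcation parameter and differentiate the characteristic equation $P(\lambda,\mu)=0$ implicitly along the branch $\lambda(\mu)$ passing through $i\sqrt{a_{2}}$. Writing $\lambda'=-P_{\mu}/P_{\lambda}$ and evaluating at $\lambda=i\omega_{0}$, $\omega_{0}^{2}=a_{2}$, a short computation gives the clean identity
\[
\left.\frac{d}{d\mu}\operatorname{Re}\lambda\right|_{\mu_{0}} = -\frac{\Delta'(\mu_{0})}{2\,(a_{2}+a_{1}^{2})},
\]
so that the eigenvalue pair crosses the imaginary axis with nonzero speed exactly when $\Delta'(\mu_{0})\neq 0$. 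It then remains only to confirm that $\Delta$ changes sign transversally (rather than merely touching zero) at the crossing, which follows from the sign change of $\Delta$ established in (i); combined with $\omega_{0}\neq 0$ (guaranteed by $a_{2}>0$) and the simplicity of the imaginary pair, this yields the Hopf bifurcation. The main obstacle I anticipate is that $(b^{*},k^{*},z^{*})$, and therefore every $a_{i}$, depends on $(\delta_{z},\rho_{z})$ only implicitly through $g$, so computing $\Delta$ and $\Delta'$ in closed form is unwieldy; the location of the crossing and the verification of the sign conditions $a_{1},a_{2}>0$ and $\Delta'\neq 0$ are therefore most realistically carried out by a numerically assisted evaluation at the located bifurcation point, with the continuity and smoothness argument supplying existence of the crossing.
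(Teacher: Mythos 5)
Your proposal is correct and reaches the same conclusion as the paper, but it organises the verification differently, and the difference is worth noting. The paper works directly with the eigenvalues of the Jacobian \eqref{eq:Jacobian}: it observes that the cubic characteristic polynomial has positive coefficients (so zero is never an eigenvalue and any real eigenvalue is negative), and then numerically sweeps the $(\delta_{z},\rho_{z})$-plane, locating a closed curve $\mathcal{L}$ on which a complex-conjugate pair is purely imaginary, checking numerically that the imaginary parts are nonzero there and that $\tfrac{d}{d\delta_{z}}\operatorname{Re}\lambda_{2,3}$, $\tfrac{d}{d\rho_{z}}\operatorname{Re}\lambda_{2,3}$ are nonzero except at the four extremal points of $\mathcal{L}$. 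You instead reduce everything to the single scalar Routh--Hurwitz quantity $\Delta=a_{1}a_{2}-a_{3}$: the factorisation $(\lambda+a_{1})(\lambda^{2}+a_{2})$ at $\Delta=0$ cleanly packages both the location of the crossing and the nondegeneracy $\omega_{0}=\sqrt{a_{2}}\neq 0$ (the positivity of all coefficients, which the paper establishes explicitly in its appendix, gives you $a_{1},a_{2}>0$ for free), and your identity $\tfrac{d}{d\mu}\operatorname{Re}\lambda=-\Delta'(\mu_{0})/(2(a_{2}+a_{1}^{2}))$ is correct and converts the transversality check into the sign change of a single real function, which your intermediate-value argument already supplies. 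The smooth dependence of $(b^{*},k^{*},z^{*})$ on the parameters via the implicit function theorem and $g'>0$ from Theorem~\ref{thm:steady} is used only implicitly in the paper, so making it explicit is a genuine improvement. What each approach buys: the paper's eigenvalue sweep directly produces the full bifurcation curve $\mathcal{L}$ and the sign of $\operatorname{Re}\lambda_{2,3}$ inside and outside it (hence the oscillatory region $\mathcal{D}$), which your scalar criterion would recover only by mapping the zero set of $\Delta$; your formulation localises the numerics to evaluating $\Delta$ and $\Delta'$ and is easier to certify rigorously at a single crossing point. Both arguments ultimately rest on numerically assisted evaluation, as you acknowledge, so neither is fully analytic; there is no gap in your proposal.
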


\begin{proof}
We performed linearised stability analysis to determine the  parameter subspace for which the stationary solution of  \eqref{eq:BRode} is linearly stable, as well as the range of parameters for which we have periodic solutions for the model   \eqref{eq:BRode}.  \\ 
\\
\noindent  Using the Jacobian   \eqref{eq:Jacobian}  of system (\ref{eq:BRode}),  evaluated at the steady-state $(b^\ast, k^\ast, z^\ast)$, we calculate the characteristic equation for the system to be a cubic polynomial of the form $\lambda^{3} + a_{2}\lambda^{2} + a_{1}\lambda + a_{0} = 0$, where $a_{2}$, $a_{1}$ and $a_{0}$ are all positive, real constants (see \ref{Char_eq}). Since the characteristic equation is a cubic polynomial we obtain that there are only 2 possible sets of eigenvalues, either that they are all real or that there is one real eigenvalue $\lambda_{1}$ and two complex conjugate eigenvalues $\lambda_{2}$ and $\lambda_{3}$. Further, since all coefficients have the same sign any real eigenvalue must be negative, specifically zero cannot be an eigenvalue of $J$ for any parameters of system (\ref{eq:BRode}) in $P$. Together these two facts tell us that in the case where the eigenvalues are all real, or that the complex conjugate eigenvalues have negative real part the steady state is stable, and that there is a possible bifurcation point when the two complex conjugate eigenvalues cross the imaginary axis, corresponding to a Hopf bifurcation.\\
\\
We showed numerically that for a closed loop $\mathcal L$ in $(\delta_{z}, \rho_{z})$ the system has complex conjugate eigenvalues with zero real parts,   Fig.~\ref{fig:Crit1},  and in the region $\mathcal{D}$ enclosed by the loop $\mathcal L$ the complex conjugate eigenvalues have positive real part, whereas for $(\delta_{z}, \rho_{z})\in (0,50)^{2}\setminus\overline{\mathcal{D}}$ the real part of the complex conjugate eigenvalues is negative. Hence the points of the loop $\mathcal{L}$ correspond to the bifurcation points where the stability of stationary solutions of \eqref{eq:BRode} changes. We also showed that at such points the eigenvalues have non-zero imaginary parts and hence do not pass through the origin,  Fig.~\ref{fig:Critical_Values}a),  which supports the proof of the fact that zero cannot be an eigenvalue of $J$, presented above. 
For this we designed a scheme in \verb|MATLAB| to calculate the eigenvalues of the Jacobian $J$ in  (\ref{eq:Jacobian}), for values of $(\delta_{z},\rho_{z})\in(0,50)^{2}$,  with  dimensional parameter values  $\theta_z = 41.2~\mu{\rm M}^{-1}$ and all other parameters as in Table~\ref{tab:BRode}. For each $\delta_{z}\in(0,50)$, we determined the values of $\rho_{z}\in(0,50)$ for which $J$ has a pair of non-zero purely imaginary eigenvalues. The derivatives of the real part of the eigenvalues  w.r.t.~both $\delta_{z}$ and $\rho_{z}$  were also calculated numerically,  Fig.~\ref{fig:Critical_Values}b). The values  of $\frac{d}{d\delta_{z}}Re(\lambda_{2,3})$ and $\frac{d}{d\rho_{z}}Re(\lambda_{2,3})$ at the critical points are non-zero, apart from exactly four points on the curve in Fig \ref{fig:Critical_Values}b), where  the derivative w.r.t.~one of the parameters will be zero. Those four points  correspond to  the points where $\delta_{z}$ or $\rho_{z}$ are at their extreme values, i.e.~when $\delta_{z}$ takes an extreme value we have that $\frac{d}{d\rho_{z}}Re(\lambda_{2,3})=0$, and when $\rho_{z}$ takes an extreme value that $\frac{d}{d\delta_{z}}Re(\lambda_{2,3})=0$. When $\delta_{z}$ is fixed at an extreme point, varying $\rho_{z}$ will not cause the point to enter the region bounded by the curve and will not correspond to a bifurcation point w.r.t.~$\rho_{z}$, similar for $\rho_{z}$ fixed at an extreme point. Hence zero derivative at this point does not break the transversality condition.   Thus, in conjunction with Theorems~\ref{thm:ss} and \ref{thm:steady} we have shown that system (\ref{eq:BRode}) satisfies all conditions for the existence of a local Hopf bifurcation \cite{Hassard_B_1981}.

\begin{figure}[!ht]\centering
	\includegraphics[width=.58\linewidth]{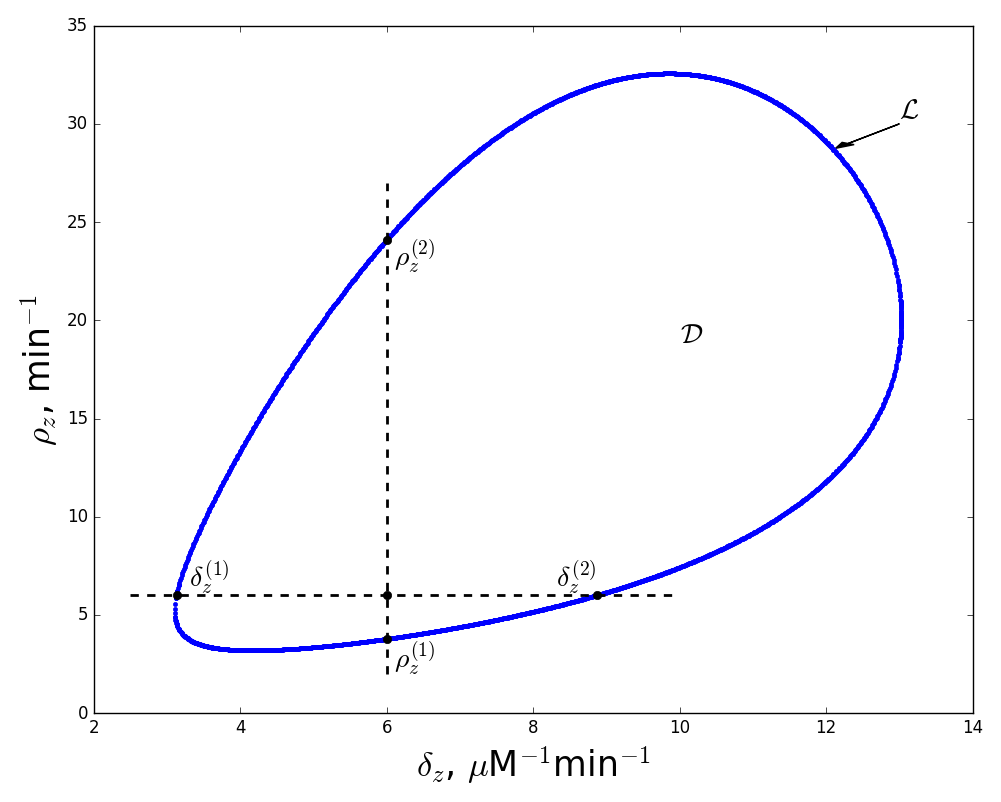}
	\caption{ Critical dimensional values $\delta_{z}$ and $\rho_{z}$, at which the complex conjugate pair of eigenvalues $\lambda_{2}$, $\lambda_{3}$ of (\ref{eq:Jacobian}) are purely imaginary, form a closed curve $\mathcal L$. Note the third eigenvalue $\lambda_{1}$ is always negative. The system (\ref{eq:BRode}) exhibits oscillatory behaviour when values of $(\delta_{z}, \rho_{z})$ are located within the region $\mathcal D$ bounded by curve. For $\delta_{z}=6$ fixed, the eigenvalues cross the imaginary axis at $\rho_{z}^{(1)}=3.78$ and $\rho_{z}^{(2)}=24.1$ with values of $\frac{d}{d\rho_{z}}Re(\lambda_{2,3})$ of $1.78\times 10^{-3}$ and $-4.53\times 10^{-4}$ respectively. For $\rho_{z}=6$ fixed, the eigenvalues cross the imaginary axis at $\delta_{z}^{(1)}=3.14$ and $\delta_{z}^{(2)}=8.87$ with values of $\frac{d}{d\delta_{z}}Re(\lambda_{2,3})$ of $2.73\times 10^{-3}$ and $-1.33\times 10^{-3}$ respectively.}
	\label{fig:Crit1}
\end{figure}

\begin{figure}[!ht]\centering
	\begin{subfigure}{.48\linewidth}
		\includegraphics[width=\linewidth]{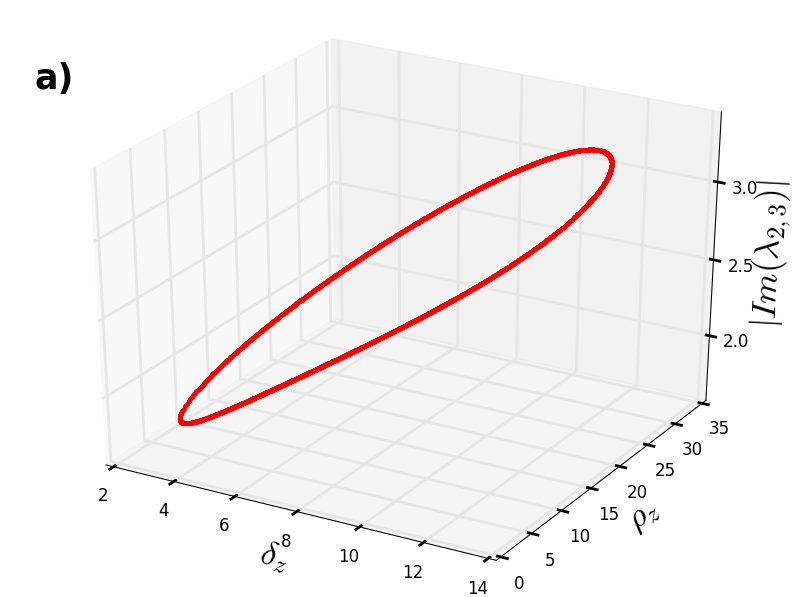}
		\label{fig:Crit2}
	\end{subfigure}
	\begin{subfigure}{.48\linewidth}
		\includegraphics[width=\linewidth]{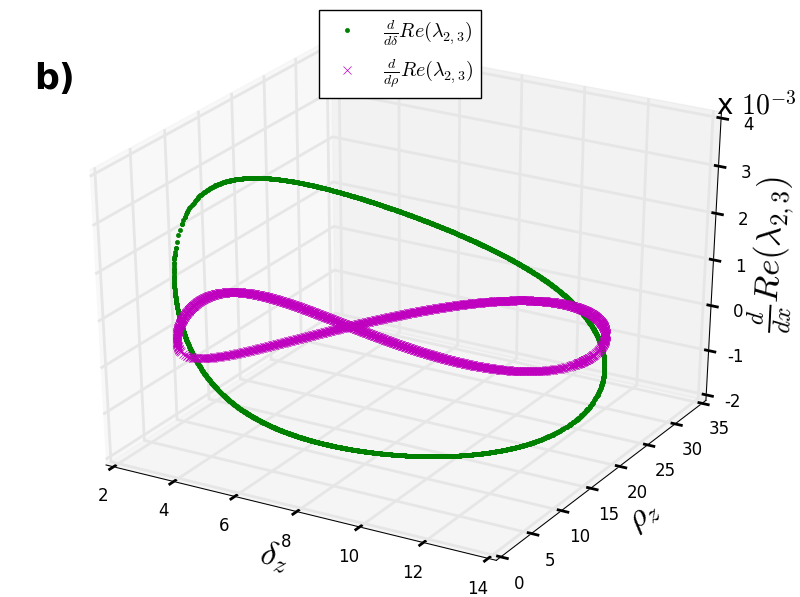}
		\label{fig:Crit3}
	\end{subfigure}
	\caption{Numerical verification of the existence of a Hopf bifurcation.  \textbf{a)} at the critical values, $Im(\lambda_{2,3})$ are non-zero. \textbf{b)} derivatives of $Re(\lambda_{2,3})$ w.r.t.~both $\delta_{z}$ and $\rho_{z}$ (green dots and magenta crosses respectively). Values pass through zero at the points of the curve in a) where $\delta_{z}$ or $\rho_{z}$ take their extrema. At such points there is a bifurcation only in one of the parameters, the parameter for which $Re(\lambda_{2,3})$ has non-zero derivative.}
	\label{fig:Critical_Values}
\end{figure}

\noindent Therefore, for all parameter sets such that $(\delta_{z}, \rho_{z})\in(0,50)^{2}\setminus\overline{\mathcal{D}}$,  $\theta_z=41.2 \mu{\rm M}^{-1}$,  and all other dimensional parameters are defined as in Table~\ref{tab:BRode}, we have that the steady state solution of system \eqref{eq:BRode} is linearly stable. At the points $(\delta_{z},\rho_{z})\in\mathcal{L}$ the system \eqref{eq:BRode} undergoes a Hopf bifurcation, and for $(\delta_{z}, \rho_{z})\in\mathcal{D}$ we have periodic solutions for the BR signalling pathway model. 
\end{proof}

\section{Spatially Heterogeneous Model  for the BR Signalling Pathway}\label{sec:BR_space}
The BR signalling pathway is a process which has distinct functions at different spatial locations in the cell, Fig.~\ref{fig:BR_cell}. Thus it is important to extend the ODE model (\ref{BRode}) and analyse the dependence of the dynamics of the pathway components on the spatial distribution.

\begin{figure}[!ht]\centering
\begin{tikzpicture}[scale=0.65]
	\filldraw[rounded corners=15pt, fill=white, draw=green!30!black, line width=1mm] rectangle (16,6);
	\filldraw[rounded corners=5pt, fill=gray, draw=gray] (4.9,3.9) circle (1);
	\filldraw[fill=brown!75!white, draw=brown!20!black, line width=0.3mm] (4,3) circle (2);
	\BRI2{(10.3,11.5)}{270}
	\BRI2{(10.3,12)}{270}
	\BRI2{(10.3,9)}{270}
	\BRI2{(10.3,8.5)}{270}
	\BRI2{(2, 0.3)}{0}
	\BRI2{(-3, 0.3)}{0}
	\BRI2{(10,5.7)}{180}
	\BRI2{(17.6, 5.7)}{180}
	\BRI2{(5.7,-3)}{90}
	\BRI2{(5.7,-4.3)}{90}
	
	\BKI1{(15.4,4.95)}
	\BKI1{(15.4,4.45)}
	\BR{(15.4,1.95)}
	\BR{(15.4,1.45)}
	\BZR1{(5,3)}
	\BZR1{(4.5,3.5)}
	\BZR1{(4.7,3.1)}
	\BZR1{(4.9,3.3)}
	\BZRp{(3,3)}
	\BZRp{(2.8,3.2)}
	\BZRp{(3.2,3.3)}
	\BZRp{(3.2,2.8)}
	\BR{(14.7,5.1)}
	\BR{(14.6,4.7)}
	\BR{(14.8,4.4)}
	\BR{(14.9,4.8)}
	\BKI1{(15,1.9)}
	\BKI1{(14.7,1.8)}
	\BKI1{(15,1.3)}
	\BKI1{(14.9,1.6)}
	\BKI1{(12.1,1.9)}
	\BKI1{(11.8,1.8)}
	\BKI1{(12.3,1.3)}
	\BKI1{(12,1.5)}
	\BKI1{(3.9,1.5)}
	\BKI1{(4.2,1.6)}
	\BKI1{(4.1,1.2)}
	\BZR1{(7.1,3)}
	\BZR1{(6.9,2.8)}
	\BZR1{(6.8,3.2)}
	\BZRp{(9.1,3.2)}
	\BZRp{(8.9,3)}
	\BZRp{(9.2,2.8)}
	\BR{(6.2,4.7)}
	\BR{(6.3,4.9)}
	\BR{(6.4,4.6)}
	\node (a) at(7.8,5.5){$\emptyset$};
	
	\node (b) at (14.7,5.5){BR};
	\node (c) at (14.8,0.8){BKI1};
	\node (d) at (17,5){BRI1};
	\node (e) at (7.1,2.3){BZR};
	\node (f) at (9.2, 2.3){BZR-p};
	\node (g) at (4,4.5){Nucleus};
	\node (h) at (5.2,5.2){ER};
	
	\draw[transform canvas={yshift=0.4ex}, black,-left to] (3.4,3.2)--(4.5,3.2);
	\draw[transform canvas={yshift=-0.4ex}, black,-left to] (4.5,3.2)--(3.4,3.2);
	\draw[transform canvas={xshift=0.4ex}, black,-left to] (15.2,4.2)--(15.2,2.2);
	\draw[transform canvas={xshift=-0.4ex}, black,-left to] (15.2,2.2)--(15.2,4.2);
	\draw [->,
	line join=round,
	decorate, decoration={
   	 	zigzag,
    		segment length=4,
    		amplitude=1.9,post=lineto,
    		post length=4pt
	}]  (14.5,1.7) -- (12.3,1.7);
	\draw[-|, black] (4,1.7)--(4,3);
	\draw[->, black] (3.8,1.5)to[out=150,in=270] (2.5,3)to[out=90,in=180] (3.5,4)to[out=0,in=90] (4,3.4);
	\draw[-|, black] (11.5,1.7)to[out=180,in=270] (8.05,2.9);
	\draw[->, black] (7.3,3)--(8.7,3);
	\draw[-|, black] (4.7,3.5)--(5.4,4.2);
	\draw[->, black] (5.5,4.5)to[out=45,in=180] (6,4.7);
	\draw [->,
	line join=round,
	decorate, decoration={
   	 	zigzag,
    		segment length=4,
    		amplitude=1.9,post=lineto,
    		post length=4pt
	}]  (6.5,3) -- (5.3,3);
	\draw[->, black] (6.5,3)--(6.7,3);
	\draw[-, black] (6.6,4.8)--(7.9,4.8);
	\draw[->, black] (6.6,4.8)to[out=0, in=270] (a);
	\draw [->,
	line join=round,
	decorate, decoration={
   	 	zigzag,
    		segment length=4,
    		amplitude=1.9,post=lineto,
    		post length=4pt
	}]  (7.9,4.8) -- (14.2,4.8);
	\draw[-, black] (11.5,1.7)--(9,1.7);
	\draw [->,
	line join=round,
	decorate, decoration={
   	 	zigzag,
    		segment length=4,
    		amplitude=1.9,post=lineto,
    		post length=4pt
	}]  (9,1.7) -- (5,1.7);
	\draw [->,
	line join=round,
	decorate, decoration={
   	 	zigzag,
    		segment length=4,
    		amplitude=1.9,post=lineto,
    		post length=4pt
	}]  (6.3,4.5) -- (5.9,3.6);
	\draw [->,
	line join=round,
	decorate, decoration={
   	 	zigzag,
    		segment length=4,
    		amplitude=1.9,post=lineto,
    		post length=4pt
	}]  (5.9,3.6) -- (7.5,3.6);
\end{tikzpicture}
\caption{Diagram of the spatial heterogeneity considered for the model of the BR signalling pathway (\ref{eq:BRpde1})-(\ref{eq:BRpde3}). BR (red circles), BKI1~(blue squares) and BZR-p~(yellow diamonds with black dots) diffuse freely in the cytoplasm, where BR is also degraded. At the membrane, both BR and BKI1 are perceived by BRI1~(black y-shapes) and form complexes with it. In the nucleus~(brown) BKI1 activates dephosphorylation of BZR-p to BZR~(yellow diamonds), and inhibits phosphorylation of BZR to BZR-p. BR is synthesised in the endoplasmic reticulum~(ER, grey crescent) which is continuous with the nuclear membrane.}
\label{fig:BR_cell}
\end{figure}
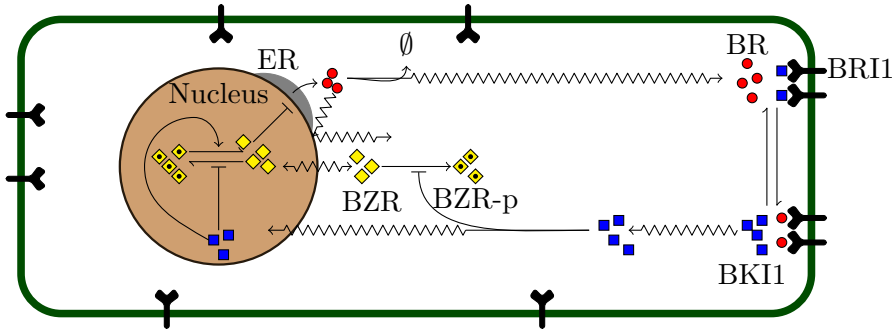

\noindent We consider a spatially heterogeneous model for the BR signalling pathway in the one-dimensional domain $\Omega_{c}=(0,l_{c})$ representing a part of the plant cell cytoplasm, where $l_{c}$ denotes the length of the cell segment we consider. The boundaries of $\Omega_{c}$ are denoted by $\Gamma_n$ modelling the cell nucleus, and $\Gamma_{c}$ representing the cell membrane, Fig.~\ref{fig:domain_Omegac}. \\

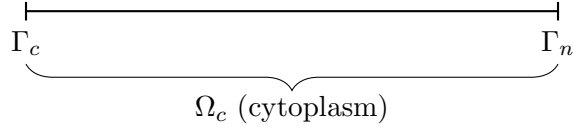
\begin{figure}[!ht]\centering
\begin{tikzpicture}[scale=7]\centering
\draw[-,thick] (0,0) -- (1,0);
\foreach \x/\xtext in {0/$\Gamma_{c}$,1/$\Gamma_{n}$}
	\draw[thick] (\x,0.5pt)-- (\x,-0.5pt) node[below] {\xtext};
\draw [decorate,decoration={brace,amplitude=10pt,mirror},xshift=0pt,yshift=0pt]
(0,-0.1) -- (1,-0.1) node [black,midway,yshift=-0.6cm] {$\Omega_{c}$ (cytoplasm)}; 
\end{tikzpicture}
\caption{A diagram of the one-dimensional domain in which system (\ref{eq:BRpde1})-(\ref{eq:BRpde3}) was solved. $\Omega_{c} = (0, l_{c})$ represents the cytoplasm, $\Gamma_{c}$ the plasma membrane, and $\Gamma_{n}$ the nucleus.} \label{fig:domain_Omegac}
\end{figure}

\noindent We assume the diffusion of BR~($b$), BKI1~($k$) and BZR-p~($z_{p}$) in the cytoplasm: 
\begin{IEEEeqnarray}{rCl}\label{eq:BRpde1}
\left.
 \begin{aligned}
\partial_{t}b  = \; & D_{b} \partial^2_{x}b - \mu_{b}b
\\
\partial_{t}k  = \; & D_{k} \partial^2_{x}k
\\
\partial_{t}z_{p}  = \; & D_{z} \partial^2_{x}z_{p}
\end{aligned}
 \; \right\}
\quad  \text{in }  \Omega_{c}.
\end{IEEEeqnarray}

\noindent The only reaction that takes place in $\Omega_{c}$ is the degradation of BR since we assume that the phosphorylation status of BZR is modulated only in the nucleus. The dynamics occurring on the plasma membrane $\Gamma_{c}$  are  the interactions between  $b$,  $k$, and  receptor  BRI1~($r_{k}$, $r_{b}$). Since  we assume that  the receptors are membrane-bound a system of ODEs, similar to the corresponding equations in system  (\ref{BRode}), is considered to model the dynamics of $r_k$ and $r_b$. The effect of the interactions between $b$, $k$, $r_k$,  and $r_b$  on the dynamics of $b$ and $k$ is defined by Robin boundary conditions for $b$ and $k$ on $\Gamma_c$.  Finally, we assume that the BZR-p cannot diffuse out of the cell, which we model by a zero-flux boundary condition on $\Gamma_c$. Thus on $\Gamma_{c}$ we have
\begin{IEEEeqnarray}{rCl}\label{eq:BRpde2}
\left.
 \begin{aligned}
-D_{b}\partial_{x}b  = \; & \beta_{k}\tilde{r}_{b}k - \beta_{b}\tilde{r}_{k}b
\\
-D_{k}\partial_{x}k  = \; & \beta_{b}\tilde{r}_{k}b - \beta_{k}\tilde{r}_{b}k
\\
-D_{z}\partial_{x}z_{p}  =\; & 0
\\
\frac{d\tilde{r}_{k}}{dt}  = \;  & \beta_{k}\tilde{r}_{b}k - \beta_{b}\tilde{r}_{k}b
\\
\frac{d\tilde{r}_{b}}{dt}  = \; & \beta_{b}\tilde{r}_{k}b - \beta_{k}\tilde{r}_{b}k
\end{aligned}
 \; \right\}
\quad  \text{ on } \Gamma_{c}.
\end{IEEEeqnarray}

\noindent In the nucleus $\Gamma_{n}$ we consider both the phosphorylation and dephosphorylation of BZR. Although the exact subcellular location of BR biosynthesis has not been experimentally demonstrated, the likely location is the endoplasmic reticulum \cite{Shimada_Y_2001}. The endoplasmic reticulum is continuous with the nuclear membrane, hence we model the BR biosynthesis as occurring on $\Gamma_{n}$. We assume that BKI1 cannot enter the nucleus and consider  zero-flux boundary conditions for $k$ on $\Gamma_{n}$.
\begin{IEEEeqnarray}{rCl}\label{eq:BRpde3}
\left.
 \begin{aligned}
D_{b}\partial_{x}b  = & \; \frac{\tilde{\alpha}_{b}}{1 + (\tilde{\theta}_{b}\tilde{z})^{h_{b}}}
\\
D_{k}\partial_{x}k  = &\;  0
\\
D_{z}\partial_{x}z_{p}  = &\;  -\tilde{\delta}_{z}z_{p}k + \rho_{z}\frac{\tilde{z}}{1+(\theta_{z}k)^{h_{z}}}
\\
\frac{d\tilde{z}}{dt}  = &\;  \delta_{z}z_{p}k - \rho_{z}\frac{\tilde{z}}{1+(\theta_{z}k)^{h_{z}}}
\end{aligned}
 \; \right\}
\quad  \text{ on }  \Gamma_{n}.
\end{IEEEeqnarray}

\noindent Since $\tilde{r}_{k}$, $\tilde{r}_{b}$ and $\tilde{z}$ are confined to the boundary, they have units of $mol/m^{2}$. Thus we have the following scaled relationships between variables  $\tilde{r}_{k}$, $\tilde{r}_{b}$ and $\tilde{z}$ in \eqref{eq:BRpde1}-\eqref{eq:BRpde3} and the corresponding variables $r_{k}$, $r_{b}$ and $z$ in \eqref{BRode}: $\tilde{r}_{k}=l_{c}r_{k}$, $\tilde{r}_{b}=l_{c}r_{b}$, $\tilde{z}=l_{c}z$. In order to preserve the balance of units some parameters from model~\eqref{BRode} also had to be rescaled, specifically $\tilde{\alpha}_{b} = l_{c}\alpha_{b}$, $\tilde{\delta}_{z} = l_{c}\delta_{z}$ and $\tilde{\theta}_{b} = \theta_{b}/l_{c}$. We estimated the diffusion constant $D_{b} = 60\>\mu {\rm m}^{2} {\rm min}^{-1}$ by taking the value reported for Progesterone (a steroidal hormone with a similar structure to BL) in physiological solution from \cite{Sieminska_L_1997}. We also took $D_{k}=D_{z}=0.125\>\mu {\rm m}^{2} {\rm min}^{-1}$ for the diffusion constant for proteins from \cite{Sturrock_M_2011}. $l_{c}$ was taken to be 7.43 $\mu$m from measurements of root cell sizes from unpublished data.  All other parameters were taken to be the same as in Table~\ref{tab:BRode} or  Table~\ref{tab:BRode2}.\\
\\
Model \eqref{eq:BRpde1}-\eqref{eq:BRpde3} was solved numerically to analyse the changes of the model solutions due to the spatial heterogeneity of signalling processes. In the fitted parameter regimes, averaged solutions of the model (\ref{eq:BRpde1})-(\ref{eq:BRpde3})  have similar behaviour to solutions of  \eqref{eq:redBR}, Fig.~\ref{fig:BR_Pde1}, however a distinct spatial distribution of BR is characteristic for  (\ref{eq:BRpde1})-(\ref{eq:BRpde3}), see Fig.~\ref{fig:BR_Pde2}.   In the oscillatory parameter regime discussed in section \ref{subsec:bifurcation}, the PDE-ODE model \eqref{eq:BRpde1}-\eqref{eq:BRpde3} was found to have increased amplitude and period of oscillations, see Fig.~\ref{fig:BR_Pde1}. 
For solutions of model (\ref{eq:BRpde1})-(\ref{eq:BRpde3}) we also observe  oscillatory behaviour   for a much wider range of values of $\delta_{z}$ and $\rho_{z}$ than for model \eqref{eq:redBR}.

\begin{figure}[!ht]\centering
	\includegraphics[width=\linewidth]{./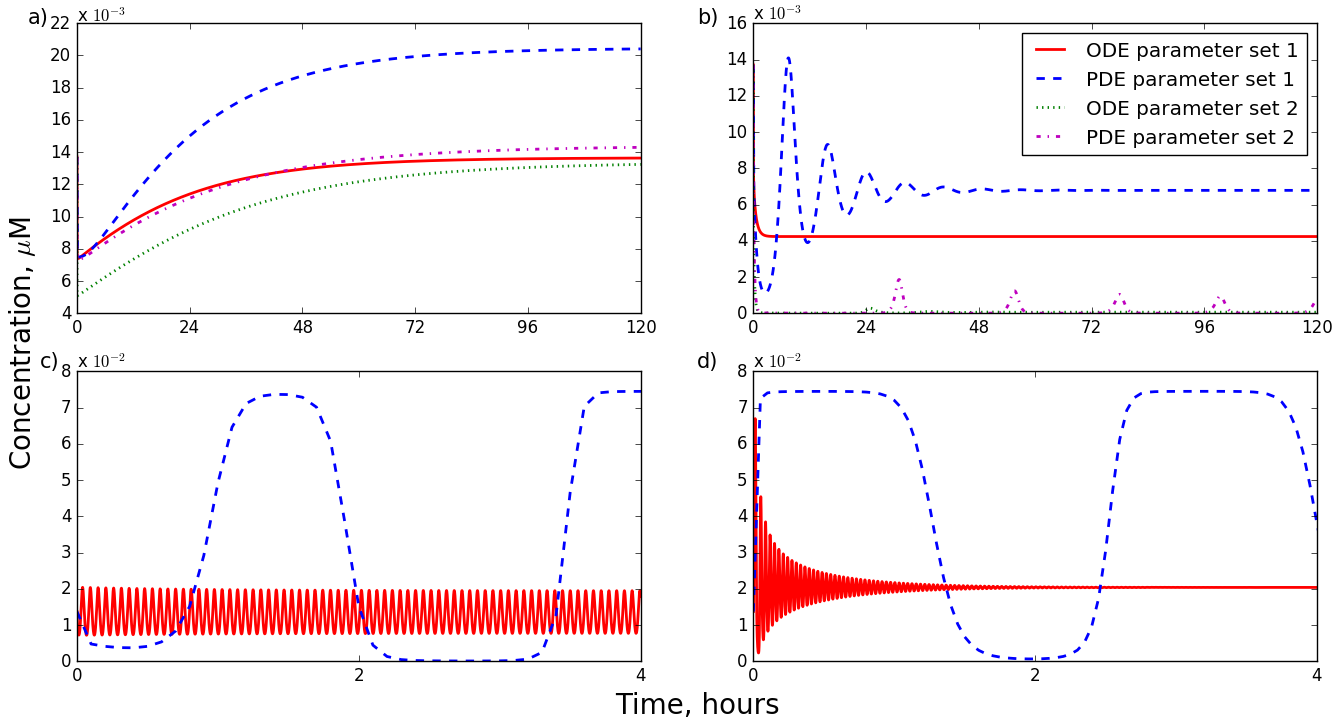}
	\caption{Comparison between the dynamics of BR for the ODE model (\ref{eq:BRode}), and the PDE-ODE model \eqref{eq:BRpde1}-\eqref{eq:BRpde3} for various parameter values. For comparison, the solutions to the PDE-ODE model have been averaged over the space, and the initial conditions are such that they are equal for the ODEs and the averaged PDEs. 
	\textbf{a)} Both models were solved with the fitted parameters  (parameters in Tables~\ref{tab:BRode}  and \ref{tab:BRode2} correspond to set~1 and set~2 respectively), and show similar behaviour. \textbf{b)} Models were solved for $\delta_{z}= 1.02\times 10^{-2}, \rho_{z}=1.33\times 10^{-3}$, $\theta_z=41.2$ and all other parameters as in Tables~\ref{tab:BRode} and \ref{tab:BRode2} respectively. The ODE model tends quickly to steady state, whereas the PDE-ODE model exhibits damped oscillations. 
	 \textbf{c)} Models were solved for $\delta_{z}=\rho_{z}=4$, $\theta_z=41.2$, and all other parameters as in Table~\ref{tab:BRode}. Both systems exhibit periodic solutions, but the PDE-ODE model has a much reduced frequency and much increased amplitude as compared to the ODE model. \textbf{d)} $\delta_{z}=14$, $\rho_{z}=35$, $\theta_z = 41.2$, and all other parameters as in Table~\ref{tab:BRode}, ODE model has moved outside of the oscillatory parameter regime, however the same is not true for the PDE-ODE model, which continues to have oscillatory behaviour for values of $\delta_{z}$, $\rho_{z}$ in excess of 1000.}
	\label{fig:BR_Pde1}
\end{figure}

\begin{figure}[!ht]\centering
	\includegraphics[width=\linewidth]{./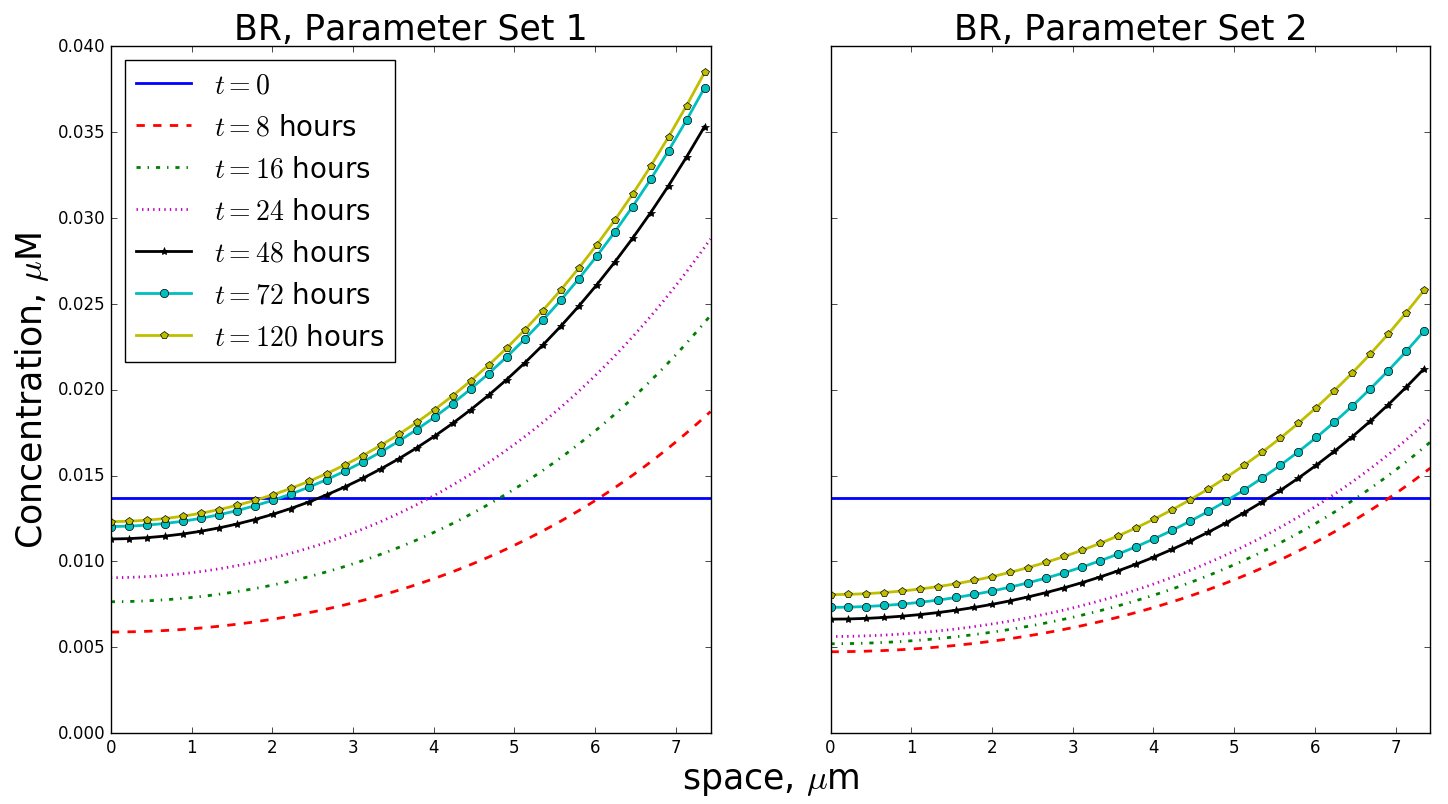}
	\caption{
	The spatial distribution of BR in the cell segment $\Omega_c$, solution of  \eqref{eq:BRpde1}-\eqref{eq:BRpde3}, at different times, where `parameter set 1'  and  `parameter set 2' correspond to parameter values   in Table~\ref{tab:BRode} and   Table~\ref{tab:BRode2} respectively.  The steady state concentration of BR is greater for solutions with parameter set 1, Table~\ref{tab:BRode}, than for solutions with parameter set 2, Table~\ref{tab:BRode2}. }
	\label{fig:BR_Pde2}
\end{figure}

\section{Derivation of a Mathematical Model for Crosstalk between the BR and GA Signalling Pathways}\label{sec:Crosstalk}
Here we consider a mathematical model for the crosstalk between the BR and GA signalling pathways. BRs and GAs can play similar independent roles in development both having important roles in growth \cite{Clouse_S_1998,Ubeda-Tomas_S_2009}, as well as acting together via shared gene expression, and through interactions between their signalling pathways \cite{Bouquin_T_2001,Tanaka_K_2003}.  We aim to use our model to analyse the three different mechanisms of crosstalk between the BR and GA signalling pathways proposed in \cite{Li_QF_2013,Tong_H_2014,Unterholzner_S_2015} and to attempt to establish which mechanism has the most significant influence on the dynamics of the pathways.
\subsection{Mathematical Modelling of GA Signalling}
A detailed model of the GA signalling pathway was derived and examined in \cite{Middleton_A_2012}, consisting of a system of 21 ODEs and 42 parameters. It considers both the interaction of the GA\textsubscript{4}, GID1 and DELLA, and the GA\textsubscript{4} biosynthesis pathway, where GA\textsubscript{12} is converted to GA\textsubscript{15} is converted to GA\textsubscript{24} is converted to GA\textsubscript{9}, all catalysed by enzyme GA20ox, and GA\textsubscript{9} is converted to GA\textsubscript{4}, catalysed by GA3ox.
\\
Since a model for crosstalk must include  variables for both pathways involved, as well as potentially introducing new variables specific to their interactions, a system of ODEs describing such a model may be very large. This being the case, we first reduce the size of the GA signalling model that was derived in \cite{Middleton_A_2012}. We reduce the model by assuming that the dynamics of the molecules involved in the GA biosynthesis pathway are much faster compared to the dynamics of other processes involved in the signalling pathway that their levels remain at a steady state. This enables us to write a system of algebraic equations governing GA\textsubscript{12}, GA\textsubscript{15}, GA\textsubscript{24}, GA\textsubscript{9}, GA20ox and GA3ox, as well as their relevant mRNAs and complexes which may then be solved such that they may therefore be substituted out of the main system. These substitutions generate a new term governing the biosynthesis of GA\textsubscript{4}, where biosynthesis is directly dependent on DELLA concentration, which is then simplified to a fourth order Hill function, signifying the 4 steps in the biosynthesis pathway. We also assume that only one configuration of DELLA.GID1\textsuperscript{c}.GA\textsubscript{4} may be formed. Overall this removes 27 parameters and 13 variables from the system in \cite{Middleton_A_2012}, and introduces a new term for the GA\textsubscript{4} biosynthesis with new parameters $\alpha_{g}$ and $\vartheta_{g}$. The interested reader may examine both the reduction process as well as explicit calculations of $\alpha_{g}$ and $\vartheta_{g}$ in \ref{app:GAred}. Then for $r$, $r_{g}^{o}$, $r_{g}^{c}$, $r_{d}$, $r_{m}$ and $d_{m}$, the concentrations of GID1, GID1\textsuperscript{o}.GA\textsubscript{4}, GID1\textsuperscript{c}.GA\textsubscript{4}, DELLA.GID1\textsuperscript{c}.GA\textsubscript{4}, GID1 mRNA and DELLA mRNA respectively, we obtain
\begin{equation}\label{eq:GAode1}
\begin{aligned}
\frac{dr}{dt} & = -\beta_{g}rg + \gamma_{g}r_{g}^{o} + \alpha_{r}r_{m} - \mu_{r}r,
\\
\frac{dr_{g}^{o}}{dt} & = \beta_{g}rg - \gamma_{g}r_{g}^{o} + \lambda^{o}r_{g}^{c} - \lambda^{c}r_{g}^{o},
\\
\frac{dr_{g}^{c}}{dt} & = -\lambda^{o}r_{g}^{c} + \lambda^{c}r_{g}^{o} - \beta_{d}d_{l}r_{g}^{c} + (\gamma_{d}+\mu_{d})r_{d},
\\
\frac{dr_{d}}{dt} & = \beta_{d}d_{l}r_{g}^{c} - (\gamma_{d}+\mu_{d})r_{d},
\\
\frac{dr_{m}}{dt} & = \phi_{r}\left(\frac{d_{l}}{d_{l}+\vartheta_{r}} - r_{m}\right),
\\
\frac{dd_{m}}{dt} & = \phi_{d}\left(\frac{\vartheta_{d}}{d_{l}+\vartheta_{d}} - d_{m}\right),
\end{aligned}
\end{equation}
 and for $d_{l}$ and $g$, the concentrations of DELLA and GA\textsubscript{4} respectively, we have
\begin{equation}\label{eq:GAode2}
 \begin{aligned}
  \frac{dd_{l}}{dt} & = -\beta_{d}d_{l}r_{g}^{c} + \gamma_{d}r_{d} + \alpha_{d}d_{m},
  \\
  \frac{dg}{dt} & = \phi_{g}(\omega_{g}-g)-\beta_{g}rg + \gamma_{g}r_{g}^{o} + \alpha_{g}\frac{d_{l}^{4}}{d_{l}^{4}+\vartheta_{g}} - \mu_{g}g.
 \end{aligned}
\end{equation}

\begin{table}[h!]\centering
\resizebox{0.98\linewidth}{!}{
\begin{tabular}{ccc|ccc}
\toprule
Constant & Value & Units & Constant & Value & Units\\
\midrule
$\beta_{g}$ & 1.35 & $\mu$M\textsuperscript{-1}min\textsuperscript{-1} & $\gamma_{g}$ & 2.84 & min\textsuperscript{-1}\\
$\alpha_{r}$ & 19.3 & $\mu$M min\textsuperscript{-1} & $\mu_{r}$ & 3.51 & min\textsuperscript{-1}\\
$\lambda^{o}$ & 0.0776 & min\textsuperscript{-1} &  $\lambda^{c}$ & 0.0251 & min\textsuperscript{-1}\\
$\beta_{d}$ & 10 & $\mu$M min\textsuperscript{-1} & $\gamma_{d}$ & 0.133 & min\textsuperscript{-1}\\
$\mu_{d}$ & 6.92 & min\textsuperscript{-1} & $\alpha_{d}$ & $5.28\times 10^{-4}$ & $\mu$M min\textsuperscript{-1}\\
$\phi_{g}$ & $1.061\times 10^{-3}$ & min\textsuperscript{-1} & $\omega_{g}$ & 154.27 & $\mu$M\\
$\alpha_{g}$ & $6.40\times 10^{-3}$ & $\mu$M min\textsuperscript{-1} & 
$\vartheta_{g}$ & $4.27\times 10^{-13}$ & $\mu$M\textsuperscript{4}\\
$\mu_{g}$ & 0.291 & $\mu$M min\textsuperscript{-1} & $\phi_{r}$ & 0.0457 & min\textsuperscript{-1}\\
$\vartheta_{r}$ & $5.6\times 10^{-4}$ & $\mu$M & $\phi_{d}$ & 0.0708 & min\textsuperscript{-1}\\
$\vartheta_{d}$ & 0.01 & $\mu$M & & & \\
\bottomrule
\end{tabular}
}
\caption{Default parameter values used for the GA signalling model. The values of all parameters except $\alpha_{g}$ and $\vartheta_{g}$ were taken directly from \cite{Middleton_A_2012} from which our reduced model has been derived.  The values for  $\alpha_{g}$ and $\vartheta_{g}$ were calculated using the parameters from \cite{Middleton_A_2012} and  the expressions for $\alpha_{g}$ and $\vartheta_{g}$ arising from the model reduction, see \ref{app:GAred}.}
\label{tab:GAode}
\end{table}

\noindent All parameter values are taken from  those for the full model in \cite{Middleton_A_2012}, Table \ref{tab:GAode}. For all components present in both the full and the reduced model, the initial conditions are taken to be the same as in  \cite{Middleton_A_2012}. Both the full and reduced models were solved numerically, and these solutions are plotted in Fig.~\ref{fig:GAode}. The two models show excellent agreement being almost indistinguishable for six terms, and only minor short-term discrepancies for DELLA.GID1\textsuperscript{c}.GA\textsubscript{4} and DELLA. Having thus successfully reduced the GA signalling pathway model while retaining its core behaviour, we derive a model of the crosstalk between the BR and GA signalling pathways by coupling models \eqref{eq:redBR} and  \eqref{eq:GAode1}, \eqref{eq:GAode2}.

\begin{figure}[!ht]\centering
\includegraphics[width=\linewidth]{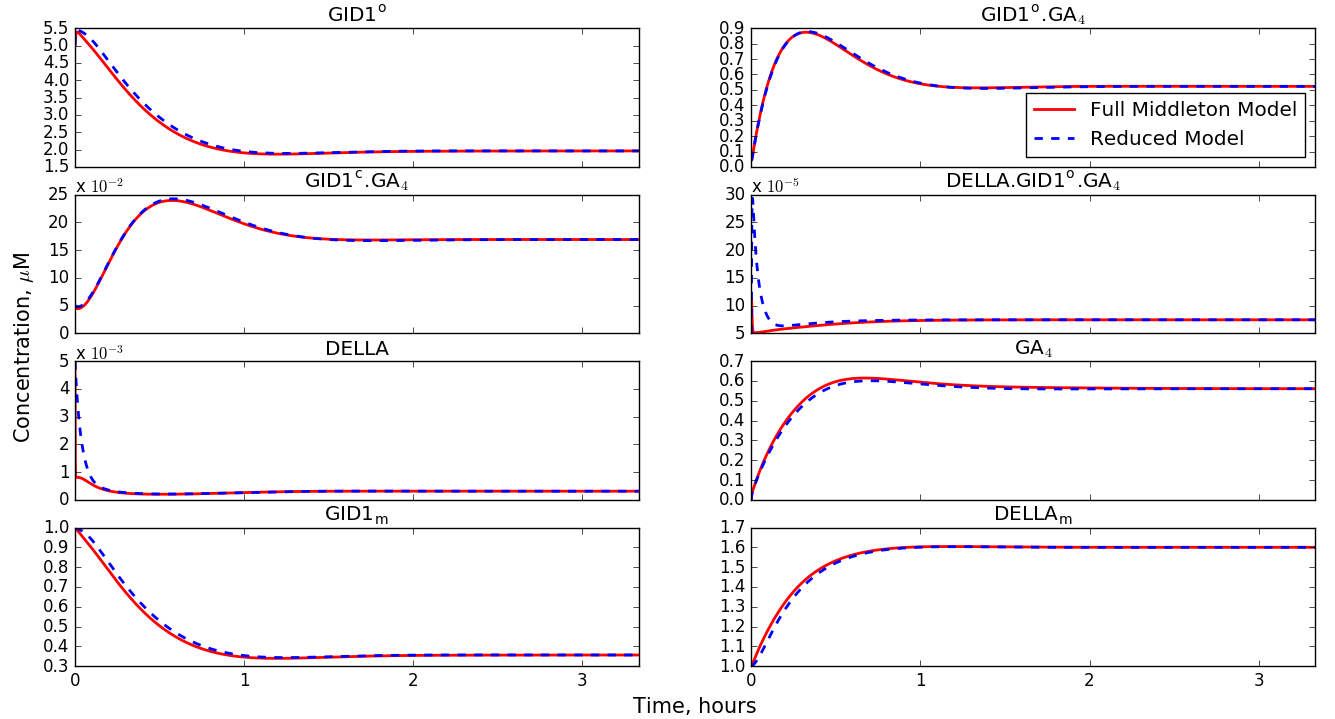}
\caption{A comparison between the numerical simulations of the full model for the GA signalling pathway derived in \cite{Middleton_A_2012}, and the reduced model (\ref{eq:GAode1}), \eqref{eq:GAode2} shows very good agreement.}
\label{fig:GAode}
\end{figure}

\subsection{The Crosstalk Model}\label{subsec:Crosstalk_ODE}
We consider direct crosstalk between the BR and GA signalling pathways.  Besides shared gene expression, two main mechanisms for BR-GA crosstalk have been suggested: direct interaction at the level of transcription factors BZR and DELLA \cite{Li_QF_2013}, and BZR-mediated biosynthesis of GA  \cite{Unterholzner_S_2015}. Here we derive a mathematical model to compare the behaviour of the system under the three different mechanisms  of crosstalk  (interaction between BZR and DELLA, BZR-mediated biosynthesis of GA, and a combination of both of them)  and  to analyse their effects on the BR and GA signalling processes.
\\
\\
The existence of an interaction between BZR and DELLA constituting a crosstalk between the BR and GA signalling pathways was established in \cite{Bai_MY_2012,Gallego-Bartolome_J_2012,Li_QF_2012}. Further to this, in \cite{Li_QF_2013} it was found that the formation of a complex BZR.DELLA inhibits the transcriptional activities of both BZR and DELLA.
\begin{figure}[!ht]\centering
\begin{center}
\begin{tikzpicture}
\node (a) at (0,0){BZR + DELLA};
\node (b) at (4,0){BZR.DELLA};
\draw[transform canvas={yshift=0.4ex}, black,-left to] (a) -- node[above]{$\beta_{z}$} (b);
\draw[transform canvas={yshift=-0.4ex}, black,-left to] (b) -- node[below]{$\gamma_{z}$} (a);
\end{tikzpicture}
\end{center}
\caption{BR-GA crosstalk at the level of transcription factors, where BZR and DELLA form a complex.}
\label{fig:Crosstalk_Complex}
\end{figure}
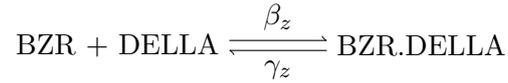
\noindent  A second mechanism of BR-GA crosstalk has been proposed where BZR also has a direct influence on the GA\textsubscript{4} biosynthesis pathway,  Fig.~\ref{fig:Crosstalk_Diagram}. In \cite{Tong_H_2014} it was discovered that BR induces the expression of GA3ox-2, leading to the accumulation of GA\textsubscript{1}, the most bioactive GA for rice. It was also discovered that the external application of BR induces GA20ox expression in Arabidopsis \cite{Stewart-Lilley_J_2013}. Later it was shown that BRs both regulate GA20ox expression, and are required for the transcription of GA3ox \cite{Unterholzner_S_2015}.   Despite good evidence for the coexistence of these two mechanisms, it is still unclear to what extent each mechanism operates to effect changes in BR and GA signalling processes \cite{Ross_J_2016,Tong_H_2016,Unterholzner_S_2016}. 

\begin{figure}[!ht]\centering
\begin{tikzpicture}
\node (a) at (-0.5,3){BZR};
\node (b) at (4.5,3){DELLA};
\node (c) at (2,5){GA\textsubscript{12}};
\node (d) at (2,4){GA\textsubscript{15}};
\node (e) at (2,3){GA\textsubscript{24}};
\node (f) at (2,2){GA\textsubscript{9}};
\node (g) at (2,1){GA\textsubscript{4}};
\node (h) at (-0.5,4){BZR};
\node (i) at (-0.5,2){BZR};

\draw[black,->] (c)--(d);
\draw[black,->] (d)--(e);
\draw[black,->] (e)--(f);
\draw[black,->] (f)--(g);
\draw[black,-] (b)--(3.5,3);
\draw[black,->] (3.5,3)--(2.2,4.5);
\draw[black,->] (3.5,3)--(2.2,3.5);
\draw[black,->] (3.5,3)--(2.2,2.5);
\draw[black,->] (3.5,3)--(2.2,1.5);
\draw[blue,->] (i)--(1.8,1.5);
\draw[red,-] (a)--(0.5,3);
\draw[red,->] (0.5,3)--(1.8,4.5);
\draw[red,->] (0.5,3)--(1.8,3.5);
\draw[red,->] (0.5,3)--(1.8,2.5);
\draw[red,->] (0.5,3)--(1.8,1.5);
\draw[green,-] (h)--(0.5,4);
\draw[green,->] (0.5,4)--(1.8,4.5);
\draw[green,->] (0.5,4)--(1.8,3.5);
\draw[green,->] (0.5,4)--(1.8,2.5);
\end{tikzpicture}
\caption{Proposed mechanisms for the influence of BZR on GA biosynthesis. The first model (green) comes from \cite{Stewart-Lilley_J_2013}, where BZR influences the expression of GA20ox, in the second model (red) from \cite{Unterholzner_S_2015} BZR affects both GA20ox and GA3ox expression, and in the third model (blue) from \cite{Tong_H_2014}  BZR affects only GA3ox expression.}
\label{fig:Crosstalk_Diagram}
\end{figure}
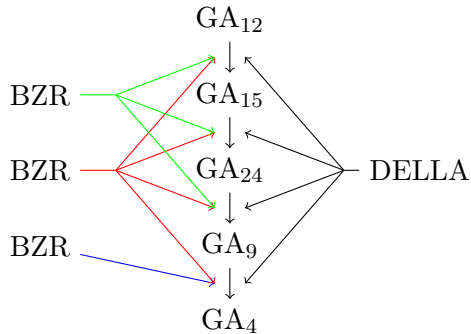

\noindent   To derive the model for BR-GA crosstalk we coupled model \eqref{eq:redBR} and model \eqref{eq:GAode1}, \eqref{eq:GAode2} by adding new interaction terms describing  the crosstalk mechanisms. Two terms were added, a term corresponding to the interaction of BZR and DELLA, which required the introduction of a new variable denoting the concentration of the complex BZR.DELLA, and a term corresponding to the BZR-mediated biosynthesis of GA. By varying the parameters of these new terms, we were then able to analyse the influence that each mechanism exert over the corresponding signalling processes.
\\
\\
BZR.DELLA complex formation is modelled as a reversible reaction between BZR and DELLA. The concentration of BZR.DELLA is denoted by $z_{d}$, and the parameters $\beta_{z}$ and $\gamma_z$ denote the binding rate of BZR and DELLA and  the dissociation rate of BZR.DELLA, respectively,
\begin{equation}\label{eq:CrosstalkODE3}
\frac{dz_{d}}{dt} = \beta_{z}zd_{l} - \gamma_{z}z_{d}.
\end{equation}
 From this we modify system (\ref{eq:redBR}) to take account of the dynamics of BZR.DELLA
\begin{equation}\label{eq:CrosstalkODE1}
\begin{aligned}
\frac{db}{dt} & = \beta_{k}(R_{tot}-K_{tot}+k)k - \beta_{b}(K_{tot}-k)b + \frac{\alpha_{b}}{1+(\theta_{b}z)^{h_{b}}} - \mu_{b}b,
\\
\frac{dk}{dt} & = \beta_{b}(K_{tot}-k)b - \beta_{k}(R_{tot}-K_{tot}+k)k,
\\
\frac{dz}{dt} & = \delta_{z}(Z_{tot}-z-z_{d})k - \rho_{z}\frac{z}{1+(\theta_{z}k)^{h_{z}}} - \beta_{z}zd_{l} + \gamma_{z}z_{d}.
\end{aligned}
\end{equation}
  System (\ref{eq:GAode2}) is modified by taking account of the dynamics of BZR.DELLA, and extending the Hill function describing GA biosynthesis to include BZR-mediated gene expression. For the crosstalk model we do not assume that exogenous GA is entering the system, hence the term $\phi_g(\omega_g - g)$ describing this process  in \eqref{eq:GAode2} is removed. BZR is assumed to control gene expression in the same manner as DELLA, but with relative activity described by $\phi_{z}$, where $\phi_{z}$ is the ratio between the binding thresholds of DELLA and BZR
\begin{equation}\label{eq:CrosstalkODE2}
\begin{aligned}
\frac{dd_{l}}{dt} & \> = \>   -\beta_{d}d_{l}r_{g}^{c} + \gamma_{d}r_{d} + \alpha_{d}d_{m} - \beta_{z}zd_{l} + \gamma_{z}z_{d},
\\
\frac{dg}{dt}   & \>  =  \>   \alpha_{g}\frac{(d_{l}+\phi_{z}z)^{4}}{\vartheta_{g}+(d_{l}+\phi_{z}z)^{4}} - \beta_{g}rg + \gamma_{g}r_{g}^{o} - \mu_{g}g.
\end{aligned}
\end{equation}
 The values for the parameters $\beta_{z}$, $\gamma_{z}$ governing the direct interaction between DELLA and BZR are estimates  from values for similar complex formation and separation from \cite{Middleton_A_2012}. We first assumed that the binding thresholds of BZR and DELLA were equal i.e.~$\phi_{z}=1$, and then analysed the influence of variation in $\phi_{z}$ on the dynamics of the solutions of  \eqref{eq:GAode1}, (\ref{eq:CrosstalkODE3})-(\ref{eq:CrosstalkODE2}).\\
\\
In order to analyse the relative effects of the different crosstalk mechanisms, we examined the behaviour of the model for different values of $\beta_{z}$, $\gamma_{z}$ and $\phi_{z}$. The model was solved numerically for four different conditions: no crosstalk ($\beta_{z},\>\gamma_{z},\>\phi_{z}=0$), crosstalk via BZR activated GA biosynthesis only ($\beta_{z},\>\gamma_{z}=0$), crosstalk via BZR.DELLA complex formation only ($\phi_{z}=0$), and crosstalk via both mechanisms, Fig.~\ref{fig:Crosstalk_Comparison}. 
Only for variables including GA were there any differences when including BZR-mediated biosynthesis of GA, and for all other variables there was close agreement between the cases of no crosstalk and crosstalk via biosynthesis only, and close agreement between the cases of crosstalk via complex formation only and crosstalk via both mechanisms. To examine the influences of both mechanisms on the BR and GA signalling pathways, the relevant parameters were varied. Variation in $\beta_{z}$ and $\gamma_{z}$ led to long-term behaviour changes for the components of the BR signalling pathway,  and short-term changes for the components of the GA signalling pathway, Fig.~\ref{fig:Crosstalk_Complex_mod}. Variations in $\phi_{z}$ had very little effect with differences in behaviour  in variables containing GA for  the cases $\phi_{z}=0$ and $\phi_{z}>0$ (results omitted). From this we concluded that   direct interaction between BZR and DELLA is the predominant form of crosstalk between the BR and GA signalling pathways.

\begin{figure}[!ht]\centering
\includegraphics[width=\linewidth]{./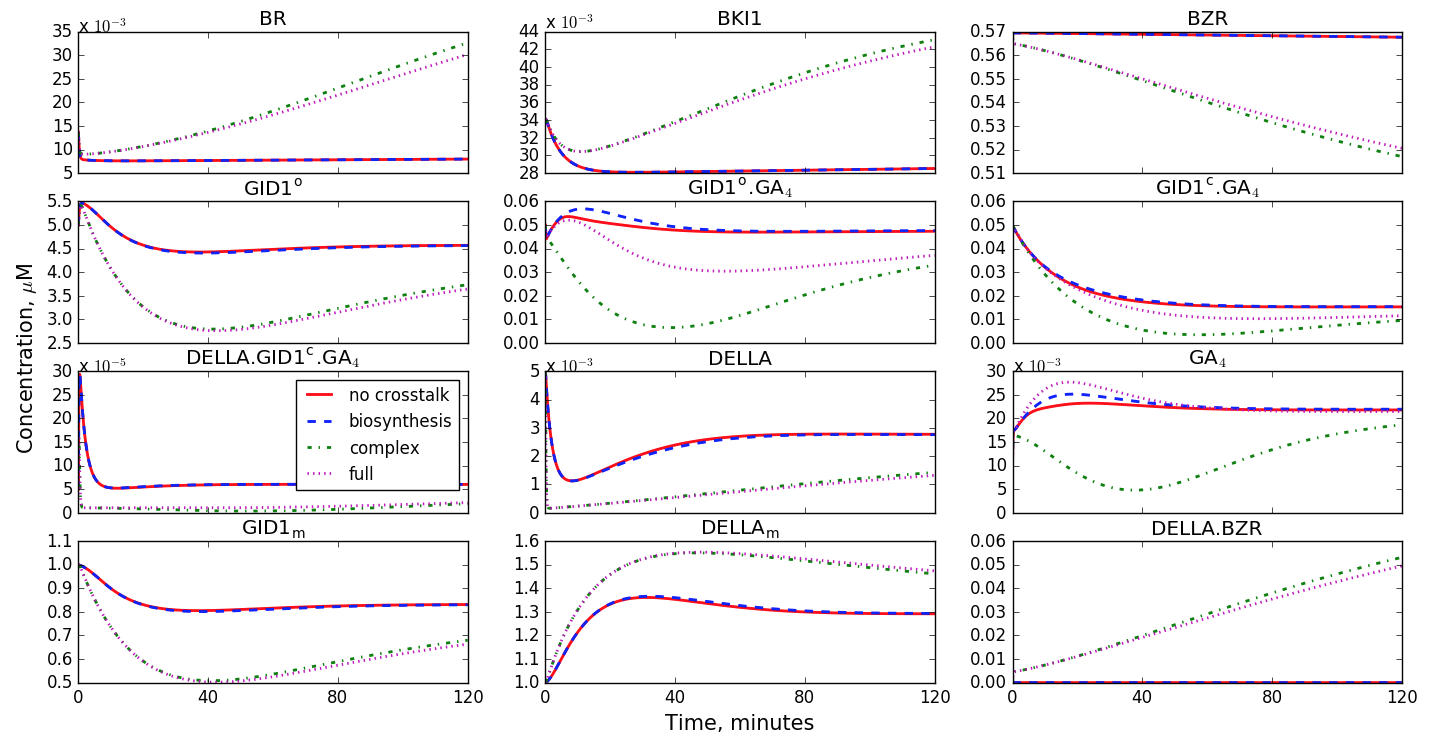}
\caption{Comparisons between solutions of the crosstalk model \eqref{eq:GAode1}, \eqref{eq:CrosstalkODE3}-\eqref{eq:CrosstalkODE2};
 `\textit{no crosstalk}' refers to the case where $\beta_{z}=\gamma_{z}=\phi_{z} = 0$; `\textit{biosynthesis}' refers to the case where the only crosstalk between the BR and GA signalling pathways is BZR-induced biosynthesis of GA ($\beta_{z}=\gamma_{z} = 0$, $\phi_{z}\neq 0$); `\textit{complex}' refers to the case where the only crosstalk between the BR and GA signalling pathways is complex formation of BZR and DELLA ($\beta_{z}, \gamma_{z}\neq 0$, $\phi_{z} = 0$); `\textit{full}' refers to the case where crosstalk between the BR and GA signalling pathways can be via both mechanisms  BZR-mediated GA biosynthesis and complex formation of BZR and DELLA ($\beta_{z}$, $\gamma_{z}$, $\phi_{z}\neq 0$).}
\label{fig:Crosstalk_Comparison}
\end{figure}

\begin{figure}[!ht]\centering
\includegraphics[width=\linewidth]{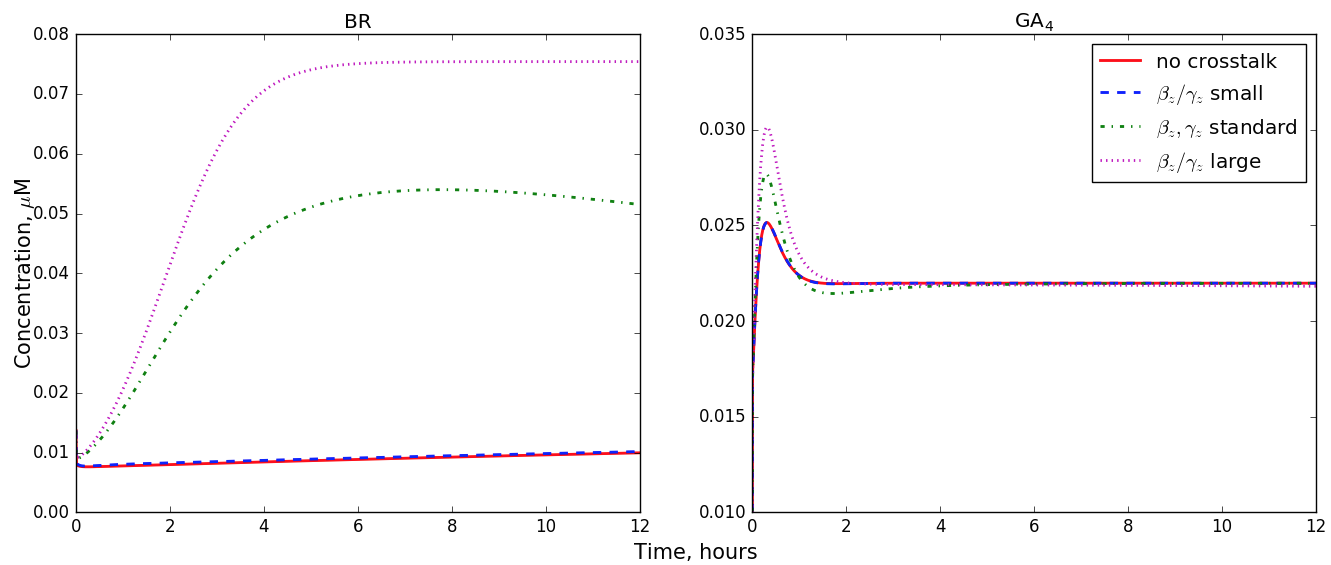}
\caption{Variation of the values of $\beta_{z}$ and $\gamma_{z}$ suggests that the BR signalling pathway is more sensitive to the effects of crosstalk than the GA signalling pathway; `\textit{no crosstalk}' corresponds to the case $\beta_{z}=\gamma_{z} = 0$, `\textit{small}' the case where $\beta_{z}/\gamma_{z}$ has order of magnitude zero, i.e.\ $\beta_{z}/\gamma_{z}\approx 0.75$ , for `\textit{standard}' $\beta_{z}/\gamma_{z}$ has order of magnitude two, i.e.\ $\beta_{z}/\gamma_{z}\approx 75$, and for `\textit{large}' $\beta_{z}/\gamma_{z}$ has order of magnitude four, i.e.\ $\beta_{z}/\gamma_{z}\approx 7500$.}
\label{fig:Crosstalk_Complex_mod}
\end{figure}

\noindent In order to examine the effects of mutations in the BR signalling pathway upon BR-GA crosstalk,  changes in parameters $\delta_{z}$, $\rho_{z}$ and $\theta_{z}$ are introduced into model  \eqref{eq:GAode1},  (\ref{eq:CrosstalkODE3})-(\ref{eq:CrosstalkODE2}),  Fig.~\ref{fig:Crosstalk-Osc}.  Damped oscillations in the dynamics of the components of the BR signalling pathway induced  no such behaviour  in the dynamics of the GA signalling pathway,  and despite causing some changes in the early behaviour of the dynamics of the GA signalling pathway their long term behaviour remains similar.   Alongside the induced perturbation in the BR signalling pathway, we also varied $\beta_{z}$ and $\gamma_{z}$  to examine how crosstalk affects the dynamics of both pathways  in this case. We  found in the case that $\beta_{z}/\gamma_{z}$ is large,  the dynamics of the BR signalling pathway are virtually unchanged from the cases with smaller values of $\beta_{z}/\gamma_{z}$, however the long-term behaviour of the GA signalling pathway was significantly effected, with all components except GA\textsubscript{4} and DELLA\textsubscript{m} having substantially different stationary solutions.

\begin{figure}[!ht]\centering
\includegraphics[width=\linewidth]{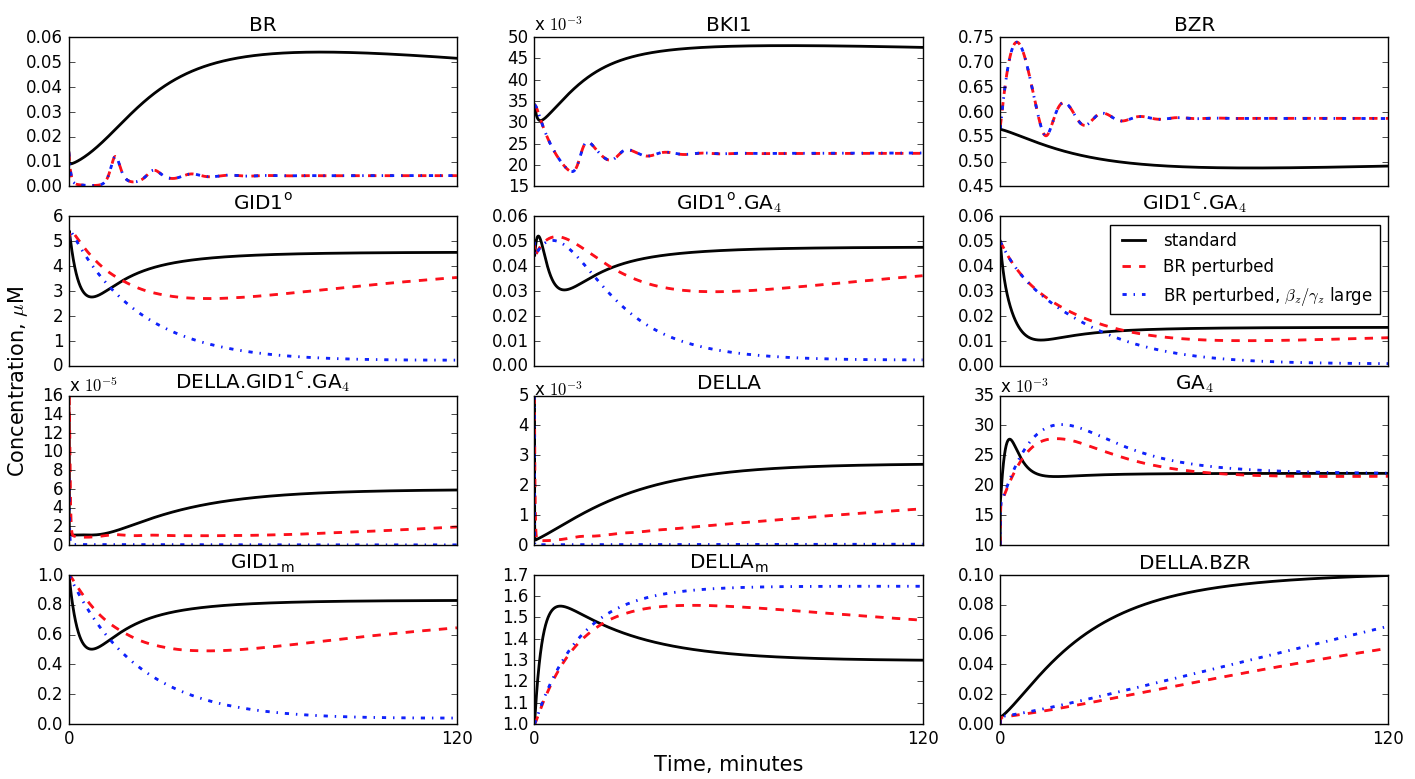}
\caption{Oscillations in the BR pathway lead to short term changes in the dynamics of the GA pathway, and increasing $\beta_{z}/\gamma_{z}$ causes more long term effects on the GA signalling pathway. In the `\textit{standard}' case all parameters are as in Tables \ref{tab:BRode} and \ref{tab:GAode} and $\beta_z=10~\mu {\rm M}^{-1}{\rm min}^{-1}$, $\gamma_z=0.133~{\rm min}^{-1}$. For the `\textit{BR perturbed}' case parameters $\delta_{z}$ and $\rho_{z}$ have had their orders of magnitude increased by three, $\theta_{z} = 41.2\mu{\rm M}^{-1}$, and all other parameters are as in Tables \ref{tab:BRode} and \ref{tab:GAode} and  $\beta_z=10$, $\gamma_z=0.133$. The `\textit{BR perturbed, $\beta_{z}/\gamma_{z}$ large}' has the same parameter values as the `\textit{BR perturbed}', however here $\beta_{z}/\gamma_{z}$ has order of magnitude four, i.e.\ $\beta_{z}/\gamma_{z}\approx 7500$, similar to Fig.~\ref{fig:Crosstalk_Complex_mod}.}
\label{fig:Crosstalk-Osc}
\end{figure}

\noindent Given the observed large effect of crosstalk on the BR pathway as compared with the GA pathway, see Figs~\ref{fig:Crosstalk_Comparison} and \ref{fig:Crosstalk_Complex_mod}, we analysed how directly altering each pathway would affect the other, Fig.~\ref{fig:Crosstalk-Hormonal_Overexpression}. We modelled the overexpression of BR and GA hormones by increasing parameters $\alpha_{b}$ and $\alpha_{g}$. Overexpression of BR (increase of $\alpha_{b}$) resulted in major changes in the BR signalling pathway with increases in the concentrations of BR, BKI1 and BZR, but almost negligible change in the GA signalling pathway. Overexpression of GA had more widespread effects, causing increases in the concentrations of BZR, GID1\textsuperscript{o}.GA\textsubscript{4}, GID1\textsuperscript{c}.GA\textsubscript{4}, DELLA.GID1\textsuperscript{c}.GA\textsubscript{4} and GA\textsubscript{4}, and decreases in the concentrations of BR, BKI1, GID1\textsuperscript{o}, GID1\textsubscript{m} and DELLA.BZR.

\begin{figure}[!ht]\centering
\includegraphics[width=\linewidth]{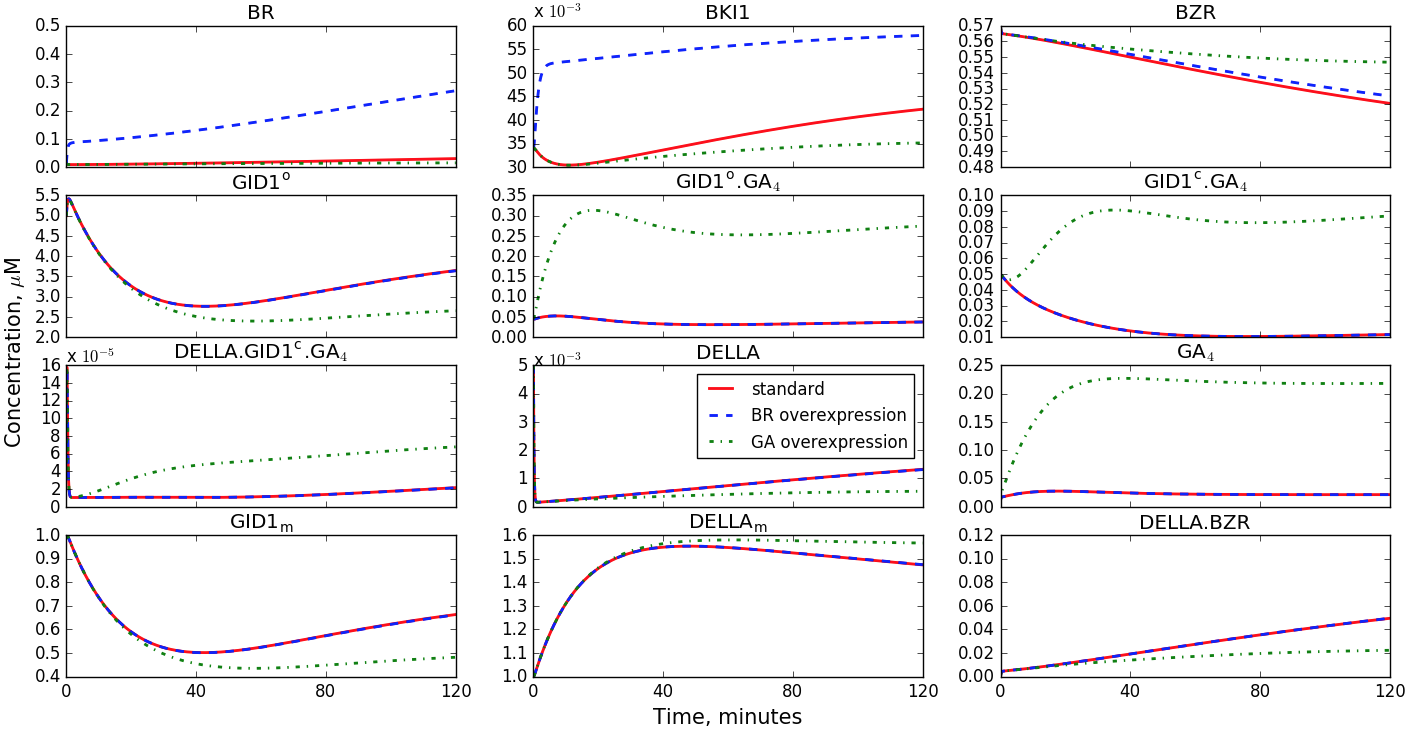}
\caption{Examining the effects of hormonal overexpression on the dynamics of solutions of  model \eqref{eq:GAode1},  (\ref{eq:CrosstalkODE3})-(\ref{eq:CrosstalkODE2}) for the crosstalk between the BR and GA signalling pathways. For the `\textit{standard}' case all parameters are as in Tables \ref{tab:BRode} and \ref{tab:GAode} and $\beta_z=10~\mu {\rm M}^{-1} {\rm min}^{-1}$, $\gamma_z=0.133~{\rm min}^{-1}$;  in the `\textit{BR overexpression}' case, $\alpha_{b}$'s order of magnitude is increased by one;   in the `\textit{GA overexpression}' case $\alpha_{g}$'s order of magnitude is increased by one. 
}
\label{fig:Crosstalk-Hormonal_Overexpression}
\end{figure}

\noindent  In  all cases  of BR-GA interactions discussed  here  we considered the parameter values  for  the BR signalling pathway model as in Table~\ref{tab:BRode}. 
However  similar behaviours are observed  also if considering  the  parameter values  as in Table~\ref{tab:BRode2}, hence we do not present the simulation results for those cases. 

\subsection{Modelling Spatial Heterogeneity in the Crosstalk Signalling}\label{subsec:Crosstalk_PDE}
We also considered the spatial heterogeneity in the crosstalk model. Here, the  coupled PDE-ODE model for the BR signalling pathway (\ref{eq:BRpde1})-(\ref{eq:BRpde3}) was modi\-fied to include the components of the GA signalling pathway. GA was assumed to be able to diffuse freely throughout the cell, and all other components of the GA signalling pathway were assumed to be nuclear-localized. We retain the same assumptions for the components of the BR signalling pathway as in section~\ref{sec:BR_space}. This led to a set of reaction-diffusion equations for BR, BKI1, BZR-p the same as in \eqref{eq:BRpde1}, and for GA:
\begin{IEEEeqnarray}{rCl}\label{eq:Crosstalk_Pde1}
  \partial_{t}g  =  D_{g}\partial_{x}^2 g - \mu_{g}g
\quad 
\text{ in }  \Omega_{c}.
\end{IEEEeqnarray}

\noindent We assume that receptor based interactions of BR, BKI1 and BRI1 remain the same as in \eqref{eq:BRpde2}, and no influx of exogenous GA  on the plasma membrane:
\begin{IEEEeqnarray}{rCl}\label{eq:Crosstalk_Pde2}
  -D_{g}\partial_{x}g  = 0 \quad 
\text{ on  } \Gamma_{c}.
\end{IEEEeqnarray}

\noindent The production of BR, change in phosphorylation status of BZR, and interactions between BZR and DELLA occur in the plant cell nucleus and are modelled as flux boundary conditions and ODEs on the boundary $\Gamma_{n}$:
\begin{IEEEeqnarray}{rCl}\label{eq:Crosstalk_Pde3}
\left.
 \begin{aligned}
  D_{b}\partial_{x}b  = \; & \frac{\tilde{\alpha}_{b}}{1+(\tilde{\theta}_{b}z)^{h_{b}}}\\
  D_{k}\partial_{x}k  =\;  & 0\\
  \frac{d\tilde{z}}{dt}  = \; & \tilde{\delta}_{z}z_{p}k - \rho_{z}\frac{\tilde{z}}{1+(\theta_{z}k)^{h_{z}}} - \tilde{\beta}_{z}\tilde{z}\tilde{d}_{l} + \gamma_{z}\tilde{z}_{d}\\
  D_{z}\partial_{x}z_{p}  =\;  & -\tilde{\delta}_{z}z_{p}k + \rho_{z}\frac{\tilde{z}}{1+(\theta_{z}k)^{h_{z}}}
  \end{aligned}
\; \right\} \quad 
\text{ on  }    \Gamma_{n}.
\end{IEEEeqnarray}

\noindent All processes of the GA signalling pathway apart from degradation  of GA are localised to the nucleus and are modelled by a system of ODEs for GID1\textsuperscript{o}, GA\textsubscript{4}.GID1\textsuperscript{o}, GA\textsubscript{4}.GID1\textsuperscript{c}, DELLA.GA\textsubscript{4}.GID1\textsuperscript{c}, DELLA, GID1\textsubscript{m} and DELLA\textsubscript{m}, and a flux boundary condition for GA on the boundary $\Gamma_{n}$:
\begin{IEEEeqnarray}{rCl}\label{eq:Crosstalk_Pde4}
\left.
 \begin{aligned}
  \frac{d\tilde{r}}{dt}  = \; & -\beta_{g}\tilde{r}g + \gamma_{g}\tilde{r}_{g}^{o} + \tilde{\alpha}_{r}r_{m} - \mu_{r}r\\
  \frac{d\tilde{r}_{g}^{o}}{dt}  =\;  & \beta_{g}\tilde{r}g - (\gamma_{g}+\lambda^{c})\tilde{r}_{g}^{o} + \lambda^{o}\tilde{r}_{g}^{c}\\
  \frac{d\tilde{r}_{g}^{c}}{dt}  = \; & \lambda^{c}\tilde{r}_{g}^{o} - \lambda^{o}\tilde{r}_{g}^{c} - \tilde{\beta}_{d}\tilde{d}_{l}\tilde{r}_{g}^{c} + (\gamma_{d}+\mu_{d})\tilde{r}_{d}\\
  \frac{d\tilde{r}_{d}}{dt}  =\;  & \tilde{\beta}_{d}\tilde{d}_{l}\tilde{r}_{g}^{c} - (\gamma_{d}+\mu_{d})\tilde{r}_{d}\\
  \frac{d\tilde{d}_{l}}{dt}  = \; & -\tilde{\beta}_{d}\tilde{d}_{l}\tilde{r}_{g}^{c} + \gamma_{d}\tilde{r}_{d} + \tilde{\alpha}_{d}d_{m} - \tilde{\beta}_{z}\tilde{z}\tilde{d}_{l} + \gamma_{z}\tilde{z}_{d}\\
  D_{g}\partial_{x}g  = \; & \alpha_{g}\frac{(\tilde{d}_{l}+\phi_{z}\tilde{z})^{4}}{\tilde{\vartheta}_{g}+(\tilde{d}_{l}+\phi_{z}\tilde{z})^{4}} - \beta_{g}\tilde{r}g + \gamma_{g}\tilde{r}_{g}^{o}\\
  \frac{dr_{m}}{dt}  =\;  & \phi_{r}\left(\frac{\tilde{d}_{l}}{\tilde{\theta}_{r}+\tilde{d}_{l}}-r_{m}\right)\\
  \frac{dd_{m}}{dt}  = \; & \phi_{d}\left(\frac{\tilde{\theta}_{d}}{\tilde{\theta}_{d}+\tilde{d}_{l}}-d_{m}\right)
  \end{aligned}
\; \right\} \quad 
\text{ on }  \Gamma_{n}.
\end{IEEEeqnarray}

\noindent Finally crosstalk is modelled via formation of the BZR.DELLA complex
\begin{IEEEeqnarray}{rCl}\label{eq:Crosstalk_Pde5}
  \frac{dz_{d}}{dt} & = & \beta_{z}zd_{l} - \gamma_{z}z_{d} \qquad \qquad 
\text{on $\Gamma_{n}$.}
\end{IEEEeqnarray}

\noindent Similar to model (\ref{eq:BRpde1})-(\ref{eq:BRpde3}), to ensure appropriate units  we obtain  rescaled relations  between  boundary-localised variables in  model    \eqref{eq:Crosstalk_Pde1}-\eqref{eq:Crosstalk_Pde5} and the corresponding variables in model \eqref{eq:GAode1},  \eqref{eq:CrosstalkODE3}-\eqref{eq:CrosstalkODE2}, i.e.\ 
$\tilde r_{k}= l_c r_k$, $\tilde r_{b}= l_c r_b$, $\tilde z= l_c z$, $\tilde r= l_c r$, $\tilde r_{g}^{o}= l_c r_{g}^{o}$, $\tilde r_{g}^{c}= l_cr_{g}^{c}$, $\tilde r_{d}=l_c r_d$, $\tilde d_{l}= l_c d_l$, and $\tilde z_{d}= l_c z_d$.  
 Since $r_{m}$ and $d_{m}$ were already relative quantities with no units there was no need for them to be scaled. To balance units certain parameters also had to be rescaled: parameters $\alpha_{b}$, $\delta_{z}$, $\alpha_{r}$, $\alpha_{d}$, $\alpha_{g}$, $\vartheta_{r}$ and $\vartheta_{d}$ from  \eqref{eq:GAode1},  \eqref{eq:CrosstalkODE3}-\eqref{eq:CrosstalkODE2} were all multiplied by $l_c$ such that $\tilde{\alpha}_{b} = l_{c}\alpha_{b}$ etc.; $\theta_{b}$, $\beta_{z}$ and $\beta_{d}$ were all divided by $l_c$  such that $\tilde{\theta}_{b} = \theta_{b}/l_{c}$ etc.; and $\vartheta_{g}$ was scaled such that $\tilde{\vartheta}_{g} = l_{c}^{4}\vartheta_{g}$. Since GA  molecules have  similar size to BR molecules, we consider $D_{g}=D_{b}$. All other parameters were taken to be the same as those used for model  \eqref{eq:GAode1},   (\ref{eq:CrosstalkODE3})-(\ref{eq:CrosstalkODE2}), i.e.~from Tables \ref{tab:BRode} and  \ref{tab:GAode} and $\beta_z=10~\mu {\rm M}^{-1} {\rm min}^{-1}$, $\gamma_z=0.133~{\rm min}^{-1}$.
 \\
\\
\noindent System \eqref{eq:BRpde1}, \eqref{eq:BRpde2}, \eqref{eq:Crosstalk_Pde1}-\eqref{eq:Crosstalk_Pde5} was solved numerically, and these solutions were averaged over the space and then compared to the solutions of system  \eqref{eq:GAode1},  (\ref{eq:CrosstalkODE3})-(\ref{eq:CrosstalkODE2}).  The behaviour of the two models was similar for most of the parameter sets considered in section \ref{subsec:Crosstalk_ODE}, see e.g.\ standard cases in  Figs.~\ref{fig:Crosstalk-Osc} and \ref{fig:Crosstalk_Pde}, however spatially heterogeneous steady states are characteristic for model \eqref{eq:BRpde1}, \eqref{eq:BRpde2}, \eqref{eq:Crosstalk_Pde1}-\eqref{eq:Crosstalk_Pde5}, see Fig.~\ref{fig:Crosstalk_Pde_2}.  Spatial heterogeneity does have a significant effect in the case where perturbations in the phosphorylation of BZR lead to damped oscillations in the BR signalling pathway, with much different behaviours of solutions of system \eqref{eq:BRpde1}, \eqref{eq:BRpde2}, \eqref{eq:Crosstalk_Pde1}-\eqref{eq:Crosstalk_Pde5} than for  \eqref{eq:GAode1},  (\ref{eq:CrosstalkODE3})-(\ref{eq:CrosstalkODE2}), comparing Figs.~\ref{fig:Crosstalk-Osc} and \ref{fig:Crosstalk_Pde}. When the BR signalling pathway is `strongly perturbed', i.e.~$\delta_{z}$ and $\rho_{z}$ are increased sufficiently high, the oscillations in the components of the BR signalling pathway also cause oscillatory behaviour in the components of the GA signalling pathway, suggesting that spatial heterogeneity may contribute to instability of solutions in extreme cases.

\begin{figure}[!ht]
\includegraphics[width=\linewidth]{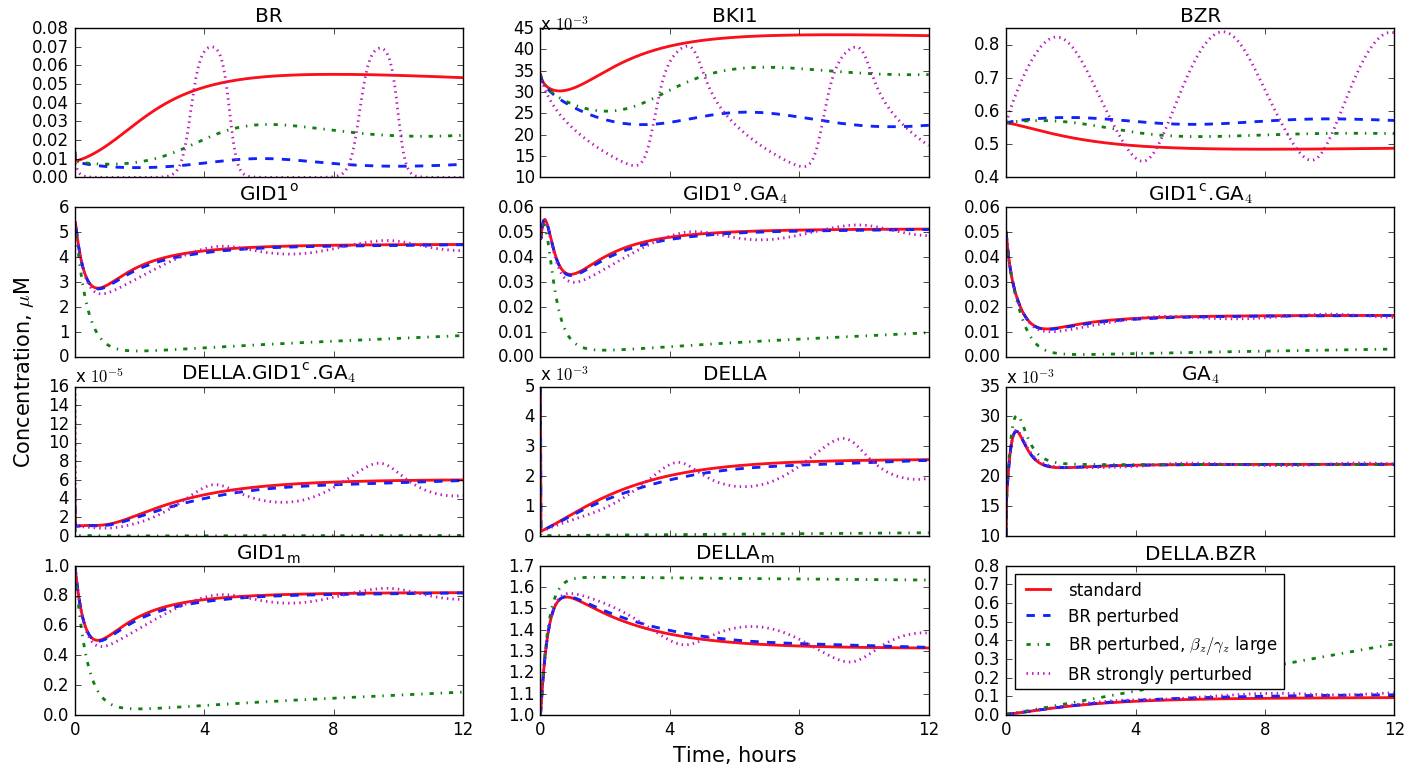}
\caption{Solutions of system \eqref{eq:BRpde1}, \eqref{eq:BRpde2}, (\ref{eq:Crosstalk_Pde1})-(\ref{eq:Crosstalk_Pde5}), averaged over space. For `\textit{standard}'  parameter values are as in Tables \ref{tab:BRode} and \ref{tab:GAode} and $\beta_z=10~\mu {\rm M}^{-1} {\rm min}^{-1}$, $\gamma_z=0.133~{\rm min}^{-1}$. For both `\textit{BR perturbed}' and `\textit{BR perturbed, $\beta_{z}/\gamma_{z}$ large}' $\delta_{z}$ and $\rho_{z}$ have had their orders of magnitude increased by two, $\theta_{z} = 41.2\mu{\rm M}^{-1}$, $\beta_{z}/\gamma_{z}$ has order of magnitude four, i.e.\  $\beta_{z}/\gamma_{z}\approx 7500$, for `\textit{BR perturbed, $\beta_{z}/\gamma_{z}$ large}' only, and all other parameters are as in Tables \ref{tab:BRode} and \ref{tab:GAode}. For the case `\textit{BR strongly perturbed}' $\delta_{z}$ and $\rho_{z}$ have had their orders of magnitude increased by three, $\theta_{z}=41.2\mu{\rm M}^{-1}$, all other parameters as in Tables \ref{tab:BRode} and \ref{tab:GAode} and $\beta_z=10~\mu {\rm M}^{-1} {\rm min}^{-1}$, $\gamma_z=0.133~{\rm min}^{-1}$.}
\label{fig:Crosstalk_Pde}
\end{figure}

\begin{figure}[!ht]
\includegraphics[width=\linewidth]{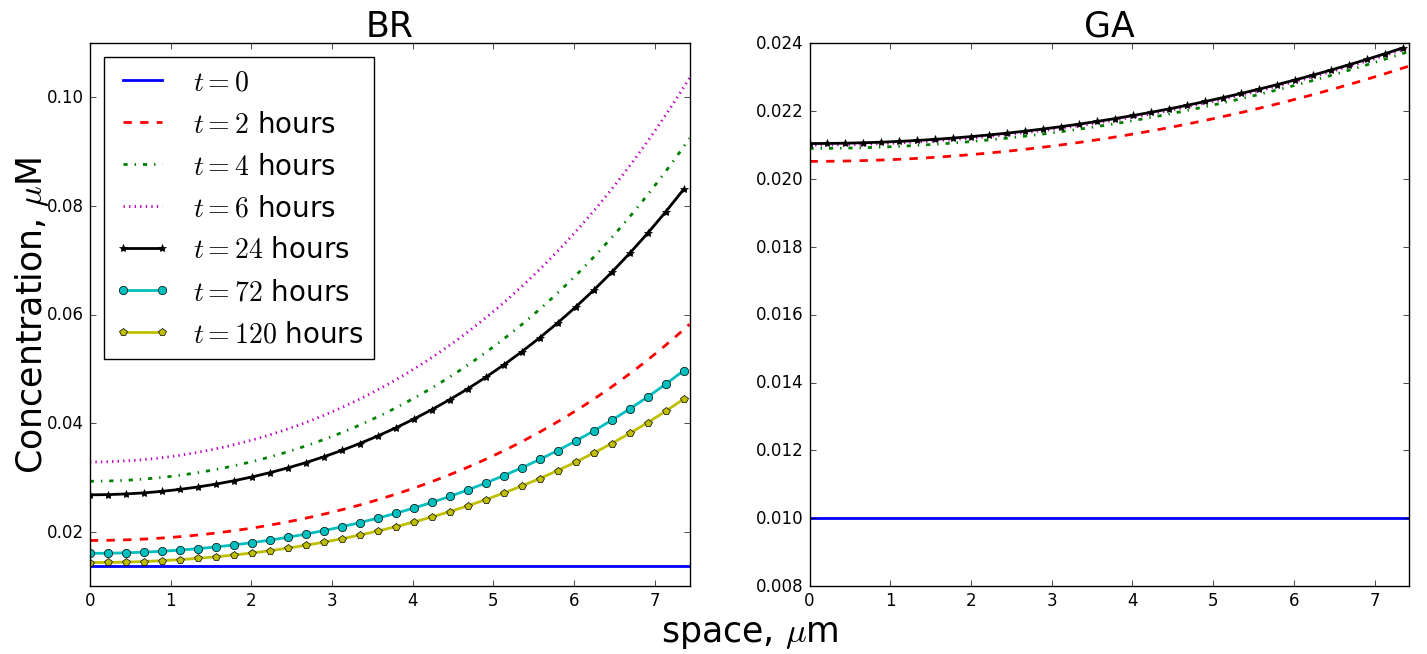}
\caption{
The spatial distribution of BR and GA in the cell segment $\Omega_c$ (solutions of  \eqref{eq:BRpde1}, \eqref{eq:BRpde2}, \eqref{eq:Crosstalk_Pde1}-\eqref{eq:Crosstalk_Pde5})  at different times,  parameter values are as in Tables \ref{tab:BRode} and \ref{tab:GAode} and $\beta_z=10~\mu {\rm M}^{-1} {\rm min}^{-1}$, $\gamma_z=0.133~{\rm min}^{-1}$. The concentration of BR starts of uniform, and is produced at $x = 7.43\>\mu$m, the nucleus, leading to the greatest concentration at this point. The overall concentration increases over the first 8 h, but then decreases steadily, arriving at a spatially heterogeneous steady state after 120 h. The concentration of GA is also greater at $x = 7.43\>\mu$m where it is produced, but GA attains its spatially heterogenous steady state after 24 h, much more quickly than BR.}
\label{fig:Crosstalk_Pde_2}
\end{figure}

\section{Discussion and Conclusion}\label{sec:Discussion}
Plants rely on complex integrated hormonal signalling pathways in order to respond and adapt to changing environmental conditions. The BR and GA signalling pathways have a diverse range of effects on plant growth and developmental processes, some of which overlap. To examine the behaviours of these signalling pathways and their interactions we developed  mathematical models of the BR signalling pathway, and crosstalk between the BR and GA pathways under assumptions of both spatial homogeneity: models (\ref{BRode}) and   \eqref{eq:GAode1},   (\ref{eq:CrosstalkODE3})-(\ref{eq:CrosstalkODE2}), and spatial heterogeneity of signalling processes: models (\ref{eq:BRpde1})-(\ref{eq:BRpde3})  and \eqref{eq:BRpde1}, \eqref{eq:BRpde2}, (\ref{eq:Crosstalk_Pde1})-(\ref{eq:Crosstalk_Pde5}).

\noindent The parameters in the model for the BR signalling pathway (\ref{eq:redBR}) were determined upon validating the model by comparing its numerical solutions to experimental data from \cite{Tanaka_K_2005}. Using numerical optimisation techniques we were able to get a good fit for the diverse data sets corresponding to the different growth conditions considered in \cite{Tanaka_K_2005}. Our calculations resulted in values for $\delta_{z}$ and $\rho_{z}$ of a much lower order of magnitude than the rest of the parameters in the model, suggesting that the phosphorylation of BZR occurs on a much slower time scale than e.g.\  perception of BR by BRI1. 
  The model parameters were optimised under two different constraints on the value of $\beta_{k}$, \eqref{eq:betak1} and \eqref{eq:betak2},  corresponding to two different mechanism governing the interactions of BR, BRI1, and BKI1, where BR and BKI1 binding to and dissociation from BRI1 is instantaneous, or a complex BR.BRI1.BKI1 can be formed, respectively. 
Most parameters had very similar or identical values, with the only notable differences being for (naturally) $\beta_{k}$, and the parameters governing the phosphorylation of BZR: $\delta_{z}$, $\rho_{z}$.
Overall our model fitted the data well for both mechanisms \eqref{eq:betak1} and \eqref{eq:betak2}, with $R^2$ values of $0.89$ and $0.92$ respectively, suggesting that both of these mechanisms may accurately predict the dynamics of the BR signalling pathway.  The only growth condition for which the model had a less accurate fit was for when the growth medium was supplemented with BL. 
\\ 
\\
Qualitative  analysis of system \eqref{eq:BRode} revealed that it has only one steady state for the parameter space $P$, and that this steady state is stable in a neighbourhood of the parameter sets defined in Tables~\ref{tab:BRode} and \ref{tab:BRode2}. Bifurcation analysis showed that for a  bounded set of values of ($\delta_{z}$, $\rho_{z}$), with  
 $\theta_z =41.2~\mu{\rm M}^{-1}$ and all other parameters as in Table \ref{tab:BRode}, system \eqref{eq:BRode} has periodic solutions, i.e.~system \eqref{eq:BRode} undergoes a Hopf bifurcation upon varying parameters $\delta_{z}$, $\rho_{z}$,  corresponding to the dephosphorylation rate and phosphorylation rate  of BZR respectively. 
 No such bifurcations were found when considering  a large range of values for $\delta_z$, $\rho_z$ , and  $\theta_z=41.2~\mu{\rm M}^{-1}$ and all other parameters as in Table~\ref{tab:BRode2}.   However, since the values of $\delta_{z}$, $\rho_{z}$ for which a bifurcation exists are likely to be biologically unrealistic as they are much greater than the values yielded by validation of the model by experimental data, this suggests that within some range of normal function the BR signalling pathway has good stability, which is of crucial importance for proper function of plant tissues.  Results also suggest that if BR and BKI1 associate and dissociate from BRI1 instantaneously, as in mechanism \eqref{eq:betak1}, stability of the pathway is principally dependent upon the phosphorylation state of BZR. Further if BR.BRI1.BKI1 can be formed, as in mechanism \eqref{eq:betak2}, stability of solutions to model \eqref{eq:BRode} for a wide range of all parameters ensures  effective BR homeostasis. 
\\
\\
Numerical solutions  for the spatially heterogeneous model  (\ref{eq:BRpde1})-(\ref{eq:BRpde3}) averaged over space and ODE model   (\ref{eq:redBR})  for the BR signalling pathway
demonstrated  similar behaviours    for the parameter sets when the solutions of   \eqref{eq:BRpde1}-\eqref{eq:BRpde3} and  of   \eqref{eq:redBR}  converge  to the  steady state as $t\to \infty$.  In the oscillatory parameter regime discussed in section~\ref{subsec:bifurcation}, the PDE-ODE model exhibited distinct  behaviour to the ODE model, Fig.~\ref{fig:BR_Pde1},  suggesting that under such conditions the spatial heterogeneity of the signalling pathway has a large influence on the dynamics of the molecules involved in the BR signalling pathway.\\
\\
We investigated the effects of interactions between the BR and GA signalling pathways and the BR-GA signalling crosstalk mechanisms\cite{Li_QF_2013,Tong_H_2014,Unterholzner_S_2015}  by coupling  (\ref{eq:redBR}),  \eqref{eq:GAode1} and (\ref{eq:GAode2}) to obtain  model \eqref{eq:GAode1},  (\ref{eq:CrosstalkODE3})-(\ref{eq:CrosstalkODE2}), in order  to establish which mechanism is more biologically significant \cite{Ross_J_2016,Tong_H_2016,Unterholzner_S_2016}.   We examined the effects of BZR-mediated biosynthesis of GA and  of BZR.DELLA complex formation on the behaviour of the BR and GA signalling pathways. We modelled the cases where there was: no crosstalk, BZR-mediated biosynthesis of GA only, BZR.DELLA complex formation only, and both BZR-mediated biosynthesis of GA and BZR.DELLA complex formation. The cases of no crosstalk and BZR-mediated biosynthesis exhibited similar behaviour to each other, and the cases of BZR.DELLA complex formation and both BZR-mediated biosynthesis of GA and BZR.DELLA complex formation exhibited similar behaviour to each other, Fig.~\ref{fig:Crosstalk_Comparison}. Further, upon variation of the ratio between DELLA and BZR binding thresholds $\phi_{z}$ the change in behaviour of most components of both pathways was negligible, whereas the variations in  BZR.DELLA binding  $\beta_{z}$ and dissociation $\gamma_{z}$ rates  had a strong effect upon the BR signalling pathway, Fig.~\ref{fig:Crosstalk_Complex_mod}. From this we concluded that BZR.DELLA complex formation has a greater effect on the behaviour of the BR and GA signalling pathways. This suggests that interactions between BZR and DELLA exert more influence over the dynamics of the pathways than BZR-mediated biosynthesis of GA and are more likely to be able to promote any effective change in the behaviour of the signalling processes.   Numerical results also demonstrated that for all parameter sets where solutions tended to a steady state, the solutions of model   \eqref{eq:GAode1},   (\ref{eq:CrosstalkODE3})-(\ref{eq:CrosstalkODE2}) and  the averaged over space solutions  of model \eqref{eq:BRpde1}, \eqref{eq:BRpde2}, \eqref{eq:Crosstalk_Pde1}-\eqref{eq:Crosstalk_Pde5}  exhibit similar dynamics,   although spatially heterogenous steady-states are obtained for  BR and GA concentrations, Fig.~\ref{fig:Crosstalk_Pde_2}. \\
\\
To examine whether disturbance of one signalling pathway had a greater effect on the other, we modelled the effects of overexpression of both BR and GA. Overexpression of BR resulted in large changes in the dynamics of the components of the BR signalling pathway only, and overexpression of GA resulted in large changes in the dynamics of the components of the GA signalling pathway, and small changes in the dynamics of the components of the BR signalling pathway, Fig.~\ref{fig:Crosstalk-Hormonal_Overexpression}. In addition  variation of $\beta_{z}$ and $\gamma_{z}$ led to large changes in the dynamics of the BR pathway but  only short term changes in the dynamics  of the GA pathway. From this we concluded that in  general perturbations in the GA signalling pathway have greater influence on the BR signalling pathway than vice versa.\\
\\
To examine the effects of mutations in the BR signalling pathway on the GA pathway, models  \eqref{eq:GAode1},  (\ref{eq:CrosstalkODE3})-(\ref{eq:CrosstalkODE2}) and \eqref{eq:BRpde1}, \eqref{eq:BRpde2}, \eqref{eq:Crosstalk_Pde1}-\eqref{eq:Crosstalk_Pde5} were solved numerically  considering different values for the parameters  $\delta_z$ and $\rho_z$ governing BZR phosphorylation. For the components of the GA signalling pathway solutions to the ODE model showed different short-term behaviour  but tended to the same steady state, whereas  the components of the BR signalling pathway tended to  different steady state. If varying $\beta_{z}$ and $\gamma_{z}$ such that $\beta_{z}/\gamma_{z}$ was large in the case when we have damped oscillation in the components of the BR signalling pathway, the dynamics of some of the components of the GA signalling pathway were significantly altered but the dynamics of the components of the BR signalling pathway were unchanged, Fig.~\ref{fig:Crosstalk-Osc}.  For the spatially heterogeneous model \eqref{eq:BRpde1}, \eqref{eq:BRpde2}, \eqref{eq:Crosstalk_Pde1}-\eqref{eq:Crosstalk_Pde5}, in the oscillatory parameter regime oscillations in the BR signalling pathway propagated into the GA signalling pathway, Fig.~\ref{fig:Crosstalk_Pde}. From this we conclude that only in the case where there is disturbance in both the BR signalling pathway and the mechanism of crosstalk between the BR and GA signalling pathways the dynamics of molecules in the BR signalling pathway have greater influence on the dynamics of molecules in the GA signalling pathway. This, together with the results discussed in the previous paragraph, suggests that under normal crosstalk the stability of individual signalling pathways is maintained even if there are disturbances in the other pathway.\\
\\
To conclude, our analysis of  new mathematical models for BR signalling pathway and the crosstalk between the BR and GA signalling pathways, derived here, provide a better understanding of the dynamics of signalling processes, dependent on model parameters and spatial heterogeneity. Our results for the BR signalling pathway highlight its stability, and suggest that this stability is particularly dependent on the mechanisms governing the phosphorylation state of BZR and the subcellular locations of these processes.  Our results suggest that direct interaction between BZR and DELLA exerts a larger influence on the dynamics of the BR and GA signalling pathways than BZR-mediated biosynthesis of GA, and hence may be the primary mechanism of crosstalk between the two pathways. Our analysis indicates that during normal plant function, the GA signalling pathway exerts more influence over the BR signalling pathway than BR on GA, but mutations in the BR signalling pathway cause BR signalling to exert some short time influence over  GA pathway and  greater influence  when coupled with disturbances in the crosstalk mechanism.  Both BR signalling and crosstalk between BR and GA signalling are important for plant growth and development. Our modelling and analysis results also can be used to model the interactions between growth and signalling processes in order to better understand the influence of the BR and GA signalling pathways on growth and developmental processes in plants.

\section*{Acknowledgments} 
H.R.~Allen   gratefully acknowledges the support of an EPSRC DTA PhD studentship,   research of M.~Ptashnyk was partially supported by the EPSRC First Grant EP/K036521/1. 
\section*{References}
\bibliography{./References}
\appendix
\section{Validation of model (\ref{eq:redBR}) against experimental data}\label{app:constraints}
To improve the optimisation process for parameter estimation we calculated additional constraints for some parameter values using experimental data.\\
\\
We first calculated the level of endogenous BL using a value of 5.26 ng g\textsuperscript{-1} (the midpoint of the reported range in mature plants) which was obtained from \cite{Wang_L_2014}. In order to convert this into an appropriate value in  units of $\mu$M, we first need the density of plant material. We estimate this by noting that the density of cellulose is 1.5 kg L\textsuperscript{-1} \cite{Gibson_L_2012}, and that cellulose constitutes approximately 50\% of plant material \cite{Piotrowski_S_2011}. For the outstanding amount, we make an assumption that rest of the material is largely accounted for by cytoplasm, which we assume to have an approximately equal density to water, i.e.~1 kg L\textsuperscript{-1}. From this we get an estimate of 1.25 kg  L\textsuperscript{-1} for the density of plant material. Finally the molecular weight of BL is 480.686 g mol\textsuperscript{-1} \cite{BLweb}. From this data we calculate the endogenous BR concentration as
\begin{IEEEeqnarray*}{rCl}
\frac{\rm{quantity} \times \rm{density}}{\rm{molecular\> weight}} & = & \frac{(5.26\times 10^{-6}\rm{g\> kg}^{-1})\times (1.25\> \rm{kg\> L}^{-1})}{480.686\> \rm{g\> mol}^{-1}}\\
& = & 0.0137\times 10^{-6} \rm{mol\> L}^{-1} = 13.7\times 10^{-3}\> \mu\rm{M}.
\end{IEEEeqnarray*}
We denote this new constant by $[BR]_{0}$, and  assume it to be the steady state concentration of BR.

\noindent We write $\mu_{b}$ in terms of $\alpha_{b}$, $\theta_{b}$, $\delta_{z}$, $Z_{tot}$, $\rho_{z}$, $\theta_{z}$ and $h_{z}$ by considering the steady state solution of (\ref{eq:redBR}):
\begin{IEEEeqnarray*}{rCl}
\mu_{b} & = & \frac{\alpha_{b}}{[BR]_{0}\left(1 + \left(\theta_{b}\frac{Z_{tot}\delta_{z}[BKI1]_{0}(1+(\theta_{z}[BKI1]_{0})^{h_{z}})}{\rho_{z} + \delta_{z}[BKI1]_{0}(1+(\theta_{z}[BKI1]_{0})^{h_{z}})}\right)^{h_{b}}\right)},
\end{IEEEeqnarray*}
where $[BKI1]_{0}$ denotes the steady state concentration of BKI1, which can be calculated using $[BR]_{0}$.\\
\\
We adopt two approaches for the calculation of $[BKI1]_{0}$ and $\beta_{k}$. First we consider the BR.BRI1 dissociation constant, its value to be dependent on the steady state concentrations of BR, BRI1.BKI1, and BR.BRI1 such that
\begin{equation}\label{eq:K_d}
K_{d} = \frac{r_{k}^{*}b^{*}}{r_{b}^{*}},
\end{equation}
 where $K_{d}$ denotes the BR.BRI1 dissociation constant, and $r_{k}^{*}$, $b^{*}$, and $r_{b}^{*}$ denote the steady state concentrations of BRI1.BKI1, BR, and BR.BRI1 respectively. Taking $b^{*} = [BR]_{0}$, and rewriting $r_{k}^{*}$, $r_{b}^{*}$ in terms of $k^{*}$, the steady state concentration of BKI1, \eqref{eq:K_d} may be rewritten as
\begin{equation}
K_{d}(R_{tot}-K_{tot}+k^{*}) = (K_{tot}-k^{*})[BR]_{0}.
\end{equation}
Thus we may solve for $k^{*}$ to obtain an expression for $[BKI1]_{0}$:
\begin{equation}
[BKI1]_{0} = \frac{K_{tot}[BR]_{0} - K_{d}(R_{tot}-K_{tot})}{K_{d} + [BR]_{0}},
\end{equation}
 and by taking $K_{d} = 11.2$ nM, the midpoint of values reported in \cite{Wang_ZY_2001}, calculate that $[BKI1]_{0} = 34.1\times 10^{-3}$ $\mu$M.
Using this value for $[BKI1]_{0}$ in conjunction with the steady state solution of the second equation in  \eqref{eq:redBR}, we write
\begin{IEEEeqnarray*}{rCl}
\beta_{k} & = & \frac{(K_{tot}-[BKI1]_{0})[BR]_{0}}{(R_{tot}-K_{tot}+[BKI1]_{0})[BKI1]_{0}}\beta_{b}
\\
& = & 0.329\beta_{b},
\end{IEEEeqnarray*}
 giving us an expression for $\beta_{k}$ in terms of $\beta_{b}$.\\
For the second method of calculating $[BKI1]_{0}$ and $\beta_{k}$, we considered that BRI1 has a state where both BR and BKI1 are bound to it, i.e.~the full receptor-based dynamics would look like
\begin{equation}\label{eq:Full_Receptor}
 \begin{aligned}
  \frac{db}{dt} &= -\beta_{1}br_{k} + \gamma_{1}r_{bk}
  \\
  \frac{dk}{dt} &= -\beta_{2}kr_{b} + \gamma_{2}r_{bk}
  \\
  \frac{dr_{k}}{dt} &= -\beta_{1}br_{k} + \gamma_{1}r_{bk}
  \\
  \frac{dr_{b}}{dt} &= -\beta_{2}kr_{b} + \gamma_{2}r_{bk}
  \\
  \frac{dr_{bk}}{dt} &= \beta_{1}br_{k} + \beta_{2}kr_{b} - \gamma_{1}r_{bk} - \gamma_{2}r_{bk}
 \end{aligned}
\end{equation}
where the variable $r_{bk}$ denotes the concentration of BR.BRI1.BKI1, and $\beta_{1}$ and $\beta_{2}$ ($\gamma_{1}$ and $\gamma_{2}$) represent the association (dissociation) rates of BR and BRI1.BKI1, and BKI1 and BR.BRI1 (BR.BRI1.BKI1) respectively. Thus the dissociation constant of BR and BRI1.BKI1, and the dissociation constant of BKI1 and BR.BRI1, denoted $K_{d}$ and $K_{m}$ respectively, may be defined by
\begin{equation*}
\begin{aligned}
K_{d} &  = \frac{\gamma_{1}}{\beta_{1}},  \qquad 
K_{m}  & = \frac{\gamma_{2}}{\beta_{2}}.
\end{aligned}
\end{equation*}
Assuming $r_{bk}$ to be constant allows us to rewrite \eqref{eq:Full_Receptor} as
\begin{equation}
 \begin{aligned}
  \frac{db}{dt} &= -\frac{\gamma_{2}\beta_{1}}{\gamma_{1}+\gamma_{2}}br_{k} + \frac{\gamma_{1}\beta_{2}}{\gamma_{1}+\gamma_{2}}kr_{b},
  \\
  \frac{dk}{dt} &= -\frac{\gamma_{1}\beta_{2}}{\gamma_{1}+\gamma_{2}}kr_{b} + \frac{\gamma_{2}\beta_{1}}{\gamma_{1}+\gamma_{2}}br_{k},
  \\
  \frac{dr_{k}}{dt} &= -\frac{\gamma_{2}\beta_{1}}{\gamma_{1}+\gamma_{2}}br_{k} + \frac{\gamma_{1}\beta_{2}}{\gamma_{1}+\gamma_{2}}kr_{b},
  \\
  \frac{dr_{b}}{dt} &= -\frac{\gamma_{1}\beta_{2}}{\gamma_{1}+\gamma_{2}}kr_{b} + \frac{\gamma_{2}\beta_{1}}{\gamma_{1}+\gamma_{2}}br_{k},
 \end{aligned}
\end{equation}
 which means that we may not only define $\beta_{k}$ and $\beta_{b}$ as 
\begin{equation}
 \begin{aligned}
  \beta_{k} &= \frac{\gamma_{1}\beta_{2}}{\gamma_{1}+\gamma_{2}},
  \qquad 
  \beta_{b} &= \frac{\gamma_{2}\beta_{1}}{\gamma_{1}+\gamma_{2}},
 \end{aligned}
\end{equation}
  but that we may further divide $\beta_{k}$ through by $\beta_{b}$ to show that
\begin{equation*}
\frac{K_{d}}{K_{m}} = \frac{\beta_{k}}{\beta_{b}},
\end{equation*}
and hence
\begin{equation*}
\beta_{k} = \frac{K_{d}}{K_{m}}\beta_{b}.
\end{equation*}
 A value of $K_{m} = 4.28\>\mu$M was reported in \cite{Wang_J_2014}, giving an expression $\beta_{k} = \left(2.62\times 10^{-3}\right) \beta_{b}$. The steady state solution of the second equation in  \eqref{eq:redBR} may thus be written as
\begin{equation}
\left(k^{*}\right)^{2} + \left( (R_{tot}-K_{tot}) + \frac{K_{m}}{K_{d}}[BR]_{0}\right)k^{*} - \frac{K_{m}}{K_{d}}K_{tot}[BR]_{0} = 0,
\end{equation}
 which may be solved for $k^{*}$, and hence an expression for $[BKI1]_{0}$ is found
\begin{equation*}
\begin{aligned}
[BKI1]_{0} &\> =\>  \frac{1}{2}\sqrt{\left((R_{tot}-K_{tot})+\frac{K_{m}}{K_{d}}[BR]_{0}\right)^{2} + 4\frac{K_{m}}{K_{d}}K_{tot}[BR]_{0}}\\
 &\>\>\>\>\>-\>\frac{1}{2}\left((R_{tot}-K_{tot})+\frac{K_{m}}{K_{d}}[BR]_{0}\right),
 \end{aligned}
\end{equation*}
 giving $[BKI1]_{0} = 61.3\times 10^{-3}\>\mu$M.\\
\\
In order to check whether a more accurate estimation of the parameters was given by fitting a non-dimensional model which does not scale time by one of the fitting variables:
\begin{equation}\label{eq:ndfit}
\begin{aligned}
\frac{d\bar{b}}{d\bar{t}} &= \bar{\beta}_{k}(\kappa - 1 + \bar{k})\bar{k} - \bar{\beta}_{b}(1-\bar{k})\bar{b} + \bar{\mu}_{b}\left(\frac{1}{1+(\bar{\theta}_{b}\bar{z})^{h_{b}}} - \bar{b}\right),
\\
\epsilon\frac{d\bar{k}}{d\bar{t}} &= \bar{\beta}_{b}(1-\bar{k})\bar{b} - \bar{\beta}_{k}(\kappa - 1 + \bar{k})\bar{k},
\\
\frac{d\bar{z}}{d\bar{t}} &= \bar{\delta}_{z}(1-\bar{z})\bar{k} - \bar{\rho}_{z}\frac{\bar{z}}{1+(\bar{\theta}_{z}\bar{z})^{h_{z}}}, 
\end{aligned}
\end{equation}
 where
\begin{center}
\begin{tabular}{lll}
$\bar{\beta}_{b} = 60\beta_{b}K_{tot}$, & $\bar{\beta}_{k} = 60\frac{\beta_{k}(K_{tot})^{2}\mu_{b}}{\alpha_{b}}$, & $\kappa = \frac{R_{tot}}{K_{tot}}$,\\
$\bar{\mu}_{b} = 60\mu_{b}$, & $\bar{\theta}_{b} = \theta_{b}Z_{tot}$, & $\epsilon = \frac{K_{tot}\mu_{b}}{\alpha_{b}}$,\\
$\bar{\delta}_{z} = 60\delta_{z}K_{tot}$, & $\bar{\rho}_{z} = 60\rho_{z}$, & $\bar{\theta}_{z} = \theta_{z}K_{tot}$,
\end{tabular}
\end{center}
\eqref{eq:ndfit} was also fitted to the experimental data. The parameters whose dimensional values could be found had close agreement with those found fitting model \eqref{eq:redBR}, Tables \ref{tab:nd1} and \ref{tab:nd2}, and so it is better to consider the fitting of the dimensional model \eqref{eq:redBR} to the experimental data.
\begin{table}[h!]\centering
\begin{tabular}{ccc|ccc}
\toprule
Constant & Value & Units & Constant & Value & Units\\
\midrule
$\beta_{b}$ & 8.33 & $\mu$M\textsuperscript{-1} min\textsuperscript{-1} & $\beta_{k}$ & 2.73 & $\mu$M\textsuperscript{-1} min\textsuperscript{-1}\\
$\alpha_{b}$ & 0.27 & $\mu$M min\textsuperscript{-1} & $\mu_{b}$ & 3.58 & min\textsuperscript{-1}\\
$\delta_{z}$ & $9.97\times 10^{-4}$ & $\mu$M\textsuperscript{-1}min\textsuperscript{-1} & $\rho_{z}$ & $1.30\times 10^{-4}$ & min\textsuperscript{-1}\\
$\theta_{z}$ & 4.05 & $\mu$M\textsuperscript{-1} & $h_{z}$ & 6 & \\
\bottomrule
\end{tabular}
\caption{Transformed parameters when fitting the non-dimensionalised model \eqref{eq:ndfit} against experimental data from \cite{Tanaka_K_2005}, and using 
\eqref{eq:mub} and \eqref{eq:betak1}. Due to the nature of the non-dimensionalisation, the dimensional parameters $\theta_{b}$ and $Z_{tot}$ cannot be found.}
\label{tab:nd1}
\end{table}

\begin{table}[h!]\centering
\begin{tabular}{ccc|ccc}
\toprule
Constant & Value & Units & Constant & Value & Units\\
\midrule
$\beta_{b}$ & 8.06 & $\mu$M\textsuperscript{-1} min\textsuperscript{-1} & $\beta_{k}$ & $2.11\times 10^{-2}$ & $\mu$M\textsuperscript{-1} min\textsuperscript{-1}\\
$\alpha_{b}$ & 0.27 & $\mu$M min\textsuperscript{-1} & $\mu_{b}$ & 3.68 & min\textsuperscript{-1}\\
$\delta_{z}$ & $1.77\times 10^{-3}$ & $\mu$M\textsuperscript{-1}min\textsuperscript{-1} & $\rho_{z}$ & $4.33\times 10^{-4}$ & min\textsuperscript{-1}\\
$\theta_{z}$ & 3.98 & $\mu$M\textsuperscript{-1} & $h_{z}$ & 6 & \\
\bottomrule
\end{tabular}
\caption{Transformed parameters when fitting the non-dimensionalised model \eqref{eq:ndfit} against experimental data from \cite{Tanaka_K_2005}, and using \eqref{eq:mub} and \eqref{eq:betak2}. Due to the nature of the non-dimensionalisation, the dimensional parameters $\theta_{b}$ and $Z_{tot}$ cannot be found.}
\label{tab:nd2}
\end{table}

\begin{table}[h!]\centering
\resizebox{\linewidth}{!}{
\begin{tabular}{c|cccccccc}
\toprule
 & WT1 & BRZ1 & WT2 & BL2 & MUT3 & BRZ3 & MUT4 & BL4\\
\midrule
$\beta_{b}$ & 8.33 & 8.49 & 8.50 & 8.42 & $2.36\times 10^{-3}$ & $2.46\times 10^{-3}$ & $2.43\times 10^{-3}$ & $2.31\times 10^{-3}$\\
$\beta_{k}$ & 2.73 & 2.72 & 2.72 & 2.80 & 3.26$\times 10^{-12}$ & 3.10$\times 10^{-12}$ & 2.95$\times 10^{-12}$ & 3.09$\times 10^{-12}$\\
$\alpha_{b}$ & 0.27 & 6.29$\times 10^{-2}$ & 6.29$\times 10^{-2}$ & 6.08$\times 10^{-2}$ & 0.27 & 0.27 & 0.28 & 0.27\\
$\theta_{b}$ & 1.96 & 2.01 & 2.01 & 2.11 & 1.96 & 1.96 & 1.92 & 1.88\\
$\mu_{b}$ & 3.58 & 3.54 & 3.54 & 13.98 & 3.50 & 3.40 & 3.40 & 3.57\\
$\delta_{z}$ & 1.02$\times 10^{-3}$ & 1.05$\times 10^{-3}$ & 1.07$\times 10^{-3}$ & 1.10$\times 10^{-3}$ & 1.02$\times 10^{-3}$ & 1.01$\times 10^{-3}$ & 1.01$\times 10^{-3}$ & 1.06$\times 10^{-3}$\\
$Z_{tot}$ & 2.65 & 2.72 & 2.72 & 2.75 & 2.65 & 2.58 & 2.58 & 2.58\\
$\rho_{z}$ & 1.33$\times 10^{-4}$ & 1.39$\times 10^{-4}$ & 1.32$\times 10^{-4}$ & 1.32$\times 10^{-4}$ & 1.34$\times 10^{-4}$ & 1.33$\times 10^{-4}$ & 1.29$\times 10^{-4}$ & 1.22$\times 10^{-4}$\\
$\theta_{z}$ & 3.95 & 4.10 & 4.30 & 4.06 & 3.76 & 3.93 & 3.91 & 3.91\\
$h_{z}$ & 6.02 & 5.82 & 5.52 & 5.65 & 5.95 & 5.75 & 5.76 & 5.76\\
\bottomrule
\end{tabular}}
\caption{Fitted model parameters obtained by fitting model \eqref{eq:redBR} under conditions \eqref{eq:betak1},\eqref{eq:mub} to experimental data for each growth condition. Here WT1 corresponds to the case where wild-type plants were grown under control conditions, BZR1 to the case where wild-type plants were grown in a medium containing BRZ, WT2 to the case where wild-type plants had been grown in a medium containing BRZ 5$\mu$M for two days, BL2 to the case where wild-type plants had been grown in a medium containing BRZ 5$\mu$M for two days and then supplemented with 0.1$\mu$M BL, MUT3 corresponds to the case where mutant plants were grown under control conditions, BZR4 to the case where mutant plants were grown in a medium containing BRZ, MUT4 to the case where mutant plants had been grown in a medium containing BRZ 5$\mu$M for two days, BL4 to the case where mutant plants had been grown in a medium containing BRZ 5$\mu$M for two days and then supplemented with 0.1$\mu$M BL.}
\label{tab:All_Params}
\end{table}

\section{Linearised Stability Analysis for the Model (\ref{eq:BRode}) of the BR Signalling Pathway}\label{app:A}

\subsection{Proof of positive invariance of $M= [0, 1 + \beta_k\kappa]\times[0,1]\times[0,1]$.}\label{Proof_Theo_1}

To prove that $M$ is positive invariant, we first  consider $M_{1} := \{(b,k,z)\in\mathbb{R}^{3}\; | \; b,k,z\geq 0\}$ and show the non-negativity of solutions of (\ref{eq:BRode}).  Restricting $b=0$, we obtain that $\textbf{f}\cdot(1,0,0) = \beta_{k}(\kappa -1+k)k+\frac{1}{1+(\theta_{b}z)^{h_{b}}}$, and similarly that $\textbf{f}\cdot(0,1,0) = \frac{\beta_{b}}{\epsilon}b$ restricting $k=0$, and $\textbf{f}\cdot(0,0,1) = \delta_{z}k$ restricting $z=0$. Since $b,k,z\geq 0$ we also obtain  that $\textbf{f}(u)\cdot\textbf{n}(u)\geq 0\> \forall u\in \partial M_{1}$ and  any flow that starts in $M_{1}$ must remain in $M_{1}$ for all $t$, and hence $M_{1}$ is a positive invariant set. 
To show  boundedness of solutions of (\ref{eq:BRode}), consider $M_{2} := \{(b,k,z)\in\mathbb{R}^{3} \; |\;  b\leq 1+\beta_{k}\kappa, k\leq 1, z\leq 1\}$. Restricting $b=1+\beta_{k}\kappa$, we obtain  that $\textbf{f}\cdot(-1,0,0) = (\beta_{b}(1-k)+1)(1+\beta_{k}\kappa)-\beta_{k}(\kappa-1+k)k-\frac{1}{1+(\theta_{b}z)^{h_{b}}}$, and similarly that $\textbf{f}\cdot(0,-1,0) = \frac{\beta_{k}}{\epsilon}\kappa$ restricting $k=1$, and $\textbf{f}\cdot(0,0,-1) = \frac{\rho_{z}}{1+(\theta_{z}k)^{h_{z}}}$ restricting $z=1$. Note that $\textbf{f}\cdot(-1,0,0)$ has no critical points for $0\leq k\leq 1$, and has value $\beta_{b}(1+\beta_{k}\kappa)$ for $k=0$, and 0 for $k=1$. Thus $\textbf{f}(u)\cdot\textbf{n}(u)\geq 0\> \forall (u)\in \partial M_{2}$ and  any flow that starts in $M_{2}$ must remain in $M_{2}$ for all $t$, hence $M_{2}$ is a positive invariant set. 
Thus $M := M_{1}\cap M_{2}$ is a positive invariant region  \cite{Amann_H_1990}.

\subsection{Calculations for the proof of Theorem \ref{thm:steady}}\label{Calc_Theo_2}
Any steady state $(b^{*},k^{*},z^{*})$ of (\ref{eq:BRode}) satisfies
\begin{IEEEeqnarray*}{rCl}
0 & = & \frac{1}{1+(\theta_{b}z^{*})^{h_{b}}} - b^{*},\label{eq:SS1}
\\
0 & = & \beta_{b}(1-k^{*})b^{*} - \beta_{k}(\kappa -1+k^{*})k^{*},\label{eq:SS2}
\\
0 & = & -\rho_{z}\frac{z^{*}}{1+(\theta_{z}k^{*})^{h_{z}}} + \delta_{z}(1-z^{*})k^{*}.\label{eq:SS3}
\end{IEEEeqnarray*}
 Using simple algebraic manipulation on the system above, we find $b^{*}$ and $z^{*}$ in terms of $k^{*}$
\begin{IEEEeqnarray*}{rCl}
b^{*} & = & \dfrac{1}{1 + \left(\theta_{b}\dfrac{\delta_{z}k^{*}\left(1+(\theta_{z}k^{*})^{h_{z}}\right)}{\rho_{z} + \delta_{z}k^{*}\left(1+(\theta_{z}k^{*})^{h_{z}}\right)}\right)^{h_{b}}} = \dfrac{\beta_{k}\left(\kappa -1+k^{*}\right)k^{*}}{\beta_{b}\left(1-k^{*}\right)},
\\
z^{*} & = & \dfrac{\delta_{z}k^{*}\left(1+(\theta_{z}k^{*})^{h_{z}}\right)}{\rho_{z} + \delta_{z}k^{*}\left(1+(\theta_{z}k^{*})^{h_{z}}\right)}.
\end{IEEEeqnarray*}
Then $k^{*}$ is defined as a root of the following non-linear function
\begin{equation*}
 \begin{aligned}
  g(k^{*}) : &= \beta_{k}\left(\kappa -1+k^{*}\right)k^{*}\left(1 + \left(\dfrac{\theta_{b}\delta_{z}k^{*}\left(1+(\theta_{z}k^{*})^{h_{z}}\right)}{\rho_{z} + \delta_{z}k^{*}\left(1+(\theta_{z}k^{*})^{h_{z}}\right)}\right)^{h_{b}}\right)
  \\
  &\>\>\>\>\> -\> \beta_{b}\left(1-k^{*}\right).
 \end{aligned}
\end{equation*}
The derivative of $g$
\begin{equation*}
 \begin{aligned}
  \frac{dg}{dk^{*}} &= \beta_{b} + \beta_{k}(\kappa -1+2k^{*})\left(1+\left(\theta_{b}\frac{\delta_{z}k^{*}\left(1+(\theta_{z}k^{*})^{h_{z}}\right)}{\rho_{z}+\delta_{z}k^{*}\left(1+(\theta_{z}k^{*})^{h_{z}}\right)}\right)^{h_{b}}\right)
  \\
  & \>\>\>\> +\> \beta_{k}(\kappa -1+k^{*})k^{*}\times
  \\
  & \>\>\>\>\times \left(\frac{h_{b}\theta_{b}\delta_{z}\rho_{z}\left(1+(1+h_{z})(\theta_{z}k^{*})^{h_{z}}\right)}{\left(\rho_{z}+\delta_{z}k^{*}(1+(\theta_{z}k^{*})^{h_{z}})\right)^{2}}\left(\frac{\theta_{b}\delta_{z}k^{*}\left(1+(\theta_{z}k^{*})^{h_{z}}\right)}{\rho_{z}+\delta_{z}k^{*}\left(1+(\theta_{z}k^{*})^{h_{z}}\right)}\right)^{h_{b}-1}\right)
 \end{aligned}
\end{equation*}
 which is positive for all $p\in P$ and $k^{*}\in [0,1]$.

\subsection{Characteristic Equation}\label{Char_eq}
The Jacobian for system \eqref{eq:BRode} evaluated at the  steady state  $(b^\ast, k^\ast, z^\ast)$ is given by 
\begin{equation}\label{eq:Jacobian}
\small  \begin{pmatrix}
-\beta_{b}(1-k^{*})-1 & \beta_{k}(\kappa -1+2k^{*})+\beta_{b}b^{*} & -\frac{h_{b}\theta_{b}(\theta_{b}z^{*})^{h_{b}-1}}{\left(1+(\theta_{b}z^{*})^{h_{b}}\right)^{2}}\\
\frac{\beta_{b}(1-k^{*})}{\epsilon} & -\frac{\beta_{b}b^{*}}{\epsilon}-\frac{\beta_{k}(\kappa-1+2k^{*})}{\epsilon} & 0\\
0 & \delta_{z}(1-z^{*}) + \frac{\rho_{z}h_{z}\theta_{z}z^{*}(\theta_{z}k^{*})^{h_{z}-1}}{\left(1+(\theta_{z}k^{*})^{h_{z}}\right)^{2}} & -\delta_{z}k^{*}-\frac{\rho_{z}}{1+(\theta_{z}k^{*})^{h_z}}
\end{pmatrix} \hspace{-0.1 cm}
\end{equation}

The coefficients of the characteristic equation associated to Jacobian $J$ \eqref{eq:Jacobian} are defined as follows
\begin{IEEEeqnarray*}{rCl}
a_{2} & = & \beta_{b}(1-k^{*})+1+\frac{\beta_{b}}{\epsilon}b^{*}+\frac{\beta_{b}}{\epsilon}(\kappa -1+2k^{*})+\delta_{z}k^{*}+\frac{\rho_{z}}{1+(\theta_{z}k^{*})^{h_{z}}},
\\
a_{1} & = & \frac{1}{\epsilon}\left(\beta_{b}b^{*}+\beta_{k}(\kappa-1+2k^{*})\right)\left(\beta_{b}(1-k^{*})+2+\delta_{z}k^{*}+\frac{\rho_{z}}{1+(\theta_{z}k^{*})^{h_{z}}}\right)\\
&& +\>\left(\beta_{b}(1-k^{*})+1\right)\left(\delta_{z}k^{*}+\frac{\rho_{z}}{1+(\theta_{z}k^{*})^{h_{z}}}\right),
\\
a_{0} & = & \frac{1}{\epsilon}\beta_{b}(1-k^{*})\frac{\theta_{b}h_{b}(\theta_{b}z^{*})^{h_{b}-1}}{(1+(\theta_{b}z^{*})^{h_{b}})^{2}}\left(\delta_{z}(1-z^{*})+\frac{\rho_{z}z^{*}\theta_{z}h_{z}(\theta_{z}k^{*})^{h_{z}-1}}{(1+(\theta_{z}k^{*})^{h_{z}})^{2}}\right)\\
&& +\>\frac{1}{\epsilon}(\beta_{b}b^{*}+\beta_{k}(\kappa -1+2k^{*}))\left(\delta_{z}k^{*}+\frac{\rho_{z}}{1+(\theta_{z}k^{*})^{h_{z}}}\right),
\end{IEEEeqnarray*}

\noindent where $(b^{*}, k^{*}, z^{*})$ is a steady state of the system \eqref{eq:BRode}.

\section{Non-dimensionalisation of Model (\ref{eq:BRpde1})-(\ref{eq:BRpde3}) of the BR Signallng Pathway}\label{app:BRpde}
Since in the PDE-ODE model (\ref{eq:BRpde1})-(\ref{eq:BRpde3}) some of the variables are defined on the boundaries we must consider them in units of   $mol/m^{2}$ instead of $mol/m^{3}$, compared to the ODE model \eqref{eq:redBR}. To adapt the units, we scale by the length of the cell segment, i.e. $\tilde{r}_{k} = l_{c}r_{k}$, $\tilde{r}_{k} = l_{c}r_{k}$ and $\tilde{z} = l_{c}z$. To preserve the balance of units in the system, we must scale the following parameters $\tilde{\alpha}_{b} = l_{c}\alpha_{b}$, $\tilde{\theta}_{b} = \theta_{b}/l_{c}$, $\tilde{\delta}_{z} = l_{c}\delta_{z}$.\\
Applying this scaling and non-dimensionalising via $t = t^{*}\bar{t}$, $x = l_{c}y$, $b = b^{*}\bar{b}$, $k = k^{*}\bar{k}$, $\tilde{r}_{k} = r_{k}^{*}\bar{r}_{k}$, $\tilde{r}_{b} = r_{b}^{*}\bar{r}_{b}$, $\tilde{z} = z^{*}\bar{z}$, $z_{p} = z_{p}^{*}\bar{z}_{p}$, system (\ref{eq:BRpde1})-(\ref{eq:BRpde3}) is transformed to
\begin{IEEEeqnarray}{rCl}
\left.
 \begin{IEEEeqnarraybox}[
   \IEEEeqnarraystrutmode
  ][c]{rCl}
\partial_{\bar{t}}\bar{b} & = & \frac{D_{b}t^{*}}{(x^{*})^{2}}\partial_{y}^2\bar{b} - \mu_{b}t^{*}\bar{b}
\\
\partial_{\bar{t}}\bar{k} & = & \frac{D_{k}t^{*}}{(x^{*})^{2}} \partial_{y}^2\bar{k}
\\
\partial_{\bar{t}}\bar{z}_{p} & = & \frac{D_{z}t^{*}}{(x^{*})^{2}} \partial_{y}^2\bar{z}_{p}
\end{IEEEeqnarraybox}
 \, \right\}
 \text{in $\bar{\Omega}_{c}$,}
\end{IEEEeqnarray}

\begin{IEEEeqnarray}{rCl}
\left.
 \begin{IEEEeqnarraybox}[
   \IEEEeqnarraystrutmode
  ][c]{rCl}
-\frac{D_{b}t^{*}}{(x^{*})^{2}}\partial_{y}\bar{b} & = & \frac{\beta_{k}t^{*}r_{b}^{*}k^{*}}{x^{*}b^{*}}\bar{r}_{b}\bar{k} - \frac{\beta_{b}t^{*}r_{k}^{*}}{x^{*}}\bar{b}\bar{r}_{k}
\\
-\frac{D_{k}t^{*}}{(x^{*})^{2}}\partial_{y}\bar{k} & = & \frac{\beta_{b}t^{*}r_{k}^{*}b^{*}}{x^{*}k^{*}}\bar{b}\bar{r}_{k} - \frac{\beta_{k}t^{*}r_{b}^{*}}{x^{*}}\bar{r}_{b}\bar{k}
\\
\frac{d\bar{r}_{k}}{d\bar{t}} & = & \beta_{k}k^{*}t^{*}\bar{r}_{b}\bar{k} - \beta_{b}b^{*}t^{*}\bar{b}\bar{r}_{k}
\\
\frac{d\bar{r}_{b}}{d\bar{t}} & = & \beta_{b}b^{*}t^{*}\bar{b}\bar{r}_{k} - \beta_{k}k^{*}t^{*}\bar{r}_{b}\bar{k}
\\
-\frac{D_{z}t^{*}}{(x^{*})^{2}}\partial_{y}\bar{z}_{p} & = & 0
\end{IEEEeqnarraybox}
 \, \right\}
 \text{on $\bar{\Gamma}_{c}$,}
\end{IEEEeqnarray}

\begin{IEEEeqnarray}{rCl}
\left.
 \begin{IEEEeqnarraybox}[
   \IEEEeqnarraystrutmode
  ][c]{rCl}
\frac{D_{b}t^{*}}{(x^{*})^{2}}\partial_{y}\bar{b} & = & \frac{l_{c}\alpha_{b}t^{*}}{x^{*}b^{*}(1 + (\frac{\theta_{b}z^{*}\bar{z}}{l_{c}})^{h_{b}})}
\\
\frac{D_{k}t^{*}}{(x^{*})^{2}}\partial_{y}\bar{k} & = & 0
\\
\frac{d\bar{z}}{d\bar{t}} & = & \frac{l_{c}\delta_{z}t^{*}p^{*}k^{*}}{z^{*}}\bar{z}_{p}\bar{k} - \rho_{z}t^{*}\frac{\bar{z}}{1+(\theta_{z}k^{*}\bar{k})^{h_{z}}}
\\
\frac{D_{z}t^{*}}{(x^{*})^{2}}\partial_{y}\bar{z}_{p} & = & -\frac{l_{c}\delta_{z}t^{*}k^{*}}{x^{*}}\bar{z}_{p}\bar{k} + \frac{\rho_{z}t^{*}z^{*}}{x^{*}p^{*}}\frac{\bar{z}}{1+(\theta_{z}k^{*}\bar{k})^{h_{z}}}
\end{IEEEeqnarraybox}
 \, \right\}
 \text{on $\bar{\Gamma}_{n}$.}
\end{IEEEeqnarray}
Take $t^{*}=1/\mu_{b}$, $x^{*}=l_{c}$, $b^{*}=\alpha_{b}/\mu_{b}$, $k^{*}=K_{tot}$, $r_{k}^{*}=l_{c}R_{tot}$, $r_{b}^{*}=l_{c}R_{tot}$, $z^{*}=l_{c}Z_{tot}$ and $p^{*}=Z_{tot}$. Then
\begin{IEEEeqnarray}{rCl}
\left.
 \begin{aligned}
\partial_{\bar{t}}\bar{b}  =\,  & \bar{D}_{b}\partial_{y}^2\bar{b} - \bar{b}
\\
\partial_{\bar{t}}\bar{k}  =\,  & \bar{D}_{k} \partial_{y}^2\bar{k}
\\
\partial_{\bar{t}}\bar{z}_{p}  =\,  & \bar{D}_z \partial_{y}^2 \bar{z}_{p}
\end{aligned}
 \, \right\}
 \text{ in  } \bar{\Omega}_{c}.
\end{IEEEeqnarray}

\begin{IEEEeqnarray}{rCl}
\left.
 \begin{aligned}
-\bar{D}_{b}\partial_{y}\bar{b}  =\,  & \bar{\beta}_{k}\bar{r}_{b}\bar{k} - \bar{\beta}_{b}\bar{b}\bar{r}_{k}
\\
-\bar{D}_{k}\partial_{y}\bar{k}  =\,  & \epsilon_{1}\left(\bar{\beta}_{b}\bar{r}_{k}\bar{b} - \bar{\beta}_{k}\bar{r}_{b}\bar{k}\right)
\\
\frac{d\bar{r}_{k}}{d \bar{t}}  =\,  & \epsilon_{2}\left(\bar{\beta}_{k}\bar{r}_{b}\bar{k} - \bar{\beta}_{b}\bar{b}\bar{r}_{k}\right)
\\
\frac{d\bar{r}_{b}}{d \bar{t}}  =\,  & \epsilon_{2}\left(\bar{\beta}_{b}\bar{b}\bar{r}_{k} - \bar{\beta}_{k}k^{*}t^{*}\bar{r}_{b}\bar{k}\right)
\\
-\bar{D}_{z}\partial_{y}\bar{z}_{p}  =\,  & 0
\end{aligned}
 \, \right\}\, 
 \text{ on  } \bar{\Gamma}_{c},
\end{IEEEeqnarray}

\begin{IEEEeqnarray}{rCl}
\left.
 \begin{aligned}
\bar{D}_{b}\partial_{y}\bar{b}  = \; & \frac{1}{1 + (\bar{\theta}_{b}\bar{z})^{h_{b}}}
\\
\bar{D}_{k}\partial_{y}\bar{k} =\;  & 0
\\
\frac{d\bar{z}}{d \bar{t}}  = \; & \bar{\delta}_{z}\bar{z}_{p}\bar{k} - \bar{\rho}_{z}\frac{\bar{z}}{1+(\bar{\theta}_{z}\bar{k})^{h_{z}}}
\\
\bar{D}_{z}\partial_{y}\bar{z}_{p}  =\;  & -\bar{\delta}_{z}\bar{z}_{p}\bar{k} + \bar{\rho}_{z}\frac{\bar{z}}{1+(\bar{\theta}_{z}\bar{k})^{h_{z}}}
\end{aligned}
 \, \right\}
 \text{on $\bar{\Gamma}_{n}$,}
\end{IEEEeqnarray}
where
\begin{center}
\begin{tabular}{llll}
$\bar{D}_{b} = \dfrac{D_{b}}{\mu_{b}l_{c}^{2}}$, & $\bar{D}_{k} = \dfrac{D_{k}}{\mu_{b}l_{c}^{2}}$, & $\bar{D}_{z} = \dfrac{D_{z}}{\mu_{b}l_{c}^{2}}$, & $\bar{\beta}_{k} = \dfrac{\beta_{k}R_{tot}K_{tot}}{\alpha_{b}}$,\\
$\bar{\beta}_{b} = \dfrac{\beta_{b}R_{tot}}{\mu_{b}}$, & $\epsilon_{1} = \dfrac{\alpha_{b}}{K_{tot}\mu_{b}}$, & $\epsilon_{2} = \dfrac{\alpha_{b}}{R_{tot}\mu_{b}}$, & $\bar{\theta}_{b} = \theta_{b}Z_{tot}$,\\
$\bar{\delta}_{z} = \dfrac{\delta_{z}K_{tot}}{\mu_{b}}$, & $\bar{\rho}_{z} = \dfrac{\rho_{z}}{\mu_{b}}$, & $\bar{\theta}_{z} = \theta_{z}K_{tot}$. &
\end{tabular}
\end{center}

\section{Rigorous Reduction of the Model for GA Signalling Pathway from \cite{Middleton_A_2012}}\label{app:GAred}
Here we present the rigourous reduction of the model for the GA signalling pathway from \cite{Middleton_A_2012} to the system of ODEs \eqref{eq:GAode1} and \eqref{eq:GAode2} for GA, GID1, DELLA, GA.GID1\textsuperscript{o}, GA.GID1\textsuperscript{c}, DELLA.GA.GID1\textsuperscript{c}, GID1\textsubscript{m} and DELLA\textsubscript{m}. To do this we assume that the dynamics of the actual biosynthesis may be captured from the DELLA dynamics only. We also assume only one functional form of the DELLA.GID1.GA complex. To save space we denote the independent variables by $x_{i}$ for $i=1,...,21$, in the same order as the full statement of the model in \cite{Middleton_A_2012}. \\
\\
To describe GA signal transduction:
\begin{IEEEeqnarray}{rCl}
\IEEEyesnumber\IEEEyessubnumber*
\frac{dx_{1}}{dt} & = & -l_{a}x_{1}x_{11} + l_{d}x_{2} + \delta_{gid1}x_{20} - \mu_{gid1}x_{1},\label{eq:x1full}
\\
\frac{dx_{2}}{dt} & = & l_{a}x_{1}x_{11} - l_{d}x_{2} + px_{3} - qx_{2},\label{eq:x2full}
\\
\frac{dx_{3}}{dt} & = & -px_{3} + qx_{2} - (u_{a1}+u_{a2})x_{6}x_{3} + (u_{d1}+u_{m})x_{4} + u_{d2}x_{5},\label{eq:x3full}
\\
\frac{dx_{4}}{dt} & = & u_{a1}x_{6}x_{3} - (u_{d1}+u_{m})x_{4},\label{eq:x4full}
\\
\frac{dx_{5}}{dt} & = & u_{a2}x_{6}x_{3} - u_{d2}x_{5},\label{eq:x5full}
\\
\frac{dx_{6}}{dt} & = & -(u_{a1}+u_{a2})x_{6}x_{3} + u_{d1}x_{4} + u_{d2}x_{5} + \delta_{della}x_{21}.\label{eq:x6full}
\end{IEEEeqnarray}
To describe GA biosynthesis:
\begin{IEEEeqnarray*}{rCl}
\IEEEyesnumber\IEEEyessubnumber*
\frac{dx_{7}}{dt} & = & \omega_{ga12} - k_{a12}x_{7}x_{16} + k_{d12}x_{12} - \mu_{ga}x_{7},\label{eq:x7full}
\\
\frac{dx_{8}}{dt} & = & -k_{a15}x_{8}x_{16} + k_{d15}x_{13} + k_{m12}x_{12} - \mu_{ga}x_{8},\label{eq:x8full}
\\
\frac{dx_{9}}{dt} & = & -k_{a24}x_{9}x_{16} + k_{d24}x_{14} + k_{m15}x_{13} - \mu_{g1}x_{9},\label{eq:x9full}
\\
\frac{dx_{10}}{dt} & = & -k_{a9}x_{10}x_{17} + k_{d9}x_{15} + k_{m24}x_{14} - \mu_{ga}x_{10},\label{eq:x10full}
\\
\frac{dx_{11}}{dt} & = & P_{mem}\frac{S_{root}}{R_{root}}(A_{1}\omega_{ga4}-B_{1}x_{11}) + k_{m9}x_{15} - l_{a}x_{1}x_{11} + l_{d}x_{2}\nonumber\\
&& -\> \mu_{ga}x_{11}.\label{eq:x11full}
\end{IEEEeqnarray*}
To describe complexes of GAs and enzymes:
\begin{IEEEeqnarray}{rCl}
\IEEEyesnumber\IEEEyessubnumber*
\frac{dx_{12}}{dt} & = & k_{a12}x_{7}x_{16} - (k_{d12}+k_{m12})x_{12},\label{eq:x12full}
\\
\frac{dx_{13}}{dt} & = & k_{a15}x_{8}x_{16} - (k_{d15}+k_{m15})x_{13},\label{eq:x13full}
\\
\frac{dx_{14}}{dt} & = & k_{a24}x_{9}x_{16} - (k_{d24}+k_{m24})x_{14},\label{eq:x14full}
\\
\frac{dx_{15}}{dt} & = & k_{a9}x_{10}x_{17} - (k_{d9}+k_{m9})x_{15}.\label{eq:x15full}
\end{IEEEeqnarray}
To describe the enzymes:
\begin{IEEEeqnarray}{rCl}
\IEEEyesnumber\IEEEyessubnumber*
\frac{dx_{16}}{dt} & = & -k_{a12}x_{7}x_{16} - k_{a15}x_{8}x_{16} - k_{a24}x_{9}x_{16} + (k_{d12}+k_{m12})x_{12}\nonumber\\
&& +\> (k_{d15}+k_{m15})x_{13} + (k_{d24}+k_{m24})x_{14} + \delta_{ga20ox}x_{18}\nonumber\\
&& -\> \mu_{ga20ox}x_{16},\label{eq:x16full}
\\
\frac{dx_{17}}{dt} & = & -k_{a9}x_{10}x_{17} + (k_{d9}+k_{m9})x_{15} + \delta_{ga3ox}x_{19} - \mu_{ga3ox}x_{17}.\label{eq:x17full}
\end{IEEEeqnarray}
To describe the mRNAs:
\begin{IEEEeqnarray}{rCl}
\IEEEyesnumber\IEEEyessubnumber*
\frac{dx_{18}}{dt} & = & \phi_{ga20ox}\left(\frac{x_{6}}{x_{6}+\theta_{ga20ox}} - x_{18}\right),\label{eq:x18full}
\\
\frac{dx_{19}}{dt} & = & \phi_{ga3ox}\left(\frac{x_{6}}{x_{6}+\theta_{ga3ox}} - x_{19}\right),\label{eq:x19full}
\\
\frac{dx_{20}}{dt} & = & \phi_{gid1}\left(\frac{x_{6}}{x_{6}+\theta_{gid1}} - x_{20}\right),\label{eq:x20full}
\\
\frac{dx_{21}}{dt} & = & \phi_{della}\left(\frac{\theta_{della}}{x_{6}+\theta_{della}} - x_{21}\right).\label{eq:x21full}
\end{IEEEeqnarray}

\noindent We set Eq (\ref{eq:x5full}) to zero since we assume that only one configuration of the DELLA.GID1\textsuperscript{c}.GA\textsubscript{4} can form. Since the conversion of GA\textsubscript{4} precursors occurs on a relatively fast scale we set the time-derivatives in (\ref{eq:x7full})-(\ref{eq:x10full}) and (\ref{eq:x12full})-(\ref{eq:x19full}) equal to zero. We immediately remove 3 variables by noting from Eqs. (\ref{eq:x5full}), (\ref{eq:x18full}) and (\ref{eq:x19full}) that
\begin{IEEEeqnarray}{rCl}
\IEEEyesnumber\IEEEyessubnumber*
x_{5} & = & \frac{u_{a2}}{u_{d2}}x_{6}x_{3},\label{eq:x5red}
\\
x_{18} & = & \frac{x_{6}}{x_{6}+\theta_{ga20ox}},\label{eq:x18red}
\\
x_{19} & = & \frac{x_{6}}{x_{6}+\theta_{ga3ox}},\label{eq:x19red}
\end{IEEEeqnarray}

\noindent and further from Eqs. (\ref{eq:x7full})-(\ref{eq:x10full}), and (\ref{eq:x12full})-(\ref{eq:x17full}) that

\begin{IEEEeqnarray}{rCl}
\IEEEyesnumber\IEEEyessubnumber*
x_{16} & = & \frac{\delta_{ga20ox}}{\mu_{ga20ox}}\frac{x_{6}}{x_{6}+\theta_{ga20ox}},\label{eq:x16red}
\\
x_{17} & = & \frac{\delta_{ga3ox}}{\mu_{ga3ox}}\frac{x_{6}}{x_{6}+\theta_{ga3ox}},\label{eq:x17red}
\\
k_{m9}x_{15} & = & \omega_{ga12} - \mu_{ga}(x_{7}+x_{8}+x_{9}+x_{10}),\label{eq:x15red}
\end{IEEEeqnarray}
 with
\begin{IEEEeqnarray}{rCl}
\IEEEyesnumber\IEEEyessubnumber*
x_{7} & = & \frac{\omega_{ga12}}{\mu_{ga}+K_{12}},\label{eq:x7red}
\\
x_{8} & = & \frac{\omega_{ga12}K_{12}}{\left(\mu_{ga}+K_{12}\right)\left(\mu_{ga}+K_{15}\right)},\label{eq:x8red}
\\
x_{9} & = & \frac{\omega_{ga12}K_{12}K_{15}}{\left(\mu_{ga}+K_{12}\right)\left(\mu_{ga}+K_{15}\right)\left(\mu_{ga}+K_{24}\right)},\label{eq:x9red}
\\
x_{10} & = & \frac{\omega_{ga12}K_{12}K_{15}K_{24}}{\left(\mu_{ga}+K_{12}\right)\left(\mu_{ga}+K_{15}\right)\left(\mu_{ga}+K_{24}\right)\left(\mu_{ga}+K_{9}\right)},\label{eq:x10red}
\end{IEEEeqnarray}
 where
\begin{IEEEeqnarray*}{rCl}
K_{12} & = &  \frac{\delta_{ga20ox}k_{a12}k_{m12}}{\mu_{ga20ox}(k_{d12}+k_{m12})}\frac{x_{6}}{x_{6}+\theta_{ga20ox}},
\\
K_{15} & = & \frac{\delta_{ga20ox}k_{a15}k_{m15}}{\mu_{ga20ox}(k_{d15}+k_{m15})}\frac{x_{6}}{x_{6}+\theta_{ga20ox}},
\\
K_{24} & = & \frac{\delta_{ga20ox}k_{a24}k_{m24}}{\mu_{ga20ox}(k_{d24}+k_{m24})}\frac{x_{6}}{x_{6}+\theta_{ga20ox}},
\\
K_{9} & = & \frac{\delta_{ga3ox}k_{a9}k_{m9}}{\mu_{ga3ox}(k_{d9}+k_{m9})}\frac{x_{6}}{x_{6}+\theta_{ga3ox}}.
\end{IEEEeqnarray*}
 Substituting these into Eq.(\ref{eq:x15red}) 
\begin{IEEEeqnarray*}{rCl}
k_{m9}x_{15} & = & \omega_{ga12}
\left\langle
 \begin{IEEEeqnarraybox}[
   \IEEEeqnarraystrutmode
  ][c]{rCl}
	1-\mu_{ga}
	\left[
	 \begin{IEEEeqnarraybox}[
	  \IEEEeqnarraystrutmode
	 ][c]{rCl}
	   \frac{
	    \left(
	     \begin{IEEEeqnarraybox}[
	      \IEEEeqnarraystrutmode
	     ][c]{rCl}
	      \left(\mu_{ga}+K_{15}\right)\left(\mu_{ga}+K_{24}\right)\left(\mu_{ga}+K_{9}\right)\\
	      +\> K_{12}\left(\mu_{ga}+K_{24}\right)\left(\mu_{ga}+K_{9}\right)\\
	      +\> K_{12}K_{15}\left(\mu_{ga}+K_{9}\right) + K_{12}K_{15}K_{24}
	     \end{IEEEeqnarraybox}
	      \,\right)
	     }{\left(\mu_{ga}+K_{12}\right)\left(\mu_{ga}+K_{15}\right)\left(\mu_{ga}+K_{24}\right)\left(\mu_{ga}+K_{9}\right)}
	   \end{IEEEeqnarraybox}
	    \, \right]
\end{IEEEeqnarraybox}
 \, \right\rangle
\\
& = & \omega_{ga12}\frac{K_{12}K_{15}K_{24}K_{9}}{\left(\mu_{ga}+K_{12}\right)\left(\mu_{ga}+K_{15}\right)\left(\mu_{ga}+K_{24}\right)\left(\mu_{ga}+K_{9}\right)},
\end{IEEEeqnarray*}
which can be multiplied through by the Hill function denominators, and upon simplifying to fourth order terms in $x_{6}$ becomes
\begin{equation*}
k_{m9}x_{15} = \alpha_{g}\frac{x_{6}^{4}}{x_{6}^{4}+\theta_{ga}},
\end{equation*}
where
\begin{IEEEeqnarray*}{rCl}
\alpha_{g} & = & \frac{\omega_{ga12}\delta_{ga3ox}(\delta_{ga20ox})^{3}k_{a12}k_{m12}k_{a15}k_{m15}k_{a24}k_{m24}k_{a9}k_{m9}}{
 \left(
  \begin{IEEEeqnarraybox}[
   \IEEEeqnarraystrutmode
  ][c]{rCl}
   (\mu_{ga}\mu_{ga20ox}(k_{d12}+k_{m12})+\delta_{ga20ox}k_{a12}k_{m12})\times\\
   \times(\mu_{ga}\mu_{ga20ox}(k_{d15}+k_{m15})+\delta_{ga20ox}k_{a15}k_{m15})\times\\
   \times(\mu_{ga}\mu_{ga20ox}(k_{d24}+k_{m24})+\delta_{ga20ox}k_{a24}k_{m24})\times\\
   \times(\mu_{ga}\mu_{ga3ox}(k_{d9}+k_{m9})+\delta_{ga3ox}k_{a9}k_{m9})
  \end{IEEEeqnarraybox}
  \, \right)
  }
  \end{IEEEeqnarray*}
  and 
  \begin{IEEEeqnarray*}{rCl}
\theta_{ga} & = & \frac{
 \left(
  \begin{IEEEeqnarraybox}[
   \IEEEeqnarraystrutmode
  ][c]{rCl}
  \mu_{ga}^{4}\theta_{ga3ox}(\theta_{ga20ox})^{3}\mu_{ga3ox}(\mu_{ga20ox})^{3}(k_{d12}+k_{m12})*\\
  *(k_{d15}+k_{m15})(k_{d24}+k_{m24})(k_{d9}+k_{m9})
  \end{IEEEeqnarraybox}
   \, \right)
}{
 \left(
  \begin{IEEEeqnarraybox}[
   \IEEEeqnarraystrutmode
  ][c]{rCl}
   (\mu_{ga}\mu_{ga20ox}(k_{d12}+k_{m12})+\delta_{ga20ox}k_{a12}k_{m12})\times\\
   \times(\mu_{ga}\mu_{ga20ox}(k_{d15}+k_{m15})+\delta_{ga20ox}k_{a15}k_{m15})\times\\
   \times(\mu_{ga}\mu_{ga20ox}(k_{d24}+k_{m24})+\delta_{ga20ox}k_{a24}k_{m24})\times\\
   \times(\mu_{ga}\mu_{ga3ox}(k_{d9}+k_{m9})+\delta_{ga3ox}k_{a9}k_{m9})
  \end{IEEEeqnarraybox}
  \, \right)
  }.
\end{IEEEeqnarray*}
Thus, we obtain a reduced model that reads
\begin{IEEEeqnarray}{rCl}
\IEEEyesnumber\IEEEyessubnumber*
\frac{dx_{1}}{dt} & = & -l_{a}x_{1}x_{11} + l_{d}x_{2} + \delta_{gid1}x_{20} - \mu_{gid1}x_{1},\label{eq:x1red}
\\
\frac{dx_{2}}{dt} & = & l_{a}x_{1}x_{11} - l_{d}x_{2} + px_{3} - qx_{2},\label{eq:x2red}
\\
\frac{dx_{3}}{dt} & = & -px_{3} + qx_{2} - (u_{a1}+u_{a2})x_{6}x_{3} + (u_{d1}+u_{m})x_{4},\label{eq:x3red}
\\
\frac{dx_{4}}{dt} & = & u_{a1}x_{6}x_{3} - (u_{d1}+u_{m})x_{4},\label{eq:x4red}
\\
\frac{dx_{6}}{dt} & = & -u_{a1}x_{6}x_{3} + u_{d1}x_{4} + \delta_{della}x_{21},\label{eq:x6red}
\\
\frac{dx_{11}}{dt} & = & P_{mem}\frac{S_{root}}{R_{root}}(A_{1}\omega_{ga4}-B_{1}x_{11}) + \alpha_{g}\frac{x_{6}^{4}}{x_{6}^{4}+\theta_{ga}} - l_{a}x_{1}x_{11}\nonumber\\
 && +\> l_{d}x_{2} - \mu_{ga}x_{11},\label{eq:x11red}
\\
\frac{dx_{20}}{dt} & = & \phi_{gid1}\left(\frac{x_{6}}{x_{6}+\theta_{gid1}} - x_{20}\right),\label{eq:x20red}
\\
\frac{dx_{21}}{dt} & = & \phi_{della}\left(\frac{\theta_{della}}{x_{6}+\theta_{della}} - x_{21}\right).\label{eq:x21red}
\end{IEEEeqnarray}

\section{Non-dimensionalisation of the model including spatial heterogeneity in the crosstalk signalling}\label{app:Crosstalk_Pde}
Again applying scaling and non-dimensionalising via $t = t^{*}\bar{t}$ etc, system (\ref{eq:Crosstalk_Pde1})-(\ref{eq:Crosstalk_Pde5}) transforms into
\begin{IEEEeqnarray*}{rCl}
\left.
 \begin{IEEEeqnarraybox}[
   \IEEEeqnarraystrutmode
  ][c]{rCl}
  \partial_{\bar{t}}\bar{b} & = & \frac{D_{b}t^{*}}{(x^{*})^{2}} \partial_{y}^2\bar{b} - \mu_{b}t{*}\bar{b}\\
  \partial_{\bar{t}}\bar{k} & = & \frac{D_{k}t^{*}}{(x^{*})^{2}} \partial_{y}^2\bar{k}\\
  \partial_{\bar{t}}\bar{z}_{p} & = & \frac{D_{z}t^{*}}{(x^{*})^{2}} \partial_{y}^2\bar{z}_{p}\\
  \partial_{\bar{t}}\bar{g} & = & \frac{D_{g}t^{*}}{(x^{*})^{2}} \partial_{y}^2\bar{g} - \mu_{g}t^{*}\bar{g}
 \end{IEEEeqnarraybox}
\, \right\}
\text{in $\bar{\Omega}_{c}$,}
\end{IEEEeqnarray*}
and receptor based interactions of BR, BKI1 and BRI1 and influx of exogenous GA occur on the plasma membrane
\begin{IEEEeqnarray*}{rCl}
\left.
 \begin{IEEEeqnarraybox}[
   \IEEEeqnarraystrutmode
  ][c]{rCl}
  -\frac{D_{b}t^{*}}{(x^{*})^{2}}\partial_{y}\bar{b} & = & \frac{\beta_{k}t^{*}r_{b}^{*}k^{*}}{x^{*}b^{*}}\bar{r}_{b}\bar{k} - \frac{\beta_{b}t^{*}r_{k}^{*}}{x^{*}}\bar{r}_{k}\bar{b}\\
  -\frac{D_{k}t^{*}}{(x^{*})^{2}}\partial_{y}\bar{k} & = & \frac{\beta_{b}t^{*}r_{k}^{*}b^{*}}{x^{*}k^{*}}\bar{r}_{k}\bar{b} - \frac{\beta_{k}t^{*}r_{b}^{*}}{x^{*}}\bar{r}_{b}\bar{k}\\
  -\frac{D_{z}t^{*}}{(x^{*})^{2}}\partial_{y}\bar{z}_{p} & = & 0\\
  -\frac{D_{g}t^{*}}{(x^{*})^{2}}\partial_{y}\bar{g} & = & 0\\
  \frac{d\bar{r}_{k}}{d\bar{t}} & = & \frac{\beta_{k}r_{b}^{*}k^{*}t^{*}}{r_{k}^{*}}\bar{r}_{b}\bar{k} - \beta_{b}b^{*}t^{*}\bar{r}_{k}\bar{b}\\
  \frac{d\bar{r}_{b}}{d\bar{t}} & = & \frac{\beta_{b}r_{k}^{*}b^{*}t^{*}}{r_{b}^{*}}\bar{r}_{k}\bar{b} - \beta_{k}k^{*}t^{*}\bar{r}_{b}\bar{k}
 \end{IEEEeqnarraybox}
\, \right\}
\text{on $\bar{\Gamma}_{c}$.}
\end{IEEEeqnarray*}
 We assume that production of BR, change in phosphorylation status of BZR and interactions between BZR  and DELLA occur in the nucleus
\begin{IEEEeqnarray*}{rCl}
\left.
 \begin{IEEEeqnarraybox}[
   \IEEEeqnarraystrutmode
  ][c]{rCl}
  &\frac{D_{b}t^{*}}{(x^{*})^{2}} &  \partial_{y}\bar{b} = \frac{l_{c}\alpha_{b}t^{*}}{1+\left(\frac{\theta_{b}z^{*}\bar{z}}{l_{c}}\right)^{h_{b}}}\\
  & \frac{D_{k}t^{*}}{(x^{*})^{2}} & \partial_{y}\bar{k}  = 0\\
  &\frac{d\bar{z}}{d\bar{t}} & = \frac{l_{c}\delta_{z}t^{*}z_{p}^{*}k^{*}}{z^{*}}\bar{z}_{p}\bar{k} - \rho_{z}t^{*}\frac{\bar{z}}{1+(\theta_{z}k^{*}\bar{k})^{h_{z}}} - \frac{\beta_{z}t^{*}}{l_{c}}\bar{z}\bar{d}_{l} + \frac{\gamma_{z}z_{d}^{*}t^{*}}{z^{*}}\bar{z}_{d}\\
  & \frac{D_{z}t^{*}}{(x^{*})^{2}} &  \partial_{y}\bar{z}_{p}  = -\frac{l_{c}\delta_{z}t^{*}k^{*}}{x^{*}}\bar{z}_{p}\bar{k} + \frac{\rho_{z}z^{*}t^{*}}{x^{*}z_{p}^{*}}\frac{\bar{z}}{1+(\theta_{z}k^{*}\bar{k})^{h_{z}}}
  \end{IEEEeqnarraybox}
\, \right\}
\text{on $\bar{\Gamma}_{n}$.}
\end{IEEEeqnarray*}
We assume all constituent processes of the GA signalling pathway apart from degradation and influx of GA also occur in the nucleus
\begin{IEEEeqnarray*}{rCl}
\left.
 \begin{aligned}
  \frac{d\bar{r}}{d\bar{t}}  =\,  & -\beta_{g}t^{*}g^{*}\bar{r}\bar{g} + \frac{\gamma_{g}r_{g}^{o*}t^{*}}{r^{*}}\bar{r}_{g}^{o} + \frac{l_{c}\alpha_{r}t^{*}}{r^{*}}\bar{r}_{m} - \mu_{r}t^{*}\bar{r}\\
  \frac{d\bar{r}_{g}^{o}}{d\bar{t}}  =\,  & \frac{\beta_{g}r^{*}g^{*}}{r_{g}^{o*}}\bar{r}\bar{g} - (\gamma_{g}+\lambda^{c})t^{*}\bar{r}_{g}^{o} + \frac{\lambda^{o}r_{g}^{c*}t^{*}}{r_{g}^{o*}}\bar{r}_{g}^{c}\\
  \frac{d\bar{r}_{g}^{c}}{d\bar{t}} = \, & \frac{\lambda^{c}r_{g}^{o*}t^{*}}{r_{g}^{c*}}\bar{r}_{g}^{o} - \lambda^{o}t^{*}\bar{r}_{g}^{c} - \frac{\beta_{d}d_{l}^{*}t^{*}}{l_{c}}\bar{d}_{l}\bar{r}_{g}^{c} + \frac{(\gamma_{d}+\mu_{d})r_{d}^{*}t^{*}}{r_{g}^{c*}}\bar{r}_{d}\\
  \frac{d\bar{r}_{d}}{d\bar{t}}  = \, & \frac{\beta_{d}d_{l}^{*}r_{g}^{c*}t^{*}}{l_{c}r_{d}^{*}}\bar{d}_{l}\bar{r}_{g}^{c} - (\gamma_{d}+\mu_{d})t^{*}\bar{r}_{d}\\
  \frac{d\bar{d}_{l}}{d\bar{t}}  = & -\frac{\beta_{d}r_{g}^{c*}t^{*}}{l_{c}}\bar{d}_{l}\bar{r}_{g}^{c} + \frac{\gamma_{d}r_{d}^{*}t^{*}}{d_{l}^{*}}\bar{r}_{d} + \frac{l_{c}\alpha_{d}t^{*}}{d_{l}^{*}}\bar{d}_{m} - \beta_{z}z^{*}t^{*}\bar{z}\bar{d}_{l} + \frac{\gamma_{z}z_{d}^{*}}{d_{l}^{*}}\bar{z}_{d}\\
   \frac{D_{g}t^{*}}{\left(x^{*}\right)^{2}}\partial_{y}\bar{g}  = \, & \frac{l_{c}\alpha_{g}t^{*}}{x^{*}g^{*}}\frac{\left(\bar{d}_{l}+\frac{\phi_{z}z^{*}}{d_{l}^{*}}\bar{z}\right)^{4}}{\frac{l_{c}^{4}\vartheta_{g}}{\left(d_{l}^{*}\right)^{4}} + \left(\bar{d}_{l}+\frac{\phi_{z}z^{*}}{d_{l}^{*}}\bar{z}\right)^{4}} - \frac{\beta_{g}r^{*}t^{*}}{x^{*}}\bar{r}\bar{g} + \frac{\gamma_{g}r_{g}^{o*}}{x^{*}g^{*}}\bar{r}_{g}^{o}\\
  \frac{d\bar{r}_{m}}{d\bar{t}}  =\,  & \phi_{r}t^{*}\left(\frac{\bar{d}_{l}}{\frac{l_{c}\theta_{r}}{d_{l}^{*}}+\bar{d}_{l}}-\bar{r}_{m}\right)\\
  \frac{d\bar{d}_{m}}{d\bar{t}}  = \, & \phi_{d}t^{*}\left(\frac{\frac{\theta_{d}}{d_{l}^{*}}}{\frac{\theta_{d}}{d_{l}^{*}}+\bar{d}_{l}}-\bar{d}_{m}\right)
  \end{aligned}
 \right\} 
\text{ on } \Gamma_{n}.
\end{IEEEeqnarray*}
 Finally we assume cross-talk via formation of the BZR.DELLA complex also occurs in the nucleus
\begin{IEEEeqnarray*}{rCl}
  \frac{d\bar{z}_{d}}{d\bar{t}} & = & \frac{\beta_{z}z^{*}d_{l}^{*}}{z_{d}^{*}}\bar{z}\bar{d}_{l} - \gamma_{z}t^{*}\bar{z}_{d} \quad 
\text{ on }  \Gamma_{n}.
\end{IEEEeqnarray*}
We set
\begin{center}
\begin{tabular}{lllll}
$t^{*} = \dfrac{1}{\gamma_{z}}$, & $x^{*} = l_{c}$, & $b^{*} = \dfrac{\alpha_{b}}{\mu_{b}}$, & $k^{*} = K_{tot}$, & $r_{k}^{*} = l_{c}R_{tot}$,\\
$r_{b}^{*} = l_{c}R_{tot}$, & $z^{*} = l_{c}Z_{tot}$, & $z_{p}^{*} = Z_{tot}$, & $r^{*} = \dfrac{l_{c}\alpha_{r}}{\mu_{r}}$, & $r_{g}^{o*} = \dfrac{l_{c}\alpha_{r}}{\mu_{r}}$,\\
$r_{g}^{c*} = \dfrac{l_{c}\alpha_{r}}{\mu_{r}}$, & $r_{d}^{*} = \dfrac{l_{c}\alpha_{r}}{\mu_{r}}$, & $d_{l}^{*} = \dfrac{l_{c}\alpha_{d}}{\mu_{d}}$, & $g^{*} = \dfrac{\alpha_{g}}{\mu_{g}}$, & $z_{d}^{*} = l_{c}Z_{tot}$,
\end{tabular}
\end{center}
and introduce the new dimensionless parameters
\begin{center}
\begin{tabular}{llll}
$\bar{D}_{b} = \dfrac{D_{b}}{\gamma_{c}l_{c}^{2}}$, & $\bar{D}_{k} = \dfrac{D_{k}}{\gamma_{Z}l_{c}^{2}}$, & $\bar{D}_{z} = \dfrac{D_{z}}{\gamma_{z}l_{c}^{2}}$, & $\bar{D}_{g} = \dfrac{D_{g}}{\gamma_{z}l_{c}^{2}}$,\\
$\bar{\mu}_{b} = \dfrac{\mu_{b}}{\gamma_{z}}$, & $\bar{\mu}_{g} = \dfrac{\mu_{g}}{\gamma_{z}}$, & $\bar{\beta}_{k} = \dfrac{\beta_{k}R_{tot}K_{tot}\mu_{b}}{\gamma_{z}\alpha_{b}}$, & $\bar{\beta}_{b} = \dfrac{\beta_{b}R_{tot}}{\gamma_{z}}$,\\
$\epsilon_{1} = \dfrac{\alpha_{b}}{\mu_{b}K_{tot}}$, & $\epsilon_{2} = \dfrac{\alpha_{b}}{\mu_{b}R_{tot}}$, & $\bar{\theta}_{b} = \theta_{b}Z_{tot}$, & $\bar{\delta}_{z} = \dfrac{\delta_{z}K_{tot}}{\gamma_{z}}$, \\ $\bar{\rho}_{z} = \dfrac{\rho_{z}}{\gamma_{z}}$, & $\bar{\theta}_{z} = \theta_{z}K_{tot}$, & $\bar{\beta}_{z} = \dfrac{\beta_{z}\alpha_{d}}{\gamma_{z}\mu_{d}}$, & $\bar{\beta}_{g} = \dfrac{\beta_{g}\alpha_{g}}{\gamma_{z}\mu_{g}}$, \\ $\bar{\gamma}_{g} = \dfrac{\gamma_{g}}{\gamma_{z}}$, & $\bar{\mu}_{r} = \dfrac{\mu_{r}}{\gamma_{z}}$, & $\bar{\lambda}^{c} = \dfrac{\lambda^{c}}{\gamma_{z}}$, & $\bar{\lambda}^{o} = \dfrac{\lambda^{o}}{\gamma_{z}}$, \\ $\bar{\beta}_{d} = \dfrac{\beta_{d}\alpha_{d}}{\gamma_{z}\mu_{d}}$, & $\bar{\gamma}_{d} = \dfrac{\gamma_{d}}{\gamma_{z}}$, & $\bar{\mu}_{d} = \dfrac{\mu_{d}}{\gamma_{z}}$, & $\epsilon_{3} = \dfrac{\alpha_{r}\mu{d}}{\mu_{r}\alpha_{d}}$, \\ $\epsilon_{4} = \dfrac{Z_{tot}\mu_{d}}{\alpha_{d}}$, & $\bar{\vartheta}_{g} = \dfrac{\vartheta_{g}\mu_{d}^{4}}{\alpha_{d}^{4}}$, & $\epsilon_{5} = \dfrac{\alpha_{r}\mu_{g}}{\mu_{r}\alpha_{g}}$, & $\bar{\phi_{r}} = \dfrac{\phi_{r}}{\gamma_{z}}$, \\
$\bar{\vartheta}_{r} = \dfrac{\vartheta_{r}\mu_{d}}{\alpha_{d}}$, & $\bar{\phi}_{d} = \dfrac{\phi_{d}}{\gamma_{z}}$, & $\bar{\vartheta}_{d} = \dfrac{\vartheta_{d}\mu_{d}}{\alpha_{d}}$. &  
\end{tabular}
\end{center}
Neglecting $\bar{\phantom{bar}}$s we write out the non-dimensionalised PDE-ODE system, with diffusion and hormone degradation occurring in the cytoplasm
\begin{IEEEeqnarray}{rCl}
\left.
 \begin{IEEEeqnarraybox}[
   \IEEEeqnarraystrutmode
  ][c]{rCl}
  \frac{\partial b}{\partial t} & = & D_{b} \partial_{y}^2 b - \mu_{b}b\\
  \frac{\partial k}{\partial t} & = & D_{k} \partial_{y}^2 k\\
  \frac{\partial z_{p}}{\partial t} & = & D_{z} \partial_{y}^2 z_{p}\\
  \frac{\partial g}{\partial t} & = & D_{g} \partial_{y}^2 g - \mu_{g}g
 \end{IEEEeqnarraybox}
\, \right\}
\text{in $\Omega_{c}$.}
\end{IEEEeqnarray}
Receptor binding and dissociation occurring on the plasma membrane
\begin{IEEEeqnarray}{rCl}
\left.
 \begin{IEEEeqnarraybox}[
   \IEEEeqnarraystrutmode
  ][c]{rCl}
  -D_{b}\partial_{y}b & = & \beta_{k}r_{b}k - \beta_{b}r_{k}b\\
  -D_{k}\partial_{y}k & = & \epsilon_{1}\left(\beta_{b}r_{k}b - \beta_{k}r_{b}k\right)\\
  -D_{z}\partial_{y}z_{p} & = & 0\\
  -D_{g}\partial_{y}g  & = & 0\\
  \frac{dr_{k}}{dt} & = & \epsilon_{2}\left(\beta_{k}r_{b}k - \beta_{b}r_{k}b\right)\\
  \frac{dr_{b}}{dt} & = & \epsilon_{2}\left(\beta_{b}r_{k}b - \beta_{k}r_{b}k\right)
 \end{IEEEeqnarraybox}
\, \right\}
\text{on $\Gamma_{c}$.}
\end{IEEEeqnarray}
The BR signalling processes occurring in the nucleus
\begin{IEEEeqnarray}{rCl}
\left.
 \begin{IEEEeqnarraybox}[
   \IEEEeqnarraystrutmode
  ][c]{rCl}
   D_{b}\partial_{y}b & = & \frac{\mu_{b}}{1+(\theta_{b}z)^{h_{b}}}\\
  D_{k}\partial_{y}k & = & 0\\
  \frac{dz}{dt} & = & \delta_{z}z_{p}k - \rho_{z}\frac{z}{1+(\theta_{z}k)^{h_{z}}} - \beta_{z}zd_{l} + z_{d}\\
  D_{z}\partial_{y}z_{p} & = & -\delta_{z}z_{p}k + \rho_{z}\frac{z}{1+(\theta_{z}k)^{h_{z}}}
   \end{IEEEeqnarraybox}
\, \right\}
\text{on $\Gamma_{n}$.}
\end{IEEEeqnarray}
The GA signalling processes occurring in the nucleus
\begin{IEEEeqnarray}{rCl}
\left.
 \begin{IEEEeqnarraybox}[
   \IEEEeqnarraystrutmode
  ][c]{rCl}
  \frac{dr}{dt} & = & -\beta_{g}rg + \gamma_{g}r_{g}^{o} + \mu_{r}(r_{m} - r)\\
  \frac{dr_{g}^{o}}{dt} & = & \beta_{g}rg - (\gamma_{g}+\lambda^{c})r_{g}^{o} + \lambda^{o}r_{g}^{c}\\
  \frac{dr_{g}^{c}}{dt} & = & \lambda^{c}r_{g}^{o} - \lambda^{o}r_{g}^{c} - \beta_{d}d_{l}r_{g}^{c} + (\gamma_{d}+\mu_{d})r_{d}\\
  \frac{dr_{d}}{dt} & = & \beta_{d}d_{l}r_{g}^{c} - (\gamma_{d}+\mu_{d})r_{d}\\
  \frac{dd_{l}}{dt} & = & \epsilon_{3}\left(-\beta_{d}d_{l}r_{g}^{c} + \gamma_{d}r_{d}\right) + \mu_{d}d_{m} - \epsilon_{4}\left(\beta_{z}zd_{l} + z_{d}\right)\\
  D_{g}\partial_{y}g & = & \mu_{g}\frac{\left(d_{l} + \epsilon_{4}\phi_{z}z\right)^{4}}{\vartheta_{g} + \left(d_{l} + \epsilon_{4}\phi_{z}z\right)^{4}} - \epsilon_{5}\left(\beta_{g}rg + \gamma_{g}r_{g}^{o}\right)\\
  \frac{dr_{m}}{dt} & = & \phi_{r}\left(\frac{d_{l}}{\vartheta_{d} + d_{l}} - r_{m}\right)\\
  \frac{dd_{l}}{dt} & = & \phi_{d}\left(\frac{\vartheta_{d}}{\vartheta_{d} + d_{l}} - d_{m}\right)
 \end{IEEEeqnarraybox}
\, \right\}
\text{on $\Gamma_{n}$.}
\end{IEEEeqnarray}
The crosstalk interactions occurring in the nucleus
\begin{IEEEeqnarray}{rCl}
  \frac{dz_{d}}{dt} & = & \beta_{z}zd_{l} - z_{d}\qquad
\text{on $\Gamma_{n}$.}
\end{IEEEeqnarray}

\end{document}